\documentclass[11pt,oneside]{article}
\usepackage[a4paper,margin=1in]{geometry}
\usepackage{amsmath,amssymb,bm,mathtools,mathrsfs,amsthm}
\usepackage{microtype}
\usepackage{mathrsfs}
\usepackage{booktabs}
\usepackage[hidelinks]{hyperref}
\usepackage[T1]{fontenc}
\usepackage{lmodern}
\usepackage{todonotes}

\numberwithin{equation}{section}
\newtheorem{theorem}{Theorem}[section]
\newtheorem{proposition}[theorem]{Proposition}
\theoremstyle{definition}

\usepackage{tikz-cd}

\title{\textbf{Fock-Space Formulation of the Boltzmann Collision Operator for Maxwell Molecules}}

\author{Ilya Karlin\thanks{ikarlin@ethz.ch}\\
Department of Mechanical and Process Engineering\\
ETH Zurich, CH-8092 Zurich, Switzerland}

\date{\today}

\begin{document}

\maketitle

\begin{abstract}
We develop a representation-independent symmetric-Fock formulation of the nonlinear Boltzmann collision operator for Maxwell molecules.  Starting from the standard velocity-space operator, we give a short derivation of Bobylev's Fourier identity and isolate its characteristic structure: two linearly related Fourier arguments followed by multiplication.  We then prove, independently of collision geometry and of any coordinate realization, a canonical lift--fusion intertwining theorem: one-particle linear maps lift to the two incoming Fock legs and symmetric-algebra multiplication fuses them into a single outgoing state.  Bargmann coordinates make this intrinsic construction especially transparent, but do not define it; Bobylev's collision maps subsequently select the Maxwell vertex from the already graded lift--fusion class.  The resulting bilinear map obeys the exact grading
\[
\widehat{\mathcal Q}(\mathscr H_p,\mathscr H_q)\subseteq\mathscr H_{p+q},
\]
which gives an intrinsic origin for the triangular structures of Maxwell kinetics.  Its one-vacuum restriction yields the complete Wang Chang--Uhlenbeck spectrum, while moment adaptation makes the nonequilibrium hierarchy strictly triangular and every finite compatible Fock jet an exact autonomous factor with global exponential relaxation.  For the inhomogeneous problem, we extend realization covariance to the nonlinear two-leg vertex and evaluate the moving-frame connection independently through Hermite-function and Fourier realizations.  The two routes give the same abstract Fock propagation and the same first Chapman--Enskog source; its traceless level-$2$ and vector level-$3$ sectors are relaxed by the exact Maxwell rates, giving $\Pr=2/3$.  This provides an explicit computational demonstration that coordinate realizations may be chosen for economy without entering the structural Fock formulation.
\end{abstract}



\section{Introduction}
\label{sec:maxwell_introduction}

The Boltzmann collision operator for Maxwell molecules occupies a special position in kinetic theory.  In velocity variables it is a fully nonlinear binary collision integral, but the absence of a relative-speed factor in the collision kernel gives it an exceptional Fourier representation.  Bobylev showed that the Fourier transform of the gain term factorizes into a product of two characteristic functions evaluated at collision-related wave vectors \cite{Bobylev1975Fourier,Bobylev1988MaxwellReview}.  This Fourier structure underlies much of the exact and spectral theory of the homogeneous Maxwell equation.

Several other coordinate formulations display closely related triangular or successive structures.  Maxwell's moment method and the work of Ikenberry and Truesdell give exceptionally simple production laws for low-order moments \cite{Maxwell1867DynamicalTheory,IkenberryTruesdell1956I,Truesdell1956II,TruesdellMuncaster1980Maxwell}; Ernst obtained successively solvable ordinary and Sonine moment equations \cite{Ernst1979MaxwellMoments}; and Glangetas, Li, and Xu derived an explicit triangular nonlinear system in Laguerre--spherical coordinates \cite{GlangetasLiXu2016Triangular}.  The recurrence of essentially the same degree hierarchy in different coordinate systems raises the representation-theoretic question pursued here: is triangularity created by a convenient basis, or does it belong to the collision map itself?

Our route begins with the standard nonlinear Maxwell collision operator rather than with Fock machinery.  Section~\ref{sec:maxwell_velocity_fourier} recalls the velocity-space model and derives Bobylev's identity directly.  The derivation exposes the elementary coordinate operation that is characteristic of Maxwell molecules: two functions are evaluated at two linear images of a common Fourier variable and the results are multiplied.  Only then do we introduce the minimal symmetric-Fock framework and ask for the abstract operation represented by this substitution--product rule.

The central structural result is the canonical lift--fusion theorem of Sec.~\ref{subsec:maxwell_lift_fusion_theorem}.  For arbitrary one-particle linear maps $A$ and $B$, their bosonic lifts act independently on the two incoming Fock legs and multiplication in the symmetric tensor algebra combines the results into one outgoing state.  We call this canonical two-to-one multiplication \emph{fusion}.  The intrinsic map
\begin{equation}
\mathcal V_{A,B}=\mathsf m\circ[\Gamma(A)\otimes\Gamma(B)]
\end{equation}
is defined without choosing coordinates; in Bargmann coordinates its image is simply
\begin{equation}
(F,G)\longmapsto F(A^T\bm z)G(B^T\bm z).
\end{equation}
Injectivity of the Bargmann realization gives uniqueness.  Thus Bargmann space is computationally privileged here, as in the Hermite-function realization of the Lebowitz--Frisch--Helfand model \cite{Karlin2026LFHFock}, but neither realization defines the underlying Fock structure.  Bobylev's geometry enters only afterward, to select the particular $A$ and $B$ appropriate to Maxwell collisions and to supply the angular average and loss term.

Because the canonical lifts preserve symmetric-tensor degree while fusion adds degrees, the resulting Maxwell vertex satisfies
\begin{equation}
\boxed{
\widehat{\mathcal Q}(\mathscr H_p,\mathscr H_q)
\subseteq
\mathscr H_{p+q}.
}
\end{equation}
Here $\mathscr H_r=\operatorname{Sym}^r(V)$ is the intrinsic Fock level of symmetric-tensor degree $r$; only after a Hermite realization is chosen does this degree appear as total Hermite order $r$.  Fourier, moment/Sonine, Hermite, and Laguerre--spherical triangular rules are therefore coordinate manifestations of a single representation-independent grading.

Two dynamical consequences are developed in detail.  Fixing one incoming leg to the Maxwellian vacuum gives a linear restriction that preserves each Fock level; in three dimensions its rotational decomposition yields the complete Wang Chang--Uhlenbeck spectral family \cite{WangChangUhlenbeck1952,WangChangUhlenbeck1970,AltermanFrankowskiPekeris1962,LernerMorimotoPravdaStarovXu2013}.  After moment adaptation removes the collision-invariant low-order directions, the nonlinear hierarchy becomes strictly triangular.  Every finite compatible Fock jet is then an exact autonomous factor of the homogeneous dynamics and relaxes exponentially to its Maxwellian jet without a smallness assumption on the retained amplitudes.

The present paper shares its local realization, moving-frame, and compatibility machinery with the Fock formulation of the Lebowitz--Frisch--Helfand kinetic model \cite{Karlin2026LFHFock}, but the inhomogeneous problem is also used here as a direct test of representation independence.  We first extend covariance from a one-input local operator to the nonlinear two-input Maxwell vertex.  We then construct the complete moving-frame connection by two coordinate routes: the Hermite-function route used in Ref.~\cite{Karlin2026LFHFock} and an economical Fourier route.  Although the intermediate coordinate connectors look different, both reduce to the same ladder-operator connection and hence to the same abstract first Chapman--Enskog source.  The source lies in the traceless quadratic and contracted cubic sectors; the only Maxwell-specific collision input is the pair of exact one-vacuum eigenvalues, whose ratio gives $\Pr=2/3$.

The algebraic statements are exact on the finite-particle Fock core.  A global Hilbert-Fock state additionally requires $f\in L^2(W^{-1}d\bm v)$ and can fail for heavy-tailed distributions, although finite jets remain meaningful when the corresponding moments exist \cite{CannoneKarch2010,ChoMorimotoWangYang2016}.  The dissipative statement needed here is supplied instead by the exact finite-jet relaxation theorem below, so no separate Fock reformulation of the Boltzmann $H$-functional is required.

The paper is organized as follows.  We first formulate the Maxwell collision model in velocity variables and derive its Fourier representation.  We then introduce the minimal Fock framework and the four canonical coordinate realizations needed below, with the velocity-space Hermite material compressed by reference to the Lebowitz--Frisch--Helfand formulation \cite{Karlin2026LFHFock}.  The canonical lift--fusion theorem is proved before any Maxwell-specific Fock construction.  We next return to Bobylev's geometry to construct the collision vertex, establish its intrinsic grading, derive the one-vacuum spectrum and compatible hierarchy, and prove the finite-jet relaxation theorem.  The final sections establish covariance of the nonlinear vertex and treat the inhomogeneous equation by parallel Hermite and Fourier connection routes before deriving the common Chapman--Enskog limit.  A brief outlook only notes that the three-dimensional Carleman representation may provide a natural starting point beyond Maxwell molecules; its Fock analysis is left to future work.

\section{Maxwell molecules in velocity and Fourier space}
\label{sec:maxwell_velocity_fourier}

\subsection{The nonlinear collision operator in velocity variables}
\label{subsec:maxwell_velocity_model}

Let $f(\bm v,t)$ be the one-particle velocity distribution of a spatially homogeneous monatomic gas.  Up to Sec.~\ref{sec:maxwell_inhomogeneous} we consider the spatially homogeneous problem.  We begin with the ordered bilinear form of the Boltzmann collision operator.  This polarization is essential rather than notational: the physical quadratic term $Q(f,f)$ has already identified the two incoming states, whereas the bilinear map $Q(f,g)$ exposes two independent collision legs and hence the tensor-product domain on which they can later be lifted separately.  For two velocity distributions $f$ and $g$,
\begin{equation}
Q(f,g)(\bm v)
=
\int_{\mathbb R^d}\!d\bm v_*
\int_{S^{d-1}}\!d\bm\sigma\,
B(|\bm q|,{\bm e}_q\!\cdot\!\bm\sigma)
\Big[
 f(\bm v')g(\bm v_*')-f(\bm v)g(\bm v_*)
\Big],
\label{eq:maxwell_velocity_bilinear_Q}
\end{equation}
where
\begin{equation}
\bm q:=\bm v-\bm v_*,
\qquad
{\bm e}_q:=\frac{\bm q}{|\bm q|},
\end{equation}
and the post-collisional velocities in the $\bm\sigma$ parametrization are
\begin{equation}
\bm v'
=\frac{\bm v+\bm v_*}{2}+\frac{|\bm q|}{2}\bm\sigma,
\qquad
\bm v_*'
=\frac{\bm v+\bm v_*}{2}-\frac{|\bm q|}{2}\bm\sigma.
\label{eq:maxwell_postcollisional_sigma}
\end{equation}
The physical collision term is the diagonal value $Q(f,f)$, and the homogeneous Boltzmann equation is
\begin{equation}
\partial_t f=Q(f,f).
\label{eq:maxwell_homogeneous_velocity_equation}
\end{equation}
We use the angular-cutoff setting throughout the finite-level analysis below.

For inverse-power collision models the kernel is often written schematically as
\begin{equation}
B(|\bm q|,\cos\vartheta)=|\bm q|^\gamma b(\cos\vartheta).
\end{equation}
Maxwell molecules are the case $\gamma=0$, so that
\begin{equation}
B(|\bm q|,{\bm e}_q\!\cdot\!\bm\sigma)
=b({\bm e}_q\!\cdot\!\bm\sigma)
\label{eq:maxwell_kernel_velocity}
\end{equation}
is independent of the relative speed.  Kinetically, this is the defining simplification of Maxwell molecules: the collision dependence left after the elastic kinematics is purely angular.  In the Fock formulation it will have a direct algebraic manifestation.  No additional relative-speed factor is present to mix Fock levels, and the Bobylev collision geometry closes inside the exactly graded lift--fusion class constructed below.

\subsection{Derivation of Bobylev's Fourier identity}
\label{subsec:maxwell_bobylev_derivation}

Bobylev introduced the Fourier-transform method for Maxwell molecules in 1975 \cite{Bobylev1975Fourier}; a detailed later exposition and proof can be found, for example, in Desvillettes \cite{Desvillettes2003FourierBoltzmann}.  We include the short derivation because the factorization it reveals is the point of departure for the Fock construction.

Use the Fourier convention
\begin{equation}
\widetilde f(\bm k)
=\mathcal F[f](\bm k)
:=\int_{\mathbb R^d}e^{-i\bm k\cdot\bm v}f(\bm v)\,d\bm v.
\label{eq:maxwell_fourier_transform}
\end{equation}
The standard pre/post-collisional change of variables rewrites the ordered collision integral in weak form as
\begin{equation}
\int_{\mathbb R^d}Q(f,g)(\bm v)\,\varphi(\bm v)\,d\bm v
=
\int f(\bm v)g(\bm v_*)
 b(\widehat{\bm q}\!\cdot\!\bm\sigma)
 \big[\varphi(\bm v')-\varphi(\bm v)\big]
 \,d\bm\sigma\,d\bm v_*\,d\bm v.
\label{eq:maxwell_weak_form_ordered}
\end{equation}
Set $\varphi(\bm v)=e^{-i\bm k\cdot\bm v}$.  From Eq.~\eqref{eq:maxwell_postcollisional_sigma},
\begin{equation}
e^{-i\bm k\cdot\bm v'}
=
e^{-\frac{i}{2}\bm k\cdot(\bm v+\bm v_*)}
 e^{-\frac{i}{2}|\bm q|\,\bm k\cdot\bm\sigma}.
\label{eq:maxwell_gain_exponential_before_exchange}
\end{equation}
The remaining angular integral contains a zonal kernel, i.e. a function only of a scalar product.  Spherical convolution with such a kernel is symmetric under interchange of the two distinguished directions.  Consequently,
\begin{equation}
\int_{S^{d-1}}
 b(\widehat{\bm q}\!\cdot\!\bm\sigma)
 e^{-\frac{i}{2}|\bm q|\,\bm k\cdot\bm\sigma}
 \,d\bm\sigma
=
\int_{S^{d-1}}
 b(\bm e_k\!\cdot\!\bm\sigma)
 e^{-\frac{i}{2}|\bm k|\,\bm q\cdot\bm\sigma}
 \,d\bm\sigma,
\qquad
\bm e_k:=\frac{\bm k}{|\bm k|}.
\label{eq:maxwell_bobylev_angular_exchange}
\end{equation}
This angular interchange is the decisive Maxwell-molecule step.  Because no additional factor $|\bm q|^\gamma$ is present, the exponential on the right can be combined with the center-of-mass factor in Eq.~\eqref{eq:maxwell_gain_exponential_before_exchange}:
\begin{equation}
e^{-\frac{i}{2}\bm k\cdot(\bm v+\bm v_*)}
 e^{-\frac{i}{2}|\bm k|\bm\sigma\cdot(\bm v-\bm v_*)}
=
e^{-i\bm k_+\cdot\bm v}
e^{-i\bm k_-\cdot\bm v_*},
\label{eq:maxwell_bobylev_factorization_step}
\end{equation}
where
\begin{equation}
\bm k_\pm
:=\frac12\bigl(\bm k\pm|\bm k|\bm\sigma\bigr).
\label{eq:maxwell_kpm}
\end{equation}
The gain integral therefore factorizes into two Fourier transforms.  The loss term factorizes immediately after the angular integration, whose value is independent of the direction of $\bm q$.  Thus
\begin{equation}
\boxed{
\mathcal F[Q(f,g)](\bm k)
=
\int_{S^{d-1}}
 b(\bm e_k\!\cdot\!\bm\sigma)
 \Big[
 \widetilde f(\bm k_+)\widetilde g(\bm k_-)
 -\widetilde f(\bm k)\widetilde g(\bm0)
 \Big]d\bm\sigma .
}
\label{eq:maxwell_bobylev}
\end{equation}
We denote the right-hand side by $\mathcal Q_F(\widetilde f,\widetilde g)(\bm k)$, so that
\begin{equation}
\mathcal F[Q(f,g)]
=
\mathcal Q_F(\widetilde f,\widetilde g).
\label{eq:maxwell_bobylev_operator_identity}
\end{equation}
The value at $\bm k=\bm0$ is understood by continuity.  On the diagonal, Eq.~\eqref{eq:maxwell_homogeneous_velocity_equation} becomes
\begin{equation}
\partial_t\widetilde f
=
\mathcal Q_F(\widetilde f,\widetilde f).
\label{eq:maxwell_bobylev_diagonal_evolution}
\end{equation}

The structural content of Eq.~\eqref{eq:maxwell_bobylev} is worth isolating before introducing any Fock notation.  For each collision direction $\bm\sigma$, the gain term performs two linear substitutions of the same Fourier variable and then multiplies the two resulting functions.  The problem addressed below is to identify the representation-independent operation whose coordinate image is precisely this substitution--product rule.

\section{Minimal Fock framework and coordinate realizations}
\label{sec:maxwell_fock}

Bobylev's identity has reduced the Maxwell gain term to a particularly simple coordinate operation.  To identify its abstract representative we need only a small amount of symmetric-Fock machinery.  The bosonic Fock construction and second quantization are classical \cite{Fock1932,Berezin1966,Hall2013}; the more general realization, covariance, and moving-frame framework used here was developed for kinetic equations in the Fock formulation of the Lebowitz--Frisch--Helfand model \cite{Karlin2026LFHFock}.  The present section records only the definitions needed to make the collision construction self-contained.

\subsection{Symmetric Fock space, grading, and intertwining}
\label{subsec:maxwell_fock_framework}

Let $V\simeq\mathbb C^d$ denote the vector space associated with the $d$ velocity components.  Its symmetric Fock space is
\begin{equation}
\mathscr H
=
\bigoplus_{r=0}^{\infty}\mathscr H_r,
\qquad
\mathscr H_r:=\operatorname{Sym}^r(V).
\label{eq:maxwell_symmetric_fock_def}
\end{equation}
The subspace $\mathscr H_r$ will be called the $r$th \emph{Fock level}.  This degree is intrinsic: it is symmetric-tensor rank and is defined before any Hermite, Fourier, or Bargmann coordinates are chosen.  The abstract vacuum, grading, and canonical operators belong to $\mathscr H$ itself; a realization introduced below only assigns coordinate representatives to these already-defined objects.

We use Dirac notation only as compact vector-space notation.  The vacuum $|0\rangle$ spans $\mathscr H_0$, and the canonical creation and annihilation operators obey
\begin{equation}
[\hat a_\alpha,\hat a_\beta^\dagger]=\delta_{\alpha\beta},
\qquad
[\hat a_\alpha,\hat a_\beta]=
[\hat a_\alpha^\dagger,\hat a_\beta^\dagger]=0,
\qquad
\hat a_\alpha|0\rangle=0.
\label{eq:maxwell_CCR_intro}
\end{equation}
For a multi-index $\bm n=(n_1,\ldots,n_d)$, the normalized number states are
\begin{equation}
|\bm n\rangle
=
\prod_{\alpha=1}^{d}
\frac{(\hat a_\alpha^\dagger)^{n_\alpha}}{\sqrt{n_\alpha!}}
|0\rangle,
\qquad
|\bm n|:=\sum_\alpha n_\alpha,
\label{eq:maxwell_number_states_intro}
\end{equation}
and $\mathscr H_r=\operatorname{span}\{|\bm n\rangle:|\bm n|=r\}$.  The number operator $\hat N=\hat a_\alpha^\dagger\hat a_\alpha$ therefore measures the same intrinsic degree.

An admissible coordinate realization is an invertible linear map $\mathcal R:\mathscr H\to\mathscr F$ into a chosen function space.  For any nonvanishing coordinate factor $h$, we write $S_h$ for multiplication, $(S_hF)(x)=h(x)F(x)$.  A one-input operator is represented by intertwining,
\begin{equation}
\mathcal R\widehat{\mathcal O}=\mathcal O\mathcal R,
\qquad
\widehat{\mathcal O}=\mathcal R^{-1}\mathcal O\mathcal R.
\label{eq:maxwell_linear_intertwining_intro}
\end{equation}
For the canonical algebra define
\begin{equation}
\Omega_{\mathcal R}:=\mathcal R|0\rangle,
\qquad
a^{(\mathcal R)}_\alpha:=\mathcal R\hat a_\alpha\mathcal R^{-1},
\qquad
a^{(\mathcal R)\dagger}_\alpha:=\mathcal R\hat a_\alpha^\dagger\mathcal R^{-1}.
\label{eq:maxwell_realization_vacuum_generators}
\end{equation}
Then vacuum consistency is automatic:
\begin{equation}
a^{(\mathcal R)}_\alpha\Omega_{\mathcal R}
=\mathcal R\hat a_\alpha|0\rangle=0,
\label{eq:maxwell_realization_vacuum_consistency}
\end{equation}
and the whole coordinate number basis is generated by
\begin{equation}
\mathcal R|\bm n\rangle
=
\prod_{\alpha=1}^{d}
\frac{\bigl(a^{(\mathcal R)\dagger}_\alpha\bigr)^{n_\alpha}}{\sqrt{n_\alpha!}}\,\Omega_{\mathcal R}.
\label{eq:maxwell_realization_basis_from_vacuum}
\end{equation}
Thus a realization transports not only operator formulas but also the distinguished Fock vacuum from which the canonical basis is generated.

The collision map is bilinear rather than a one-input operator.  The corresponding pull-back is equally elementary.
\begin{proposition}[Bilinear pull-back]
\label{prop:maxwell_bilinear_pullback}
Let $\mathcal Q:\mathscr F\times\mathscr F\to\mathscr F$ be bilinear and let $\mathcal R$ be an admissible realization.  Then
\begin{equation}
|\widehat{\mathcal Q}[\phi,\psi]\rangle
:=\mathcal R^{-1}\mathcal Q(\mathcal R|\phi\rangle,\mathcal R|\psi\rangle)
\label{eq:maxwell_general_bilinear_pullback}
\end{equation}
defines the unique bilinear Fock map satisfying
\begin{equation}
\mathcal R|\widehat{\mathcal Q}[\phi,\psi]\rangle
=\mathcal Q(\mathcal R|\phi\rangle,\mathcal R|\psi\rangle).
\label{eq:maxwell_general_bilinear_intertwining}
\end{equation}
\end{proposition}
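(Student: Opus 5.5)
The statement is elementary, and I will prove it directly from the invertibility of the admissible realization $\mathcal R:\mathscr H\to\mathscr F$. The argument has three steps: well-definedness and bilinearity of the formula \eqref{eq:maxwell_general_bilinear_pullback}, verification of the intertwining identity \eqref{eq:maxwell_general_bilinear_intertwining}, and uniqueness.

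\emph{Well-definedness and bilinearity.} Fix $|\phi\rangle,|\psi\rangle\in\mathscr H$. Then $\mathcal R|\phi\rangle$ and $\mathcal R|\psi\rangle$ lie in $\mathscr F$, so $\mathcal Q(\mathcal R|\phi\rangle,\mathcal R|\psi\rangle)\in\mathscr F$. Since $\mathcal R$ is invertible, its inverse $\mathcal R^{-1}$ is defined on all of $\mathscr F$, so the right-hand side of \eqref{eq:maxwell_general_bilinear_pullback} is a well-defined element of $\mathscr H$. Bilinearity follows by composition. The maps $|\phi\rangle\mapsto\mathcal R|\phi\rangle$ and $\mathcal R^{-1}$ are linear, and $\mathcal Q$ is linear in each slot. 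For example,
\[
\mathcal R^{-1}\mathcal Q\bigl(\mathcal R(\lambda|\phi_1\rangle+\mu|\phi_2\rangle),\mathcal R|\psi\rangle\bigr)
=\lambda\,\mathcal R^{-1}\mathcal Q(\mathcal R|\phi_1\rangle,\mathcal R|\psi\rangle)+\mu\,\mathcal R^{-1}\mathcal Q(\mathcal R|\phi_2\rangle,\mathcal R|\psi\rangle),
\]
and linearity in the second slot is checked the same way.

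\emph{Intertwining.} Applying $\mathcal R$ to \eqref{eq:maxwell_general_bilinear_pullback} and using $\mathcal R\mathcal R^{-1}=\mathrm{id}_{\mathscr F}$ gives \eqref{eq:maxwell_general_bilinear_intertwining} at once.

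\emph{Uniqueness.} Suppose $\widehat{\mathcal Q}'$ is another bilinear Fock map satisfying \eqref{eq:maxwell_general_bilinear_intertwining}. Then for all $|\phi\rangle,|\psi\rangle$,
\[
\mathcal R\bigl(|\widehat{\mathcal Q}[\phi,\psi]\rangle-|\widehat{\mathcal Q}'[\phi,\psi]\rangle\bigr)=0.
\]
Injectivity of $\mathcal R$ forces the two maps to agree on every pair of arguments. Uniqueness therefore uses only injectivity of $\mathcal R$; existence of the formula on all of $\mathscr F$ uses surjectivity, i.e.\ that $\mathcal Q$ takes values in $\mathcal R(\mathscr H)=\mathscr F$.

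\emph{Main point of care.} There is no real obstacle; the only issue is the domain. If in a concrete realization $\mathcal R$ maps only onto a subspace of $\mathscr F$ (for instance the finite-particle core, or $L^2(W^{-1}d\bm v)$ functions), then one must check that $\mathcal Q$ maps $\mathcal R(\mathscr H)\times\mathcal R(\mathscr H)$ into $\mathcal R(\mathscr H)$ before applying $\mathcal R^{-1}$. Under the stated hypothesis that $\mathcal R:\mathscr H\to\mathscr F$ is invertible onto $\mathscr F$, this is automatic. I would add a brief remark noting that the same argument, applied to $\mathcal R^{-1}$, transports a bilinear Fock map back to coordinates. This makes the correspondence between bilinear maps on $\mathscr F$ and bilinear maps on $\mathscr H$ a bijection, in the same way that \eqref{eq:maxwell_linear_intertwining_intro} does for one-input operators.
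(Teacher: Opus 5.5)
Your proof is correct and follows the paper's own argument. You apply $\mathcal R$ to the defining formula to get the intertwining identity, inherit bilinearity from $\mathcal Q$ and the linearity of $\mathcal R^{\pm1}$, and obtain uniqueness from injectivity of $\mathcal R$. Your added caveat, that applying $\mathcal R^{-1}$ requires $\mathcal Q$ to map $\mathcal R(\mathscr H)\times\mathcal R(\mathscr H)$ into $\mathcal R(\mathscr H)$, is a worthwhile clarification: the paper calls an admissible realization an invertible map ``into'' a function space and later works only on the algebraic core.
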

\begin{proof}
Apply $\mathcal R$ to the definition.  Uniqueness follows from invertibility and bilinearity is inherited from $\mathcal Q$.
\end{proof}
The bilinear viewpoint is the tensorial entry point of the construction: by the universal property of the tensor product, a bilinear map on two incoming states is equivalently a linear map on $\mathscr H\otimes\mathscr H$.  We keep the lighter two-input notation here and introduce the explicit tensor-product representative only when the two collision legs are transformed independently.

\subsection{Velocity Hermite realizations: compact summary}
\label{subsec:maxwell_square_root}
Let
\begin{equation}
W_{\bm u,\theta}(\bm v)
=\frac{1}{(2\pi\theta)^{d/2}}
\exp\!\left[-\frac{|\bm v-\bm u|^2}{2\theta}\right],
\qquad
\bm\xi:=\frac{\bm v-\bm u}{\sqrt\theta}.
\label{eq:maxwell_local_W_xi}
\end{equation}
We use the probabilists' Hermite polynomials, adapted to the Gaussian weight $e^{-\xi^2/2}$,
\begin{equation}
\operatorname{He}_n(\xi)
=(-1)^n e^{\xi^2/2}\frac{d^n}{d\xi^n}e^{-\xi^2/2},
\qquad
\operatorname{He}_{\bm n}(\bm\xi)
=\prod_{\alpha=1}^d\operatorname{He}_{n_\alpha}(\xi_\alpha),
\qquad
\bm n!:=\prod_{\alpha=1}^d n_\alpha!.
\label{eq:maxwell_probabilists_hermite}
\end{equation}
They are related to the physicists' convention by
\begin{equation}
\operatorname{He}_n(x)=2^{-n/2}H_n(x/\sqrt2),
\qquad
H_n(x)=(-1)^n e^{x^2}\frac{d^n}{dx^n}e^{-x^2}.
\label{eq:maxwell_hermite_convention_relation}
\end{equation}
The Gaussian orthogonality is
\begin{equation}
\int_{\mathbb R^d}\frac{e^{-|\bm\xi|^2/2}}{(2\pi)^{d/2}}
\operatorname{He}_{\bm n}(\bm\xi)\operatorname{He}_{\bm m}(\bm\xi)\,d^d\xi
=\bm n!\,\delta_{\bm n\bm m}.
\label{eq:maxwell_hermite_orthogonality}
\end{equation}
This convention differs only by oscillator scaling from the physicists' Hermite convention used in the Lebowitz--Frisch--Helfand formulation \cite{Karlin2026LFHFock}; see also Grad's multidimensional Hermite construction \cite{Grad1949Hermite}.  With these definitions, the two velocity-space realizations are
\begin{align}
\mathcal R_H|\bm n\rangle
&=\sqrt W\,
\frac{\operatorname{He}_{\bm n}(\bm\xi)}{\sqrt{\bm n!}},
&\Omega_H&=\sqrt W,
\label{eq:maxwell_H_basis}
\\
\mathcal R_P|\bm n\rangle
&=\frac{\operatorname{He}_{\bm n}(\bm\xi)}{\sqrt{\bm n!}},
&\Omega_P&=1.
\label{eq:maxwell_P_basis}
\end{align}
They are related by $\mathcal R_H=S_{\sqrt W}\mathcal R_P$, and the same physical distribution is reconstructed as
\begin{equation}
f
=\sqrt W\,\mathcal R_H|\chi\rangle
=W\,\mathcal R_P|\chi\rangle.
\label{eq:maxwell_physical_reconstruction_P}
\end{equation}
The factors $W$ and $\sqrt W$ belong to the coordinate realization and reconstruction, not to the abstract Fock state.

The canonical generators are
\begin{align}
a^{(H)}_\alpha
&=\partial_{\xi_\alpha}+\frac{\xi_\alpha}{2},
&
a^{(H)\dagger}_\alpha
&=-\partial_{\xi_\alpha}+\frac{\xi_\alpha}{2},
\label{eq:maxwell_ladder_H}
\\
a^{(P)}_\alpha
&=\partial_{\xi_\alpha},
&
a^{(P)\dagger}_\alpha
&=\xi_\alpha-\partial_{\xi_\alpha}.
\label{eq:maxwell_ladder_P}
\end{align}
The two pairs are coordinate representatives of the same abstract operators.

\subsection{Fourier and Bargmann realizations}
\label{subsec:maxwell_fourier_bargmann}

Among the equivalent realizations, the Fourier--Bargmann pair is singled out below only by the Maxwell collision calculation.  Bargmann coordinates will make the lift--fusion operation shortest to evaluate, but the Fock algebra, grading, and collision vertex are not defined by this choice.

Using the Fourier convention of Eq.~\eqref{eq:maxwell_fourier_transform},
\begin{equation}
\widetilde W_{\bm u,\theta}(\bm k)
=\exp\!\left[-i\bm k\cdot\bm u-\frac{\theta|\bm k|^2}{2}\right].
\label{eq:maxwell_W_tilde}
\end{equation}
The Fourier transform of the Maxwellian-weighted Hermite polynomial is
\begin{equation}
\mathcal F\!\left[
W\frac{\operatorname{He}_{\bm n}(\bm\xi)}{\sqrt{\bm n!}}
\right]
=
\widetilde W(\bm k)
\frac{\bm z^{\bm n}}{\sqrt{\bm n!}},
\qquad
\bm z:=-i\sqrt\theta\,\bm k.
\label{eq:maxwell_fourier_hermite}
\end{equation}
This defines the Fourier realization
\begin{equation}
\mathcal R_F:=\mathcal F S_W \mathcal R_P
=\mathcal F S_{\sqrt W}\mathcal R_H,
\qquad
\mathcal R_F|\chi\rangle=\widetilde f,
\qquad
\Omega_F=\widetilde W.
\label{eq:maxwell_RF}
\end{equation}

Removing the Fourier vacuum gives the Bargmann realization
\begin{equation}
\mathcal R_B:=S_{\widetilde W}^{-1}\mathcal R_F,
\qquad
\mathcal R_B|\bm n\rangle
=\frac{\bm z^{\bm n}}{\sqrt{\bm n!}},
\qquad
\Omega_B=1.
\label{eq:maxwell_RB_basis_vacuum}
\end{equation}
For a general state,
\begin{equation}
\mathcal R_B|\chi\rangle
=: \mathscr B_\chi(\bm z)
=
\frac{\widetilde f(\bm k)}{\widetilde W(\bm k)}
=
\sum_{\bm n}\chi_{\bm n}\frac{\bm z^{\bm n}}{\sqrt{\bm n!}}.
\label{eq:maxwell_bargmann_symbol}
\end{equation}
The Bargmann generators take their simplest form,
\begin{equation}
a^{(B)}_\alpha=\partial_{z_\alpha},
\qquad
a^{(B)\dagger}_\alpha=z_\alpha,
\label{eq:maxwell_ladder_B}
\end{equation}
while conjugation by $\widetilde W$ gives the Fourier pair
\begin{equation}
a^{(F)}_\alpha
=\partial_{z_\alpha}-z_\alpha-\frac{u_\alpha}{\sqrt\theta},
\qquad
a^{(F)\dagger}_\alpha=z_\alpha.
\label{eq:maxwell_ladder_F_ann}
\end{equation}
Indeed $a^{(F)}_\alpha\widetilde W=0$, so Gaussian stripping simultaneously sends the Fourier vacuum $\widetilde W$ to the Bargmann vacuum $1$ and conjugates the annihilator to $\partial_{z_\alpha}$.

\paragraph{Four basic canonical realizations.}
The vacua and canonical generators are collected in one place because the same abstract state will be viewed in different coordinates later.  In each row $a^{(X)}_\alpha\Omega_X=0$ and Eq.~\eqref{eq:maxwell_realization_basis_from_vacuum} generates the coordinate basis.
\begin{center}
\small
\renewcommand{\arraystretch}{1.45}
\begin{tabular}{@{}p{0.17\textwidth} p{0.20\textwidth} p{0.58\textwidth}@{}}
\toprule
realization $\mathcal R_X$ & vacuum $\Omega_X=\mathcal R_X|0\rangle$ & canonical generators $a_\alpha^{(X)}$,  $a_\alpha^{(X)\dagger}$ \\
\midrule
$\mathcal R_H$ & $\sqrt W$ &
$\begin{aligned}
a^{(H)}_\alpha&=\partial_{\xi_\alpha}+\xi_\alpha/2,\
a^{(H)\dagger}_\alpha=-\partial_{\xi_\alpha}+\xi_\alpha/2
\end{aligned}$ \\
$\mathcal R_P$ & $1$ &
$\begin{aligned}
a^{(P)}_\alpha&=\partial_{\xi_\alpha},\
a^{(P)\dagger}_\alpha=\xi_\alpha-\partial_{\xi_\alpha}
\end{aligned}$ \\
$\mathcal R_F$ & $\widetilde W$ &
$\begin{aligned}
a^{(F)}_\alpha&=\partial_{z_\alpha}-z_\alpha-u_\alpha/\sqrt\theta,\
a^{(F)\dagger}_\alpha=z_\alpha
\end{aligned}$ \\
$\mathcal R_B$ & $1$ &
$\begin{aligned}
a^{(B)}_\alpha&=\partial_{z_\alpha},\
a^{(B)\dagger}_\alpha=z_\alpha
\end{aligned}$ \\
\bottomrule
\end{tabular}
\end{center}
The relations among the maps are
\begin{equation}
\mathcal R_H=S_{\sqrt W}\mathcal R_P,
\qquad
\mathcal R_F=\mathcal F S_W \mathcal R_P=\mathcal F S_{\sqrt W}\mathcal R_H,
\qquad
\mathcal R_B=S_{\widetilde W}^{-1}\mathcal R_F.
\label{eq:maxwell_four_realization_chain}
\end{equation}
It is useful to view the four realizations as two parallel pairs.  In velocity space, $\mathcal R_H$ retains the Gaussian vacuum whereas $\mathcal R_P$ strips it off, leaving Hermite polynomials.  In Fourier space, $\mathcal R_F$ retains the Fourier Maxwellian vacuum whereas $\mathcal R_B$ strips it off, leaving Bargmann monomials.  Thus the four realizations are not four unrelated constructions but two analogous vacuum-dressed/vacuum-stripped pairs:
\begin{equation}
\mathcal R_H=S_{\sqrt W}\mathcal R_P,
\qquad
\mathcal R_F=S_{\widetilde W}\mathcal R_B.
\label{eq:maxwell_parallel_realization_pairs}
\end{equation}
None of these coordinate choices is the Fock space itself.  Bargmann is privileged in the present collision calculation only because symmetric-tensor multiplication becomes ordinary polynomial multiplication there.

Laguerre--spherical functions are not a fifth basic realization here.  They are coordinate expressions, within the Hermite realizations, of the intrinsic trace/STF states introduced later.

\subsection{Algebraic Fock core, Hilbert domain, and finite moment jets}
\label{subsec:maxwell_fock_domain}

The collision algebra below is first defined on the algebraic Fock space
\begin{equation}
\mathscr H_{\rm alg}
:=\bigoplus_{r\ge0}^{\rm alg}\mathscr H_r,
\label{eq:maxwell_algebraic_fock_core}
\end{equation}
i.e. finite sums of homogeneous levels.  On this core all polynomial identities and finite-level collision constructions are exact.

The Hilbert completion is more restrictive.  Since $\mathcal R_H$ is the orthonormal Hermite-function realization and $f=\sqrt W \mathcal R_H|\chi\rangle$,
\begin{equation}
\|\chi\|_{\mathscr H}^2
=\int_{\mathbb R^d}\frac{|f(\bm v)|^2}{W_{\bm u,\theta}(\bm v)}\,d\bm v<\infty
\label{eq:maxwell_fock_hilbert_domain}
\end{equation}
is required for a global Hilbert-Fock state.  This need not hold for distributions with substantially slower than Gaussian tails \cite{Tang1993Hermite,SarnaGiesselmannTorrilhon2020}.

Finite jets require less.  Whenever the moments through degree $r$ exist, the Hermite coefficients
\begin{equation}
\chi_{\bm n}
=\int_{\mathbb R^d}f(\bm v)
\frac{\operatorname{He}_{\bm n}(\bm\xi)}{\sqrt{\bm n!}}\,d\bm v,
\qquad |\bm n|\le r,
\label{eq:maxwell_finite_moment_coefficients}
\end{equation}
are well defined and determine
\begin{equation}
J_r f:=\sum_{|\bm n|\le r}\chi_{\bm n}|\bm n\rangle.
\label{eq:maxwell_fock_jet}
\end{equation}
For Maxwell molecules the exact grading proved below makes the collision production of every existing finite jet meaningful independently of whether the full coefficient sequence belongs to the Hilbert completion.  Characteristic-function formulations extend still farther \cite{CannoneKarch2010,ChoMorimotoWangYang2016}; what may then fail is the identification of $\widetilde f/\widetilde W$ with an entire square-summable Bargmann function.  No completion beyond the domains stated here is needed for the finite-level results of this paper.

\section{Canonical lift--fusion intertwining on symmetric Fock space}
\label{subsec:maxwell_lift_fusion_theorem}

Section~\ref{subsec:maxwell_bobylev_derivation} has already shown what must be represented: at fixed collision direction the Maxwell gain term takes two generating functions, changes the argument of each by a linear map, and multiplies the results.  We now solve this representation problem \emph{without using any collision geometry}.  This separation is important.  The Fock operation will be determined once and for all for arbitrary linear maps $A$ and $B$; only in the following section will Bobylev's particular collision maps be inserted.

Below, we use only the minimal set of elementary operations pertinent to second quantization.  First, a linear map on the one-particle vector space is applied independently to every index of a symmetric tensor.  Second, two symmetric tensors are joined into one tensor of the summed degree.  The standard terminology for the first operation is the bosonic lift or second quantization; the second is simply multiplication in the symmetric algebra.  No quantum-mechanical interpretation is needed.

All statements in this section are algebraic on $\mathscr H_{\rm alg}=\operatorname{Sym}(V)$.  In particular, the lift and fusion maps are defined before Bargmann coordinates enter; Bargmann space will be used only afterward to display their coordinate action.

\subsection{Multiplication of symmetric tensors: the fusion map}
\label{subsec:maxwell_fusion_explained}

Consider first two homogeneous symmetric tensors.  If $v_1,\ldots,v_p,w_1,\ldots,w_q\in V$ and $\odot$ denotes symmetric tensor product, there is a canonical way to combine them:
\begin{equation}
(v_1\odot\cdots\odot v_p)
\;\mathsf m\;
(w_1\odot\cdots\odot w_q)
:=
v_1\odot\cdots\odot v_p\odot
w_1\odot\cdots\odot w_q.
\label{eq:maxwell_fusion_elementary_tensor}
\end{equation}
The result has degree $p+q$.  Extending linearly gives the ordinary multiplication map of the symmetric algebra,
\begin{equation}
\mathsf m:
\mathscr H_{\rm alg}\otimes_{\rm alg}\mathscr H_{\rm alg}
\longrightarrow
\mathscr H_{\rm alg}.
\label{eq:maxwell_fusion_intrinsic}
\end{equation}
We use the word \emph{fusion} only as a convenient name for this two-input, one-output operation.  Mathematically it is not an additional structure: it is the canonical multiplication already present in $\operatorname{Sym}(V)$.

The normalization seen in the number basis follows from this elementary multiplication.  In unnormalized creation monomials,
\begin{equation}
(\hat a^\dagger)^{\bm m}|0\rangle
\;\mathsf m\;
(\hat a^\dagger)^{\bm l}|0\rangle
=
(\hat a^\dagger)^{\bm m+\bm l}|0\rangle.
\label{eq:maxwell_fusion_unnormalized}
\end{equation}
Converting to the normalized states of Eq.~\eqref{eq:maxwell_number_states_intro} therefore gives
\begin{equation}
\mathsf m(|\bm m\rangle\otimes|\bm l\rangle)
=
\sqrt{\frac{(\bm m+\bm l)!}{\bm m!\bm l!}}
|\bm m+\bm l\rangle.
\label{eq:maxwell_fusion_number}
\end{equation}
The square-root factorial is thus forced by normalized Fock coordinates; it is not a collision-dependent coefficient.  In one dimension, for example,
\begin{equation}
|1\rangle\mathsf m|1\rangle=\sqrt2\,|2\rangle,
\qquad
|1\rangle\mathsf m|2\rangle=\sqrt3\,|3\rangle.
\label{eq:maxwell_fusion_low_example}
\end{equation}
These factors are simply what is required for multiplication of the corresponding normalized monomials $z^n/\sqrt{n!}$.

\subsection{Lifting a one-particle linear map}
\label{subsec:maxwell_lift_explained}

Let $A:V\to V$ be any linear map.  On the first Fock level $\mathscr H_1=V$ its action is already known.  There is then a canonical extension to a rank-two symmetric tensor:
\begin{equation}
v_1\odot v_2
\longmapsto
Av_1\odot Av_2,
\end{equation}
and similarly on degree $r$,
\begin{equation}
v_1\odot\cdots\odot v_r
\longmapsto
Av_1\odot\cdots\odot Av_r.
\label{eq:maxwell_lift_elementary_tensor}
\end{equation}
This defines
\begin{equation}
\Gamma(A)\big|_{\operatorname{Sym}^r(V)}
=A^{\otimes_s r},
\qquad r=0,1,2,\ldots,
\qquad
\Gamma(A)|0\rangle=|0\rangle.
\label{eq:maxwell_Gamma_sector}
\end{equation}
The notation $\Gamma(A)$ is standard for the bosonic lift, or second quantization, of $A$ \cite{Cook1951SecondQuantization,Segal1956TensorAlgebras}.  Here ``second quantization'' means only the canonical extension of a linear map on $V$ to all symmetric tensor powers.  Equivalently, $\Gamma(A)$ is the unique unital graded algebra map on $\operatorname{Sym}(V)$ whose restriction to degree one is $A$.

Two properties should be kept in view.  First, $\Gamma(A)$ does not change tensor rank:
\begin{equation}
\Gamma(A):\mathscr H_r\longrightarrow\mathscr H_r.
\label{eq:maxwell_Gamma_degree_preserving}
\end{equation}
Second, no basis has entered the definition; the construction is intrinsic to the symmetric tensor algebra.

\subsection{Lifting and fusion in Bargmann coordinates}
\label{subsec:maxwell_lift_fusion_bargmann_images}

The Bargmann realization turns the preceding tensor operations into familiar operations on generating functions \cite{Bargmann1961HilbertAnalytic}.  Fusion becomes ordinary multiplication.  On number states, Eq.~\eqref{eq:maxwell_fusion_number} gives
\begin{align}
\mathcal R_B\,\mathsf m(|\bm m\rangle\otimes|\bm l\rangle)
&=
\sqrt{\frac{(\bm m+\bm l)!}{\bm m!\bm l!}}
\frac{\bm z^{\bm m+\bm l}}{\sqrt{(\bm m+\bm l)!}}
\\
&=
\frac{\bm z^{\bm m}}{\sqrt{\bm m!}}
\frac{\bm z^{\bm l}}{\sqrt{\bm l!}}.
\end{align}
Hence, by bilinearity,
\begin{equation}
\mathcal R_B\,\mathsf m(|\phi\rangle\otimes|\psi\rangle)
=(\mathcal R_B|\phi\rangle)(\mathcal R_B|\psi\rangle).
\label{eq:maxwell_fusion_def}
\end{equation}

The lift becomes a linear substitution of the Bargmann variable.  The appearance of a transpose is already visible at rank one:
\begin{equation}
\bm z\cdot(Av)=(A^T\bm z)\cdot v.
\label{eq:maxwell_transpose_rank_one}
\end{equation}
Applying the same identity to every tensor index shows that contraction of $A^{\otimes_s r}\phi_r$ with $\bm z^{\otimes r}$ equals contraction of $\phi_r$ with $(A^T\bm z)^{\otimes r}$.  Therefore
\begin{equation}
\bigl(\mathcal R_B\Gamma(A)|\phi\rangle\bigr)(\bm z)
=(\mathcal R_B|\phi\rangle)(A^T\bm z).
\label{eq:maxwell_Gamma_def}
\end{equation}
We summarize Eqs.~\eqref{eq:maxwell_fusion_def} and \eqref{eq:maxwell_Gamma_def} as follows.

\begin{proposition}[Bargmann images of fusion and lifting]
\label{prop:maxwell_fusion_lift_Bargmann}
For $|\phi\rangle,|\psi\rangle\in\mathscr H_{\rm alg}$ and any $A:V\to V$,
\begin{align}
\mathcal R_B\,\mathsf m(|\phi\rangle\otimes|\psi\rangle)
&=(\mathcal R_B|\phi\rangle)(\mathcal R_B|\psi\rangle),
\\
\bigl(\mathcal R_B\Gamma(A)|\phi\rangle\bigr)(\bm z)
&=(\mathcal R_B|\phi\rangle)(A^T\bm z).
\end{align}
\end{proposition}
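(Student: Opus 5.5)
Both identities are linear (respectively bilinear) in the states, and $\mathscr H_{\rm alg}$ is spanned by the normalized number states $|\bm n\rangle$. So I would verify each identity on basis elements and extend by linearity.

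\textbf{Fusion identity.} For the first identity this is exactly the computation preceding Eq.~\eqref{eq:maxwell_fusion_def}. I would apply Eq.~\eqref{eq:maxwell_fusion_number} and then the Bargmann basis Eq.~\eqref{eq:maxwell_RB_basis_vacuum}. The factor $\sqrt{(\bm m+\bm l)!/(\bm m!\bm l!)}$ cancels against $1/\sqrt{(\bm m+\bm l)!}$, leaving $\bm z^{\bm m}\bm z^{\bm l}/\sqrt{\bm m!\bm l!}$. This is the product of the two Bargmann images. Bilinearity of both sides, with $\mathsf m$ bilinear and pointwise product bilinear, finishes it.

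\textbf{Lift identity.} For the second identity I would avoid basis bookkeeping. I would identify $\mathcal R_B$ on $\mathscr H_r$ with the contraction $\phi_r\mapsto\langle \bm z^{\otimes r},\phi_r\rangle$, using the bilinear (non-conjugated) pairing $\bm z\cdot v=\sum_\alpha z_\alpha v_\alpha$ in the standard basis $e_\alpha=\hat a_\alpha^\dagger|0\rangle$ of $V$.

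The key preliminary check is that this contraction agrees with $\mathcal R_B$ on number states. By Eq.~\eqref{eq:maxwell_fusion_unnormalized}, $(\hat a^\dagger)^{\bm n}|0\rangle$ is the symmetric product $e_1^{\odot n_1}\odot\cdots\odot e_d^{\odot n_d}$. Its pairing with $\bm z^{\otimes r}$ is $\bm z^{\bm n}$, since each factor $e_\alpha$ contributes $z_\alpha$. Dividing by $\sqrt{\bm n!}$ then reproduces $\bm z^{\bm n}/\sqrt{\bm n!}$. Equivalently, $\mathcal R_B$ is the unique unital algebra map $\operatorname{Sym}(V)\to\mathbb C[\bm z]$ with $e_\alpha\mapsto z_\alpha$, which is consistent with the fusion identity just proved.

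With this description, the lift identity reduces to decomposable tensors $v_1\odot\cdots\odot v_r$. Their images under $\mathcal R_B$ are $\prod_j(\bm z\cdot v_j)$ by multiplicativity. Under $\Gamma(A)$, Eq.~\eqref{eq:maxwell_lift_elementary_tensor} sends them to $\prod_j(\bm z\cdot Av_j)=\prod_j(A^T\bm z\cdot v_j)$ by Eq.~\eqref{eq:maxwell_transpose_rank_one}. This is the original image evaluated at $A^T\bm z$. Decomposable symmetric tensors span each $\mathscr H_r$, and both sides are linear and respect the direct sum, so the identity holds on $\mathscr H_{\rm alg}$. The degree-zero case is $\Gamma(A)|0\rangle=|0\rangle\mapsto1$.

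\textbf{Main obstacle.} The only delicate point is the identification of $\mathcal R_B$ with a canonical algebra homomorphism. Specifically, the normalization of $|\bm n\rangle$ must match the symmetric product convention for $\odot$, and the pairing must be bilinear rather than sesquilinear, otherwise $A^T$ would become $A^\dagger$. I would settle this once via the unnormalized monomial relation Eq.~\eqref{eq:maxwell_fusion_unnormalized}. After that, both statements are formal consequences of $\mathcal R_B$ being a unital algebra isomorphism $\operatorname{Sym}(V)\cong\mathbb C[\bm z]$ and $\Gamma(A)$ being the algebra map extending $A$.
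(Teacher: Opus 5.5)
Your proposal is correct and follows essentially the paper's argument: the fusion identity comes from the same number-state computation with Eq.~\eqref{eq:maxwell_fusion_number} followed by bilinearity, and the lift identity comes from applying the rank-one transpose relation $\bm z\cdot(Av)=(A^T\bm z)\cdot v$ in every tensor slot. Your explicit check that $\mathcal R_B$ restricted to $\mathscr H_r$ is contraction with $\bm z^{\otimes r}$ (equivalently, the unital algebra map with $e_\alpha\mapsto z_\alpha$) supplies a step the paper uses only implicitly, and your remark that the pairing must be bilinear, so that $A^T$ rather than $A^\dagger$ appears, is the right point to make.
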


Thus the dictionary needed for Bobylev's gain term is already complete:
\begin{equation}
\begin{array}{ccl}
\Gamma(A) &\longleftrightarrow& F(\bm z)\mapsto F(A^T\bm z),\\[1mm]
\mathsf m &\longleftrightarrow& (F,G)\mapsto FG.
\end{array}
\label{eq:maxwell_lift_fusion_dictionary}
\end{equation}
The canonical two-input operation representing two substitutions followed by multiplication can now be stated without guesswork.

\subsection{Canonical lift--fusion theorem and uniqueness}
\label{subsec:maxwell_lift_fusion_theorem_statement}

\begin{theorem}[Canonical lift--fusion intertwining and uniqueness]
\label{thm:maxwell_canonical_lift_fusion}
Let $A,B:V\to V$ be arbitrary linear maps and define
\begin{equation}
|\mathcal V_{A,B}[\phi,\psi]\rangle
:=
\mathsf m\!\left(
\Gamma(A)|\phi\rangle
\otimes
\Gamma(B)|\psi\rangle
\right),
\qquad
|\phi\rangle,|\psi\rangle\in\mathscr H_{\rm alg}.
\label{eq:maxwell_general_lift_fusion_map}
\end{equation}
Then
\begin{equation}
\boxed{
\bigl(\mathcal R_B|\mathcal V_{A,B}[\phi,\psi]\rangle\bigr)(\bm z)
=
(\mathcal R_B|\phi\rangle)(A^T\bm z)\,
(\mathcal R_B|\psi\rangle)(B^T\bm z).
}
\label{eq:maxwell_general_lift_fusion_intertwining}
\end{equation}
Moreover, $\mathcal V_{A,B}$ is the unique bilinear map on $\mathscr H_{\rm alg}\times\mathscr H_{\rm alg}$ with this Bargmann-coordinate action, and
\begin{equation}
\mathcal V_{A,B}(\mathscr H_p,\mathscr H_q)
\subseteq\mathscr H_{p+q}.
\label{eq:maxwell_general_lift_fusion_grading}
\end{equation}
\end{theorem}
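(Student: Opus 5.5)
The proof splits into three parts: the intertwining formula, uniqueness, and grading. Each reduces to results already stated, chiefly Proposition~\ref{prop:maxwell_fusion_lift_Bargmann} and the invertibility of the admissible realization $\mathcal R_B$.

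For the intertwining identity \eqref{eq:maxwell_general_lift_fusion_intertwining}, apply $\mathcal R_B$ to the definition \eqref{eq:maxwell_general_lift_fusion_map}. Then use the fusion half of Proposition~\ref{prop:maxwell_fusion_lift_Bargmann} with the two states $\Gamma(A)|\phi\rangle$ and $\Gamma(B)|\psi\rangle$. Both lie in $\mathscr H_{\rm alg}$, since $\Gamma$ preserves each level by \eqref{eq:maxwell_Gamma_degree_preserving} and hence preserves finite sums of levels. This gives the pointwise product $(\mathcal R_B\Gamma(A)|\phi\rangle)(\bm z)\,(\mathcal R_B\Gamma(B)|\psi\rangle)(\bm z)$. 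Applying the lifting half of the same proposition to each factor turns them into $(\mathcal R_B|\phi\rangle)(A^T\bm z)$ and $(\mathcal R_B|\psi\rangle)(B^T\bm z)$, which is the boxed formula. Bilinearity of $\mathcal V_{A,B}$ is immediate, since $\Gamma(A)$ and $\Gamma(B)$ are linear and $\mathsf m$ is linear on the tensor product.

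For uniqueness, suppose $\mathcal V'$ is any bilinear map on $\mathscr H_{\rm alg}\times\mathscr H_{\rm alg}$ with the same Bargmann action. Then $\mathcal R_B\mathcal V'[\phi,\psi]=\mathcal R_B\mathcal V_{A,B}[\phi,\psi]$ for all inputs, and injectivity of $\mathcal R_B$ forces $\mathcal V'=\mathcal V_{A,B}$. This is the same argument as in Proposition~\ref{prop:maxwell_bilinear_pullback}. The only point to check is injectivity of $\mathcal R_B$ on $\mathscr H_{\rm alg}$. It holds because $\mathcal R_B$ sends the basis $|\bm n\rangle$ to the monomials $\bm z^{\bm n}/\sqrt{\bm n!}$, and distinct monomials are linearly independent polynomials.

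For the grading \eqref{eq:maxwell_general_lift_fusion_grading}, take $\phi\in\mathscr H_p$ and $\psi\in\mathscr H_q$. Then $\Gamma(A)\phi\in\mathscr H_p$ and $\Gamma(B)\psi\in\mathscr H_q$ by \eqref{eq:maxwell_Gamma_degree_preserving}, and $\mathsf m$ maps $\mathscr H_p\otimes\mathscr H_q$ into $\mathscr H_{p+q}$ by \eqref{eq:maxwell_fusion_elementary_tensor}. As a cross-check in coordinates, the right-hand side is a product of homogeneous polynomials of degrees $p$ and $q$, since linear substitution preserves homogeneity, so it is homogeneous of degree $p+q$.

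No step is a real obstacle; the argument is a composition of results already established. The one point needing care is justifying the lifting identity \eqref{eq:maxwell_Gamma_def} for the transpose in complex coordinates. Here $A^T$ denotes the bilinear (not Hermitian) transpose in the basis underlying the number states, consistent with \eqref{eq:maxwell_transpose_rank_one}. To verify it I would check it on creation monomials: write $\Gamma(A)(\hat a^\dagger)^{\bm n}|0\rangle=\prod_\alpha\bigl(\sum_\beta A_{\beta\alpha}\hat a^\dagger_\beta\bigr)^{n_\alpha}|0\rangle$, whose Bargmann image is $\prod_\alpha(A^T\bm z)_\alpha^{n_\alpha}$.
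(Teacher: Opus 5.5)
Your proposal is correct and follows the paper's proof essentially step for step. You get the intertwining identity by applying the fusion half and then the lifting half of Proposition~\ref{prop:maxwell_fusion_lift_Bargmann}, uniqueness from injectivity of $\mathcal R_B$ on $\mathscr H_{\rm alg}$, and the grading from degree preservation by $\Gamma$ together with degree additivity of $\mathsf m$; your explicit transpose check on creation monomials also agrees with Eq.~\eqref{eq:maxwell_Gamma_number_symbol}.
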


\begin{proof}
Apply the two entries of the dictionary \eqref{eq:maxwell_lift_fusion_dictionary} in their natural order.  Fusion first gives
\begin{equation}
\mathcal R_B|\mathcal V_{A,B}[\phi,\psi]\rangle
=
\bigl(\mathcal R_B\Gamma(A)|\phi\rangle\bigr)
\bigl(\mathcal R_B\Gamma(B)|\psi\rangle\bigr),
\end{equation}
and the lift identity then gives Eq.~\eqref{eq:maxwell_general_lift_fusion_intertwining}.

For uniqueness, suppose another bilinear map $\mathcal W_{A,B}$ has the same Bargmann image for every pair of inputs.  Then
\begin{equation}
\mathcal R_B\Big(
|\mathcal W_{A,B}[\phi,\psi]\rangle
-|\mathcal V_{A,B}[\phi,\psi]\rangle
\Big)=0.
\end{equation}
Because $\mathcal R_B$ is injective on the algebraic core, the two Fock outputs are equal for every $\phi,\psi$.  Hence no alternative Fock vertex and no adjustable normalization remain once the coordinate operation and normalized basis are fixed.

Finally, $\Gamma(A)$ preserves $\mathscr H_p$ and $\Gamma(B)$ preserves $\mathscr H_q$, while fusion maps $\mathscr H_p\otimes\mathscr H_q$ into $\mathscr H_{p+q}$.  This proves the grading statement.
\end{proof}

The uniqueness statement is important for the logic of the collision construction.  Equation~\eqref{eq:maxwell_general_lift_fusion_intertwining} is not a factorization chosen in order to reproduce Bobylev coefficients.  Rather, it says that once a coordinate operation of the form
\begin{equation}
(F,G)\longmapsto F(A^T\bm z)G(B^T\bm z)
\label{eq:maxwell_generic_substitution_product}
\end{equation}
is given, its Fock representative is already fixed.  Bobylev's role in the next section is therefore only to supply the collision-dependent maps $A$ and $B$, the angular kernel, and the loss term.

The grading is equally independent of Maxwell kinetics.  It follows before any collision geometry is inserted from the two elementary facts
\begin{equation}
\Gamma(A):\mathscr H_p\to\mathscr H_p,
\qquad
\mathsf m:\mathscr H_p\otimes\mathscr H_q\to\mathscr H_{p+q}.
\end{equation}
This observation will become the intrinsic Maxwell selection rule once Bobylev's maps are substituted.

\section{Fock-space Maxwell collision vertex}
\label{subsec:maxwell_bobylev}

We now return to the Maxwell-specific information derived in Sec.~\ref{subsec:maxwell_bobylev_derivation}.  We enter this specialization with the grading already established: for arbitrary linear maps $A$ and $B$, lift--fusion sends $\mathscr H_p\times\mathscr H_q$ into $\mathscr H_{p+q}$.  The particular Bobylev matrices therefore cannot create or spoil this degree rule; they only select collision-specific members of an already graded class.  It remains to verify that the angular average and loss term also preserve the same target level.  Bobylev's identity is thus used here to identify the two linear substitutions, angular kernel, and loss term that select the Maxwell vertex from the general class of Theorem~\ref{thm:maxwell_canonical_lift_fusion}.

The Fourier collision arguments satisfy
\begin{equation}
\bm k_++\bm k_-=\bm k,
\qquad
|\bm k_+|^2+|\bm k_-|^2=|\bm k|^2,
\label{eq:maxwell_k_invariants}
\end{equation}
and therefore the Maxwellian characteristic function of Eq.~\eqref{eq:maxwell_W_tilde} factorizes exactly:
\begin{equation}
\widetilde W_{\bm u,\theta}(\bm k_+)\,
\widetilde W_{\bm u,\theta}(\bm k_-)
=
\widetilde W_{\bm u,\theta}(\bm k).
\label{eq:maxwell_W_tilde_factorization}
\end{equation}
This is the second simplification, after Bobylev's Fourier factorization, that makes Bargmann coordinates natural for Maxwell molecules.

Write
\begin{equation}
\widetilde f=\widetilde W F,
\qquad
\widetilde g=\widetilde W G,
\qquad
\bm z=-i\sqrt\theta\,\bm k,
\qquad
\bm z_\pm=-i\sqrt\theta\,\bm k_\pm.
\end{equation}
Dividing Eq.~\eqref{eq:maxwell_bobylev} by the common factor $\widetilde W(\bm k)$ gives the Bargmann-coordinate collision map
\begin{equation}
\mathcal Q_B(F,G)(\bm z)
:=
\int_{S^{d-1}}
 b(\bm e_k\!\cdot\!\bm\sigma)
 \Big[
 F(\bm z_+)G(\bm z_-)-F(\bm z)G(\bm0)
 \Big]d\bm\sigma .
\label{eq:maxwell_QB_bilinear_def}
\end{equation}
Equivalently,
\begin{equation}
\widetilde W^{-1}\mathcal Q_F(\widetilde W F,\widetilde W G)
=\mathcal Q_B(F,G).
\label{eq:maxwell_fourier_to_bargmann_collision}
\end{equation}
For the physical state $F=G=\mathscr B_\chi$,
\begin{equation}
\partial_t\mathscr B_\chi
=\mathcal Q_B(\mathscr B_\chi,\mathscr B_\chi).
\label{eq:maxwell_bobylev_bargmann}
\end{equation}
The point is not to define a new collision model in Bargmann space.  Equations~\eqref{eq:maxwell_bobylev_operator_identity} and \eqref{eq:maxwell_fourier_to_bargmann_collision} state that $\mathcal Q_B$ is simply the Gaussian-stripped coordinate image of the standard velocity-space Maxwell operator.

\subsection{Abstract bilinear collision vertex}
\label{subsec:maxwell_abstract_vertex}

The preceding subsection has identified $\mathcal Q_B$ as the Bargmann-coordinate realization of the Maxwell collision map.  The general bilinear pull-back of Proposition~\ref{prop:maxwell_bilinear_pullback} therefore defines the abstract collision output by
\begin{equation}
|\widehat{\mathcal Q}[\phi,\psi]\rangle
:=
\mathcal R_B^{-1}\mathcal Q_B(\mathcal R_B|\phi\rangle,\mathcal R_B|\psi\rangle).
\label{eq:maxwell_abstract_Q}
\end{equation}
At this point no tensor-product collision notation is needed: $\widehat{\mathcal Q}$ is simply the abstract bilinear map corresponding to the familiar ordered collision map.

\begin{theorem}[Nonlinear intertwining of the Maxwell collision map]
\label{thm:maxwell_nonlinear_intertwining}
On the algebraic Fock core $\mathscr H_{\rm alg}$,
\begin{equation}
\boxed{
\mathcal R_B|\widehat{\mathcal Q}[\phi,\psi]\rangle
=
\mathcal Q_B(\mathcal R_B|\phi\rangle,\mathcal R_B|\psi\rangle),
\qquad
|\phi\rangle,|\psi\rangle\in\mathscr H_{\rm alg}.
}
\label{eq:maxwell_nonlinear_intertwining_theorem}
\end{equation}
Any extension to a completed Fock/Bargmann space requires the corresponding continuity and domain properties of the collision map; no such extension is needed for the finite-level results below.
\end{theorem}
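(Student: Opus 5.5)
The intertwining identity \eqref{eq:maxwell_nonlinear_intertwining_theorem} is essentially the definition \eqref{eq:maxwell_abstract_Q} once we know the definition makes sense. The real content is that $\mathcal R_B^{-1}$ can be applied, i.e. that $\mathcal Q_B(\mathcal R_B|\phi\rangle,\mathcal R_B|\psi\rangle)$ lies in $\mathcal R_B(\mathscr H_{\rm alg})$, the space of polynomials in $\bm z$. The plan is therefore to prove this image property first and then invoke Proposition~\ref{prop:maxwell_bilinear_pullback}.

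\textbf{Reduction to a polynomial statement.} By bilinearity it suffices to take $F=\mathcal R_B|\phi\rangle$ and $G=\mathcal R_B|\psi\rangle$ homogeneous polynomials of degrees $p$ and $q$.

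\textbf{The gain term.} Fix $\bm\sigma$ and write $\bm z_\pm=\tfrac12(\bm z\pm|\bm k|(-i\sqrt\theta)\bm\sigma)$. The difficulty is that $\bm z_\pm$ is not linear in $\bm z$, because of the factor $|\bm k|$. I will handle this in two steps.

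First, rotate the integration variable by choosing $\bm\sigma=R_{\bm e_k}\bm\omega$, where $R_{\bm e_k}$ maps a fixed pole $\bm e$ to $\bm e_k$. The kernel then becomes $b(\bm e\cdot\bm\omega)$.

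Second, expand the product $F(\bm z_+)G(\bm z_-)$ as a polynomial in $\bm z$ and $|\bm k|\bm\sigma$. After the angular average, every term is a zonal integral of monomials in $\bm\sigma$ weighted by powers of $|\bm k|$. The Funk--Hecke formula reduces such an integral to a polynomial in $\bm e_k$ times matching powers of $|\bm k|$. Odd powers of $|\bm k|$ pair with odd-parity tensors in $\bm e_k$, so every term recombines into a homogeneous polynomial in $\bm k$ of total degree $p+q$, hence in $\bm z$ (using $|\bm k|^2=-\bm z\cdot\bm z/\theta$).

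The cleanest way to organize this is the identity $\bm k_\pm=A_\pm(\bm\sigma)\bm k$ on each orbit, where $A_\pm=\tfrac12(I\pm R)$ and $R$ is the rotation, or reflection, taking $\bm e_k$ to $\bm\sigma$. Averaging over $\bm\sigma$ then amounts to averaging over the stabilizer-coset of $\bm e_k$. For each fixed group element, Theorem~\ref{thm:maxwell_canonical_lift_fusion} gives a degree-$(p+q)$ polynomial, and averaging preserves homogeneity provided the averaged result is polynomial. Polynomiality follows from smoothness at $\bm k=\bm0$ (the integrand is an entire function of $\bm k$ averaged over a compact set with the cutoff kernel) combined with homogeneity of degree $p+q$: a smooth function homogeneous of integer degree is a homogeneous polynomial.

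\textbf{The loss term.} $F(\bm z)G(\bm0)\int b$ is plainly a polynomial. It is nonzero only when $q=0$, and then it has degree $p=p+q$.

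\textbf{Conclusion.} With $\mathcal Q_B(F,G)\in\mathcal R_B(\mathscr H_{p+q})$ established, $\mathcal R_B^{-1}$ is defined on it by injectivity and surjectivity of $\mathcal R_B$ onto polynomials, and the boxed identity follows by applying $\mathcal R_B$ to \eqref{eq:maxwell_abstract_Q}, exactly as in Proposition~\ref{prop:maxwell_bilinear_pullback}. The final remark about completions requires no proof beyond noting that only the algebraic core was used.

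\textbf{Main obstacle.} The hard step is the smoothness-plus-homogeneity argument for the angular average. Homogeneity is immediate from $\bm z_\pm$ scaling linearly in $\bm k$. Smoothness at the origin is what requires the cutoff $b\in L^1$ and the Funk--Hecke parity bookkeeping. If the direct argument proves awkward, I would fall back on the explicit Funk--Hecke computation in the pole frame.
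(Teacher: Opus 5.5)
Your argument is correct, but it does more than the paper's proof and uses a different mechanism. The paper treats the boxed identity as definitional: it is \eqref{eq:maxwell_abstract_Q} pushed forward by $\mathcal R_B$, uniqueness comes from invertibility, and all structural content is deferred to the lift--fusion theorem and its later Bobylev specialization.

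The paper takes for granted that $\mathcal R_B^{-1}$ can be applied on the algebraic core, i.e.\ that $\mathcal Q_B$ maps polynomials to polynomials. In Sec.~\ref{subsec:maxwell_grading} it only asserts that the angular average ``changes only the coefficients inside that Fock level''. That is not automatic, because the averaging measure $d\nu_{\bm e_k}$ and the hemisphere $H_{\bm e_k}$ depend on $\bm e_k$.

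Your tensor-parity step supplies exactly this closure. The tensor $\int b(\bm e\cdot\bm\sigma)\,\bm\sigma^{\otimes j}\,d\bm\sigma$ is invariant under the stabilizer of $\bm e$. It is therefore a combination of symmetrized products $\delta^{\otimes a}\otimes\bm e^{\otimes m}$ with $2a+m=j$, so the parity matching is forced by index counting. Then $|\bm k|^{j}\,\bm e_k^{\otimes m}=(\bm k\cdot\bm k)^{a}\,\bm k^{\otimes m}$ is polynomial.

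This gives the theorem on $\mathscr H_{\rm alg}$ together with a direct proof of $\widehat{\mathcal Q}(\mathscr H_p,\mathscr H_q)\subseteq\mathscr H_{p+q}$. The paper's $1+1\to2$ example is the $j\le 2$ instance of the same computation. What the paper's route buys is the conceptual separation of the Fock operation from the collision geometry, not an additional step of proof.

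Do not lean on your ``cleanest way'' paragraph, however: as written it is circular. For fixed $\bm\omega$ the integrand $F(A_{\bm\omega}\bm z)G(B_{\bm\omega}\bm z)$ is a polynomial. But the measure $d\nu_{\bm e_k}$ and its domain depend on the direction $\bm e_k$, which has no limit at $\bm k=\bm 0$. In the $\bm\sigma$ form, the kernel is $b(\bm e_k\cdot\bm\sigma)$ and the integrand contains $|\bm k|$.

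So this is not an entire function averaged against a fixed measure, and there is no $\bm k$-independent group or coset average. Individual averaged pieces are homogeneous of degree zero and not smooth at the origin. For example, $\int\omega_\alpha\omega_\beta\,d\nu_{\bm e_k}(\bm\omega)=c_1\delta_{\alpha\beta}+c_2\,e_{k,\alpha}e_{k,\beta}$.

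Smoothness of the full average is equivalent to the polynomiality you want, and it comes only from the parity bookkeeping. Let that Funk--Hecke/parity step carry the proof; it needs nothing beyond the angular cutoff $b\in L^1$.
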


\begin{proof}
This is the defining pull-back identity \eqref{eq:maxwell_abstract_Q} pushed forward by $\mathcal R_B$; uniqueness follows from invertibility of the realization.  The nontrivial structural statement needed for the gain construction is not hidden in this definition: Theorem~\ref{thm:maxwell_canonical_lift_fusion} has already proved, independently of Bobylev, that every product of two linearly substituted Bargmann symbols has the unique Fock representative obtained by lifting the two linear maps and then applying intrinsic fusion.  The Maxwell collision geometry will be inserted into that theorem in Sec.~\ref{subsec:maxwell_gamma_m}.
\end{proof}

Let
\begin{equation}
 e_{\bm n}(\bm z)
 :=\mathcal R_B|\bm n\rangle
 =\frac{\bm z^{\bm n}}{\sqrt{\bm n!}}
\label{eq:maxwell_Bargmann_basis_short}
\end{equation}
denote the normalized Bargmann monomials.  Directly from Eq.~\eqref{eq:maxwell_QB_bilinear_def}, a number-state pair satisfies
\begin{equation}
\begin{aligned}
\mathcal R_B|\widehat{\mathcal Q}[\bm m,\bm l]\rangle(\bm z)
=\int_{S^{d-1}}
 b(\bm e_k\!\cdot\!\bm\sigma)
 \Big[
 &e_{\bm m}(\bm z_+)e_{\bm l}(\bm z_-)
 -\delta_{\bm l,\bm0}\,e_{\bm m}(\bm z)
 \Big]d\bm\sigma .
\end{aligned}
\label{eq:maxwell_basis_pair_Bobylev_action}
\end{equation}
The gain part is thus a product of two independently transformed Bargmann monomials.  Section~\ref{subsec:maxwell_gamma_m} will identify this product, before angular averaging, with the canonical lift--fusion map of Theorem~\ref{thm:maxwell_canonical_lift_fusion}; no basis-dependent normalization is introduced at that stage.

Every number-basis matrix element of the abstract vertex is now an expansion coefficient of the explicitly known outgoing Bargmann function in Eq.~\eqref{eq:maxwell_basis_pair_Bobylev_action}.  Define its gain part by
\begin{equation}
\mathcal G_{\bm m\bm l}(\bm z)
:=
\int_{S^{d-1}}
 b(\bm e_k\!\cdot\!\bm\sigma)
 e_{\bm m}(\bm z_+)e_{\bm l}(\bm z_-)
 \,d\bm\sigma .
\label{eq:maxwell_gain_polynomial_def}
\end{equation}
The gain coefficients are the unique Bargmann coefficients of this function,
\begin{equation}
\mathcal G_{\bm m\bm l}(\bm z)
=
\sum_{\bm n}
\mathcal C^{+}_{\bm n;\bm m\bm l}\,
 e_{\bm n}(\bm z),
\qquad
\mathcal C^{+}_{\bm n;\bm m\bm l}
=
\frac{1}{\sqrt{\bm n!}}
\left.
\partial_{\bm z}^{\bm n}\mathcal G_{\bm m\bm l}(\bm z)
\right|_{\bm z=\bm0}.
\label{eq:maxwell_gain_coefficients_explicit}
\end{equation}
The derivative formula is simply coefficient extraction from the normalized Bargmann basis.  The collision-normal factorization below provides the corresponding finite-level computational route.

With
\begin{equation}
 \beta_0:=\int_{S^{d-1}}b(\bm e\!\cdot\!\bm\sigma)\,d\bm\sigma,
 \qquad |\bm e|=1,
\label{eq:maxwell_beta0_def}
\end{equation}
rotational invariance makes $\beta_0$ independent of the chosen unit vector $\bm e$.  Equation~\eqref{eq:maxwell_basis_pair_Bobylev_action} then gives
\begin{equation}
\mathcal Q_{\bm n;\bm m\bm l}
=
\mathcal C^{+}_{\bm n;\bm m\bm l}
-
\beta_0\,\delta_{\bm n,\bm m}\delta_{\bm l,\bm0}.
\label{eq:maxwell_Q_matrix_from_Bobylev}
\end{equation}
Hence
\begin{equation}
\widehat{\mathcal Q}(|\bm m\rangle,|\bm l\rangle)
=
\sum_{\bm n}
\mathcal Q_{\bm n;\bm m\bm l}|\bm n\rangle .
\label{eq:maxwell_basis_pair_output}
\end{equation}
For a general state $|\chi\rangle=\sum_{\bm n}\chi_{\bm n}|\bm n\rangle$,
\begin{equation}
\widehat{\mathcal Q}(|\chi\rangle,|\chi\rangle)
=
\sum_{\bm n,\bm m,\bm l}
\mathcal Q_{\bm n;\bm m\bm l}
\chi_{\bm m}\chi_{\bm l}|\bm n\rangle .
\label{eq:maxwell_Q_components}
\end{equation}
Only now, when a linear two-leg representation is useful, do we introduce the tensor-product representative of the same bilinear map,
\begin{equation}
\widehat{\mathbb Q}:
\mathscr H_{\rm alg}\otimes_{\rm alg}\mathscr H_{\rm alg}
\longrightarrow\mathscr H_{\rm alg},
\qquad
\widehat{\mathbb Q}(|\phi\rangle\otimes|\psi\rangle)
:=|\widehat{\mathcal Q}[\phi,\psi]\rangle .
\label{eq:maxwell_Q_tensor_from_bilinear}
\end{equation}
In the number basis,
\begin{equation}
\widehat{\mathbb Q}
=
\sum_{\bm n,\bm m,\bm l}
\mathcal Q_{\bm n;\bm m\bm l}
|\bm n\rangle\langle\bm m|\otimes\langle\bm l|.
\label{eq:maxwell_rank3_vertex}
\end{equation}
Thus $\widehat{\mathbb Q}$ contains no new collision physics; it is the linear tensor-product representative of $\widehat{\mathcal Q}$, introduced only where the two incoming legs need to be handled simultaneously.

\subsection{Bobylev specialization of the lift--fusion theorem}
\label{subsec:maxwell_gamma_m}

Theorem~\ref{thm:maxwell_canonical_lift_fusion} has already determined, independently of the collision model, the unique Fock representative of two linear substitutions followed by multiplication.  The task here is therefore not to guess or fit a Fock vertex.  Bobylev supplies the $\bm k$-dependence of the two factors; after rewriting that dependence as two linear maps, the lift--fusion theorem fixes the corresponding Fock operation automatically.  The physical angular kernel is incorporated only afterward.

\paragraph{From Bobylev's scattering vector to a collision normal.}
The Bobylev arguments
\begin{equation}
\bm k_\pm=\frac12\left(\bm k\pm|\bm k|\bm\sigma\right)
\end{equation}
form an orthogonal decomposition of $\bm k$:
\begin{equation}
\bm k_++\bm k_-=\bm k,
\qquad
\bm k_+\cdot\bm k_-=0.
\end{equation}
This suggests replacing the scattering vector $\bm\sigma$ by the normal to the corresponding orthogonal decomposition.  For $\bm\sigma\neq\bm e_k$ define
\begin{equation}
\bm\omega
:=
\frac{\bm e_k-\bm\sigma}{|\bm e_k-\bm\sigma|},
\qquad
\bm e_k\cdot\bm\omega\ge0,
\label{eq:maxwell_omega_def}
\end{equation}
so that $\bm\omega$ lies on the hemisphere
\begin{equation}
H_{\bm e_k}
:=
\left\{\bm\omega\in S^{d-1}:\bm e_k\cdot\bm\omega\ge0\right\}.
\end{equation}
The inverse relation is the reflection formula
\begin{equation}
\bm\sigma
=
\bm e_k
-2(\bm e_k\cdot\bm\omega)\bm\omega .
\label{eq:maxwell_sigma_omega_relation}
\end{equation}
Introduce the orthogonal projectors
\begin{equation}
P_{\bm\omega}=\bm\omega\bm\omega^T,
\qquad
A_{\bm\omega}=I-P_{\bm\omega},
\qquad
B_{\bm\omega}=P_{\bm\omega}.
\label{eq:maxwell_AB_projectors}
\end{equation}
Then the two collision arguments become \emph{linear} maps of the incoming Fourier vector,
\begin{equation}
\bm k_+=A_{\bm\omega}\bm k,
\qquad
\bm k_-=B_{\bm\omega}\bm k,
\label{eq:maxwell_collision_normal_k}
\end{equation}
and consequently
\begin{equation}
\bm z_+=A_{\bm\omega}\bm z,
\qquad
\bm z_-=B_{\bm\omega}\bm z .
\label{eq:maxwell_collision_normal_z}
\end{equation}
This linearization of the two Bargmann arguments is the sole reason for introducing the collision-normal parametrization.

The change of angular variable also transforms the Bobylev measure.  We denote its push-forward by $d\nu_{\bm e_k}(\bm\omega)$, defined through
\begin{equation}
\begin{aligned}
&\int_{S^{d-1}}
 b(\bm e_k\!\cdot\!\bm\sigma)
 F(\bm k_+,\bm k_-)
 \,d\bm\sigma
=
\int_{H_{\bm e_k}}
F(A_{\bm\omega}\bm k,B_{\bm\omega}\bm k)
\,d\nu_{\bm e_k}(\bm\omega) .
\end{aligned}
\label{eq:maxwell_pushforward_measure_def}
\end{equation}
With the standard surface measure this push-forward is explicitly
\begin{equation}
 d\nu_{\bm e_k}(\bm\omega)
 =
 2^{d-1}
 b\!\left(1-2(\bm e_k\cdot\bm\omega)^2\right)
 (\bm e_k\cdot\bm\omega)^{d-2}
 \,d\bm\omega,
 \qquad \bm\omega\in H_{\bm e_k}.
\label{eq:maxwell_pushforward_measure_explicit}
\end{equation}
The dependence on $\bm e_k$ is important: for a general angular Maxwell kernel the angular average is a separate part of the collision map and should not be confused with a fixed scalar measure on Fock space.

\paragraph{Specializing the canonical lifts.}
Equations~\eqref{eq:maxwell_collision_normal_z} have exactly the substitution form required by Theorem~\ref{thm:maxwell_canonical_lift_fusion}.  Hence the two incoming Bargmann hierarchies are represented by the already-defined canonical lifts $\Gamma(A_{\bm\omega})$ and $\Gamma(B_{\bm\omega})$.  No new lifting rule is introduced here.  For later calculations, the level matrices of these lifts can be written explicitly.  For a number state
$|\bm m\rangle$ with $|\bm m|=p$, Eq.~\eqref{eq:maxwell_Gamma_def} gives
\begin{equation}
\mathcal R_B\Gamma(A)|\bm m\rangle
=
\frac{(A^T\bm z)^{\bm m}}{\sqrt{\bm m!}},
\qquad
(A^T\bm z)^{\bm m}
:=
\prod_{\alpha=1}^{d}
\left(
\sum_{\beta=1}^{d}A_{\beta\alpha}z_\beta
\right)^{m_\alpha}.
\label{eq:maxwell_Gamma_number_symbol}
\end{equation}
Expanding again in the normalized Bargmann basis,
\begin{equation}
\frac{(A^T\bm z)^{\bm m}}{\sqrt{\bm m!}}
=
\sum_{|\bm r|=p}
\Gamma^{(p)}_{\bm r\bm m}(A)
\frac{\bm z^{\bm r}}{\sqrt{\bm r!}},
\end{equation}
we obtain the coefficient-extraction formula
\begin{equation}
\Gamma^{(p)}_{\bm r\bm m}(A)
=
\sqrt{\frac{\bm r!}{\bm m!}}
\,[\bm z^{\bm r}]\,(A^T\bm z)^{\bm m},
\qquad
|\bm r|=|\bm m|=p,
\label{eq:maxwell_Gamma_coeff_extract}
\end{equation}
%
%
where $[\bm z^{\bm r}]F(\bm z)$ denotes the coefficient of
$\bm z^{\bm r}=\prod_{\alpha=1}^d z_\alpha^{r_\alpha}$ in the polynomial
$F(\bm z)$. Equivalently,
\begin{equation}
[\bm z^{\bm r}]F(\bm z)
=
\frac{1}{\bm r!}
\left.
\partial_{\bm z}^{\bm r}F(\bm z)
\right|_{\bm z=0},
\qquad
\partial_{\bm z}^{\bm r}
=\prod_{\alpha=1}^d \partial_{z_\alpha}^{r_\alpha}.
\end{equation}
Hence Eq.~\eqref{eq:maxwell_Gamma_coeff_extract} may also be written as
\begin{equation}
\Gamma^{(p)}_{\bm r\bm m}(A)
=
\frac{1}{\sqrt{\bm r!\bm m!}}
\left.
\partial_{\bm z}^{\bm r}
(A^T\bm z)^{\bm m}
\right|_{\bm z=0}.
\end{equation}

This may be expanded into a completely finite
combinatorial sum.  Let $K=(k_{\alpha\beta})$ be a $d\times d$ matrix of
nonnegative integers with row sums $\bm m$ and column sums $\bm r$,
\begin{equation}
\sum_{\beta}k_{\alpha\beta}=m_\alpha,
\qquad
\sum_{\alpha}k_{\alpha\beta}=r_\beta,
\end{equation}
and denote the set of such matrices by
$\mathcal K(\bm r,\bm m)$.  Then the multinomial theorem gives
\begin{equation}
\Gamma^{(p)}_{\bm r\bm m}(A)
=
\sqrt{\bm r!\,\bm m!}
\sum_{K\in\mathcal K(\bm r,\bm m)}
\prod_{\alpha,\beta=1}^{d}
\frac{A_{\beta\alpha}^{\,k_{\alpha\beta}}}
{k_{\alpha\beta}!}.
\label{eq:maxwell_Gamma_coeff_finite_sum}
\end{equation}
Thus each lifted level is an explicitly computable finite matrix.  In
particular, at first level,
\begin{equation}
\Gamma^{(1)}_{\bm e_\alpha,\bm e_\beta}(A)
=A_{\alpha\beta},
\label{eq:maxwell_Gamma_first_sector}
\end{equation}
which is the matrix of $A$ on the one-particle coefficient vector, in
agreement with Eq.~\eqref{eq:maxwell_Gamma_sector}.

For the collision projectors of Eq.~\eqref{eq:maxwell_AB_projectors},
\begin{equation}
A_{\bm\omega}^T=A_{\bm\omega},
\qquad
B_{\bm\omega}^T=B_{\bm\omega}.
\label{eq:maxwell_AB_symmetric}
\end{equation}
Thus the transposes required by the general Bargmann-coordinate lift are
numerically immaterial after specializing to the collision maps.  We
nevertheless retain the transposition symbol in formulas inherited from the
general lift, so that the standard convention
$\Gamma(A)|_{\operatorname{Sym}^r(V)}=A^{\otimes_s r}$ remains visible
throughout.

The rank-one leg $B_{\bm\omega}=\bm\omega\bm\omega^T$ simplifies further.
Since
\begin{equation}
B_{\bm\omega}^T\bm z
=
B_{\bm\omega}\bm z
=
\bm\omega(\bm\omega\cdot\bm z),
\end{equation}
Eq.~\eqref{eq:maxwell_Gamma_coeff_extract} gives, for
$|\bm l|=|\bm s|=q$,
\begin{equation}
\Gamma^{(q)}_{\bm s\bm l}(B_{\bm\omega})
=
\frac{q!}{\sqrt{\bm s!\,\bm l!}}
\,\bm\omega^{\bm s+\bm l}.
\label{eq:maxwell_Gamma_B_rankone}
\end{equation}
For the complementary projector
$A_{\bm\omega}=I-\bm\omega\bm\omega^T$, one may use directly
\begin{equation}
\Gamma^{(p)}_{\bm r\bm m}(A_{\bm\omega})
=
\sqrt{\frac{\bm r!}{\bm m!}}
[\bm z^{\bm r}]
(A_{\bm\omega}^T\bm z)^{\bm m}
=
\sqrt{\frac{\bm r!}{\bm m!}}
[\bm z^{\bm r}]
\prod_{\alpha=1}^{d}
\left[
 z_\alpha-\omega_\alpha(\bm\omega\cdot\bm z)
\right]^{m_\alpha},
\label{eq:maxwell_Gamma_A_projector}
\end{equation}
or equivalently the finite sum
\eqref{eq:maxwell_Gamma_coeff_finite_sum} with
$A_{\alpha\beta}=\delta_{\alpha\beta}-\omega_\alpha\omega_\beta$.
Equations~\eqref{eq:maxwell_Gamma_B_rankone} and
\eqref{eq:maxwell_Gamma_A_projector} are the explicit level coefficients
needed in the collision-normal construction; no matrix exponential is
required for their evaluation.

For reference, in ladder-operator notation the same map has the normally ordered representation
\begin{equation}
\Gamma(A)
=
:\!\exp\!\left[
\hat a_\alpha^\dagger(A-I)_{\alpha\beta}\hat a_\beta
\right]\!:
\label{eq:maxwell_Gamma_normal}
\end{equation}
which remains valid for the noninvertible projectors $A_{\bm\omega}$ and $B_{\bm\omega}$ occurring in Eq.~\eqref{eq:maxwell_AB_projectors}.

\paragraph{Fusion of the transformed incoming legs.}
The second step is the intrinsic symmetric-algebra multiplication $\mathsf m$ defined in Eq.~\eqref{eq:maxwell_fusion_intrinsic}.  Proposition~\ref{prop:maxwell_fusion_lift_Bargmann} proves that this canonical multiplication becomes ordinary multiplication of Bargmann symbols, while Eq.~\eqref{eq:maxwell_fusion_number} fixes its normalized-basis prefactor.  Thus there is no collision-dependent choice in the fusion rule.  For the collision problem the relevant statement is the fusion of the \emph{transformed} basis legs.  If $|\bm m|=p$ and $|\bm l|=q$, write the level matrices of the lifts as in Eqs.~\eqref{eq:maxwell_Gamma_coeff_extract}--\eqref{eq:maxwell_Gamma_coeff_finite_sum}, namely
\begin{align}
\Gamma(A_{\bm\omega})|\bm m\rangle
&=
\sum_{|\bm r|=p}
\Gamma^{(p)}_{\bm r\bm m}(A_{\bm\omega})|\bm r\rangle,
\\
\Gamma(B_{\bm\omega})|\bm l\rangle
&=
\sum_{|\bm s|=q}
\Gamma^{(q)}_{\bm s\bm l}(B_{\bm\omega})|\bm s\rangle.
\end{align}
By Theorem~\ref{thm:maxwell_canonical_lift_fusion}, the unique fixed-normal gain state is
\begin{equation}
|\mathcal V_{\bm\omega}^{+}[\phi,\psi]\rangle
:=
|\mathcal V_{A_{\bm\omega},B_{\bm\omega}}[\phi,\psi]\rangle
=
\mathsf m\!\left[
\Gamma(A_{\bm\omega})|\phi\rangle
\otimes
\Gamma(B_{\bm\omega})|\psi\rangle
\right].
\label{eq:maxwell_fixed_normal_vertex}
\end{equation}
For an incoming basis pair this gives
\begin{equation}
\begin{aligned}
|\mathcal V_{\bm\omega}^{+}[\bm m,\bm l]\rangle
={}&
\sum_{\substack{|\bm r|=p\\|\bm s|=q}}
\Gamma^{(p)}_{\bm r\bm m}(A_{\bm\omega})
\Gamma^{(q)}_{\bm s\bm l}(B_{\bm\omega})
\sqrt{\frac{(\bm r+\bm s)!}{\bm r!\bm s!}}
|\bm r+\bm s\rangle .
\end{aligned}
\label{eq:maxwell_transformed_basis_fusion}
\end{equation}
Equation~\eqref{eq:maxwell_transformed_basis_fusion} also provides the practical finite-level route to the gain coefficients of Eq.~\eqref{eq:maxwell_gain_coefficients_explicit}: for fixed incoming degrees $p$ and $q$, evaluate the lifted matrices $\Gamma^{(p)}(A_{\bm\omega})$ and $\Gamma^{(q)}(B_{\bm\omega})$, fuse the resulting number states, perform the physical angular contraction, and finally read off the coefficient of $|\bm n\rangle$.  Because the pushed-forward angular measure retains its $\bm e_k$ dependence, this angular contraction is to be completed before the global Bargmann coefficient is extracted; Eq.~\eqref{eq:maxwell_gain_coefficients_explicit} is the unambiguous coefficient formula.

This formula is the basis-level content of the factorization.  The two bosonic lifts generally mix Cartesian number states within their respective incoming levels, and fusion combines every resulting pair into a single outgoing Fock vector.  In particular, the fixed-normal output is generally a superposition of number states rather than the single ket $|\bm m+\bm l\rangle$.

Equivalently, the general intertwining theorem gives directly
\begin{equation}
\bigl(\mathcal R_B|\mathcal V_{\bm\omega}^{+}[\chi,\phi]\rangle\bigr)(\bm z)
=
\mathscr B_\chi(A_{\bm\omega}^T\bm z)
\mathscr B_\phi(B_{\bm\omega}^T\bm z).
\label{eq:maxwell_fixed_normal_symbol}
\end{equation}
This identity is the decisive point: the fixed-normal Fock state is not chosen so as to reproduce the Bobylev product; it is the unique state already prescribed by Theorem~\ref{thm:maxwell_canonical_lift_fusion} once Bobylev has supplied $A_{\bm\omega}$ and $B_{\bm\omega}$.

\paragraph{From the fixed-normal vertex to the collision state.}
For two arbitrary incoming Fock vectors $|\phi\rangle$ and $|\psi\rangle$, denote this unique fixed-normal state by
\begin{equation}
|g_{\bm\omega}[\phi,\psi]\rangle
:=
|\mathcal V_{\bm\omega}^{+}[\phi,\psi]\rangle .
\label{eq:maxwell_fixed_normal_gain_state}
\end{equation}
We denote the full $\bm\omega$-indexed family in Eq.~\eqref{eq:maxwell_fixed_normal_gain_state} by $g_{\phi,\psi}$.  Its Bargmann symbol is
\begin{equation}
\mathscr B_{g_{\bm\omega}[\phi,\psi]}(\bm z)
=
\mathscr B_\phi(A_{\bm\omega}^T\bm z)
\mathscr B_\psi(B_{\bm\omega}^T\bm z).
\label{eq:maxwell_fixed_normal_gain_symbol}
\end{equation}
For the physical quadratic collision one sets $|\phi\rangle=|\psi\rangle=|\chi\rangle$.

It remains to contract this state-valued family with the physical angular kernel.  Define $\mathfrak A_b$ by
\begin{equation}
\bigl(\mathcal R_B\mathfrak A_b[g_{\phi,\psi}]\bigr)(\bm z)
:=
\int_{H_{\bm e_k}}
\mathscr B_{g_{\bm\omega}[\phi,\psi]}(\bm z)
\,d\nu_{\bm e_k}(\bm\omega),
\qquad
\bm z=-i\sqrt\theta\,\bm k .
\label{eq:maxwell_angular_average_map}
\end{equation}
The gain part is therefore simply
\begin{equation}
\widehat{\mathcal Q}^{+}(|\phi\rangle,|\psi\rangle)
=
\mathfrak A_b[g_{\phi,\psi}],
\label{eq:maxwell_Q_gain_operator_recipe}
\end{equation}
and the complete ordered bilinear Maxwell collision map is
\begin{equation}
\widehat{\mathcal Q}(|\phi\rangle,|\psi\rangle)
=
\mathfrak A_b[g_{\phi,\psi}]
-
\beta_0\,\langle0|\psi\rangle\,|\phi\rangle .
\label{eq:maxwell_Q_compact_bilinear}
\end{equation}
On the physical diagonal this becomes
\begin{equation}
\widehat{\mathcal Q}(|\chi\rangle,|\chi\rangle)
=
\mathfrak A_b[g_{\chi,\chi}]
-
\beta_0\chi_{\bm0}|\chi\rangle .
\label{eq:maxwell_Q_compact}
\end{equation}
Equations~\eqref{eq:maxwell_fixed_normal_gain_state}--\eqref{eq:maxwell_Q_compact} give the constructive recipe: the $\bm\omega$-dependence is carried by the explicitly named family $|g_{\bm\omega}[\phi,\psi]\rangle$, and $\mathfrak A_b$ performs only its final angular contraction.

\subsection{Worked example: the \texorpdfstring{$1+1\to2$}{1+1 to 2} collision block}
\label{subsec:maxwell_worked_11_2}

Before passing to the general grading theorem, it is useful to see the lifting--fusion--averaging mechanism on the lowest block for which both incoming legs are nontrivial.  No specialization of the spatial dimension is needed for this calculation, so we keep general $d$ and take two arbitrary first-level states
\begin{equation}
|p\rangle
=
p_\alpha\hat a_\alpha^\dagger|0\rangle,
\qquad
|q\rangle
=
q_\alpha\hat a_\alpha^\dagger|0\rangle.
\label{eq:maxwell_worked_input_states}
\end{equation}
By the general Bargmann realization
\eqref{eq:maxwell_bargmann_symbol}, their symbols are
\begin{equation}
\mathscr B_p(\bm z)=p_\alpha z_\alpha,
\qquad
\mathscr B_q(\bm z)=q_\alpha z_\alpha.
\label{eq:maxwell_worked_input_symbols}
\end{equation}
For compactness we denote the restriction of the collision map to this pair of first-level sectors by $\widehat{\mathcal Q}_{1,1}$; the general block notation is defined in the next subsection.

At fixed collision normal $\bm\omega$, specialize both incoming level degrees to one in the general first-level formula
\eqref{eq:maxwell_Gamma_first_sector}.  Because the collision maps
$A_{\bm\omega}$ and $B_{\bm\omega}$ are symmetric projectors,
$A_{\bm\omega}^T=A_{\bm\omega}$ and
$B_{\bm\omega}^T=B_{\bm\omega}$, so Eq.~\eqref{eq:maxwell_Gamma_sector}
gives
\begin{equation}
\Gamma(A_{\bm\omega})|p\rangle
=
(A_{\bm\omega}p)_\alpha
\hat a_\alpha^\dagger|0\rangle,
\qquad
\Gamma(B_{\bm\omega})|q\rangle
=
(B_{\bm\omega}q)_\beta
\hat a_\beta^\dagger|0\rangle .
\label{eq:maxwell_worked_lifts}
\end{equation}
This is the first-level specialization of the transformed-leg expansions
appearing immediately above Eq.~\eqref{eq:maxwell_transformed_basis_fusion}.

Applying the general fixed-normal gain-state definition
\eqref{eq:maxwell_fixed_normal_gain_state}, and then the number-state fusion
rule \eqref{eq:maxwell_fusion_number}, gives
\begin{equation}
|g_{\bm\omega}[p,q]\rangle
=
(A_{\bm\omega}p)_\alpha
(B_{\bm\omega}q)_\beta
\hat a_\alpha^\dagger\hat a_\beta^\dagger|0\rangle .
\label{eq:maxwell_worked_fusion}
\end{equation}
No additional global factor of $\sqrt2$ is required in
Eq.~\eqref{eq:maxwell_worked_fusion}: the occupation-number factor in
Eq.~\eqref{eq:maxwell_fusion_number} is already contained in the
creation-operator product.  In particular,
$(\hat a_\alpha^\dagger)^2|0\rangle=\sqrt2\,|2\bm e_\alpha\rangle$,
whereas
$\hat a_\alpha^\dagger\hat a_\beta^\dagger|0\rangle
=|\bm e_\alpha+\bm e_\beta\rangle$ for $\alpha\neq\beta$.

Finally, the general symbol identity
\eqref{eq:maxwell_fixed_normal_gain_symbol} gives
\begin{equation}
\mathscr B_{g_{\bm\omega}[p,q]}(\bm z)
=
[p\cdot A_{\bm\omega}^T\bm z]\,
[q\cdot B_{\bm\omega}^T\bm z].
\label{eq:maxwell_worked_fusion_symbol}
\end{equation}
Thus Eqs.~\eqref{eq:maxwell_worked_lifts}--\eqref{eq:maxwell_worked_fusion_symbol}
are respectively the $1+1$ specializations of the general lift, fusion, and
fixed-normal symbol formulas of the preceding subsection.

For the final angular contraction it is convenient to return to the equivalent scattering-vector parametrization.  Write
\begin{equation}
\bm z=\rho\bm e,
\qquad
|\bm e|=1,
\qquad
\eta=\bm e\cdot\bm\sigma ,
\end{equation}
so that
\begin{equation}
\bm z_\pm=\frac{\rho}{2}(\bm e\pm\bm\sigma).
\end{equation}
Applying the angular contraction
\eqref{eq:maxwell_angular_average_map}, equivalently the gain recipe
\eqref{eq:maxwell_Q_gain_operator_recipe}, and reverting from the
collision-normal variable $\bm\omega$ to the scattering vector $\bm\sigma$
through Eq.~\eqref{eq:maxwell_pushforward_measure_def}, gives the same
Bobylev gain integral as Eq.~\eqref{eq:maxwell_basis_pair_Bobylev_action},
now evaluated on the first-level superpositions \eqref{eq:maxwell_worked_input_states}.
The gain symbol is
\begin{equation}
\begin{aligned}
\mathcal R_B\widehat{\mathcal Q}^{+}(|p\rangle,|q\rangle)
=
\frac{\rho^2}{4}
\int_{S^{d-1}}b(\eta)
[p\cdot(\bm e+\bm\sigma)]
[q\cdot(\bm e-\bm\sigma)]
\,d\bm\sigma .
\end{aligned}
\label{eq:maxwell_worked_gain_integral}
\end{equation}
Define
\begin{equation}
\Omega_2
:=
\int_{S^{d-1}}b(\eta)(1-\eta^2)\,d\bm\sigma .
\label{eq:maxwell_Omega2}
\end{equation}
Rotational symmetry about $\bm e$ implies that the second angular moment has the form
\begin{equation}
\int_{S^{d-1}}
b(\eta)\sigma_\alpha\sigma_\beta\,d\bm\sigma
=
A\delta_{\alpha\beta}+B e_\alpha e_\beta .
\end{equation}
Taking the trace and contracting with $e_\alpha e_\beta$ gives
\begin{equation}
dA+B=\beta_0,
\qquad
A+B=\beta_0-\Omega_2,
\end{equation}
and therefore
\begin{equation}
\int_{S^{d-1}}
b(\eta)\sigma_\alpha\sigma_\beta\,d\bm\sigma
=
\frac{\Omega_2}{d-1}\delta_{\alpha\beta}
+
\left(
\beta_0-\frac{d}{d-1}\Omega_2
\right)e_\alpha e_\beta .
\label{eq:maxwell_worked_second_moment}
\end{equation}
The two first-moment cross terms in Eq.~\eqref{eq:maxwell_worked_gain_integral} cancel after angular integration.  Substituting Eq.~\eqref{eq:maxwell_worked_second_moment} and using $\bm z=\rho\bm e$ yields
\begin{equation}
\mathcal R_B\widehat{\mathcal Q}^{+}(|p\rangle,|q\rangle)
=
\frac{d\Omega_2}{4(d-1)}
\left[
(p\cdot\bm z)(q\cdot\bm z)
-
\frac1d(p\cdot q)\bm z^2
\right].
\label{eq:maxwell_worked_gain_result}
\end{equation}
Equivalently, with the $d$-dimensional STF product
\begin{equation}
p_{\langle\alpha}q_{\beta\rangle}
:=
\frac12(p_\alpha q_\beta+p_\beta q_\alpha)
-\frac1d\delta_{\alpha\beta}(p_\gamma q_\gamma),
\end{equation}
the Fock output is
\begin{equation}
\widehat{\mathcal Q}_{1,1}(|p\rangle,|q\rangle)
=
\frac{d\Omega_2}{4(d-1)}
p_{\langle\alpha}q_{\beta\rangle}
\hat a_\alpha^\dagger\hat a_\beta^\dagger|0\rangle .
\label{eq:maxwell_worked_11_to_2}
\end{equation}
By the general ordered bilinear formula
\eqref{eq:maxwell_Q_compact_bilinear}, there is no loss contribution because
the second incoming first-level state has zero vacuum component,
$\langle0|q\rangle=0$.  Hence the gain result
\eqref{eq:maxwell_worked_gain_result} is already the complete $1+1$ block,
and the output lands in the symmetric-traceless quadratic sector,
\begin{equation}
\mathscr H_1\times\mathscr H_1
\longrightarrow
\mathscr H_{2,\mathrm{tl}}.
\end{equation}

This calculation displays all three operations in their simplest nontrivial form: the lifts transform each incoming vector, fusion combines the two legs, and the physical angular average projects the resulting quadratic tensor onto its rotationally allowed component.  It also makes the degree addition $1+1=2$ visible before the full Maxwell grading is assembled in the next section.  Later, moment adaptation will set the physical first-level state to zero, so the $1+1$ channel disappears from the compatible low-order hierarchy; its purpose here is to expose the mechanics of the vertex itself.

\section{Intrinsic grading of the Maxwell collision vertex}
\label{subsec:maxwell_grading}

The Fock levels $\mathscr H_r$ were introduced in Eq.~\eqref{eq:maxwell_symmetric_fock_def}.  In the occupation-number basis they are equivalently characterized by the total occupation number
\begin{equation}
 |\bm n|:=\sum_{\alpha=1}^{d}n_\alpha,
\qquad
\mathscr H_r
=
\operatorname{span}\{|\bm n\rangle:|\bm n|=r\}.
\label{eq:maxwell_total_fock_grading}
\end{equation}
Let
\begin{equation}
 P_r
 :=
 \sum_{|\bm n|=r}|\bm n\rangle\langle\bm n|
\label{eq:maxwell_sector_projector}
\end{equation}
be the projector onto $\mathscr H_r$.  For any state $|\chi\rangle$ we define
\begin{equation}
|\chi^{[r]}\rangle:=P_r|\chi\rangle,
\qquad
|\chi\rangle=\sum_{r=0}^{\infty}|\chi^{[r]}\rangle .
\label{eq:maxwell_state_grading_def}
\end{equation}
The bracketed superscript $[r]$ therefore denotes the component on Fock level $r$; it is not a perturbation order.

The factorization of the preceding subsection now gives the grading almost directly.  Equation~\eqref{eq:maxwell_Gamma_sector} shows that
\begin{equation}
 \Gamma(A):\mathscr H_p\longrightarrow\mathscr H_p,
\end{equation}
whereas Eq.~\eqref{eq:maxwell_fusion_number} gives
\begin{equation}
 \mathsf m:\mathscr H_p\otimes\mathscr H_q
 \longrightarrow\mathscr H_{p+q}.
\end{equation}
The fixed-normal basis formula \eqref{eq:maxwell_transformed_basis_fusion} makes this explicit: every outgoing ket $|\bm r+\bm s\rangle$ satisfies
\begin{equation}
 |\bm r+\bm s|=p+q.
\end{equation}
The angular average changes only the coefficients inside that Fock level.  The loss term also respects the same rule: it is nonzero on an ordered pair in $\mathscr H_p\times\mathscr H_q$ only when $q=0$, in which case its output belongs to $\mathscr H_p=\mathscr H_{p+q}$.  Hence
\begin{equation}
\boxed{
\widehat{\mathcal Q}(\mathscr H_p,\mathscr H_q)
\subseteq
\mathscr H_{p+q}.
}
\label{eq:maxwell_exact_grading}
\end{equation}

Equation~\eqref{eq:maxwell_exact_grading} answers the question posed in the
Introduction: the degree-addition rule is a property of the abstract
bilinear vertex itself.  Coordinate realizations can make the same rule
look quite different.  A particularly explicit comparison is provided by
the Laguerre--spherical spectral calculation of
Glangetas, Li, and Xu \cite{GlangetasLiXu2016Triangular}.  Relabeling
their radial index $n$ as $j$ to match the notation used here, their
nonlinear perturbation operator in the harmonic-oscillator basis
$\varphi_{j,\ell,m}$, whose oscillator degree is $2j+\ell$, has the
schematic form

\begin{equation}
\mathcal G_{\rm GLX}\!\left(
\varphi_{j,\ell,m},
\varphi_{\tilde j,\tilde\ell,\tilde m}
\right)
=
\sum_k
\mu_k\,
\varphi_{
j+\tilde j+k,\,
\ell+\tilde\ell-2k,\,
m+\tilde m},
\label{eq:maxwell_GLX_triangular_coordinate_rule}
\end{equation}
with the allowed range of $k$ fixed by the angular-momentum coupling.  Every
term on the right satisfies
\begin{align}
2(j+\tilde j+k)+\ell+\tilde\ell-2k
&=
(2j+\ell)+(2\tilde j+\tilde\ell).
\label{eq:maxwell_GLX_degree_addition}
\end{align}
Thus their triangular selection rule is precisely the radial--angular
coordinate image of total-degree addition.  The shift by $k$ redistributes
pairs between radial and angular labels while leaving the total Fock degree
unchanged.  Earlier Fourier and moment formulations reveal the same
structure in different forms: in Bobylev variables it is already latent in
the homogeneity of the two collision arguments, while in ordinary and
Sonine moment coordinates it appears as successive solvability
\cite{Bobylev1975Fourier,Bobylev1984ExactRelaxation,Bobylev1988MaxwellReview,Ernst1979MaxwellMoments}.
In the abstract Fock representation no radial--angular recoupling is needed
to establish the selection rule: degree preservation by the lifts and
degree addition by fusion prove it before any coordinate realization is
chosen.

Equivalently, the matrix elements introduced in Eq.~\eqref{eq:maxwell_Q_matrix_from_Bobylev} obey
\begin{equation}
\mathcal Q_{\bm n;\bm m\bm l}=0
\qquad\text{unless}\qquad
|\bm n|=|\bm m|+|\bm l|.
\label{eq:maxwell_selection_rule}
\end{equation}

It is useful to name the restriction of the bilinear collision map to a pair of incoming levels,
\begin{equation}
\widehat{\mathcal Q}_{p,q}
:=
\widehat{\mathcal Q}\big|_{\mathscr H_p\times\mathscr H_q}:
\mathscr H_p\times\mathscr H_q
\longrightarrow
\mathscr H_{p+q}.
\label{eq:maxwell_sector_collision_block}
\end{equation}
This restriction has an explicit operational form.  For $|\phi_p\rangle\in\mathscr H_p$ and $|\psi_q\rangle\in\mathscr H_q$, define
\begin{equation}
|g^{(p,q)}_{\bm\omega}[\phi_p,\psi_q]\rangle
:=
\mathsf m\!\left(
\Gamma(A_{\bm\omega})|\phi_p\rangle
\otimes
\Gamma(B_{\bm\omega})|\psi_q\rangle
\right).
\label{eq:maxwell_sector_gain_family}
\end{equation}
Denote this $\bm\omega$-indexed family by $g^{(p,q)}_{\phi_p,\psi_q}$.  Then
\begin{equation}
\widehat{\mathcal Q}_{p,q}(|\phi_p\rangle,|\psi_q\rangle)
=
\mathfrak A_b[g^{(p,q)}_{\phi_p,\psi_q}]
-
\beta_0\,\delta_{q0}\,\langle0|\psi_q\rangle\,|\phi_p\rangle .
\label{eq:maxwell_sector_collision_block_explicit}
\end{equation}
The Kronecker factor makes the ordered loss term explicit: a homogeneous state of positive degree has zero Bargmann value at the origin, so the loss survives only on the $q=0$ incoming leg.  In the number basis, for $|\bm m|=p$ and $|\bm l|=q$,
\begin{equation}
\widehat{\mathcal Q}_{p,q}(|\bm m\rangle,|\bm l\rangle)
=
\sum_{|\bm n|=p+q}
\mathcal Q_{\bm n;\bm m\bm l}|\bm n\rangle,
\label{eq:maxwell_sector_collision_block_basis}
\end{equation}
with $\mathcal Q_{\bm n;\bm m\bm l}$ computed from Eqs.~\eqref{eq:maxwell_gain_coefficients_explicit} and \eqref{eq:maxwell_Q_matrix_from_Bobylev}.  Thus $\widehat{\mathcal Q}_{p,q}$ is fully specified rather than merely named as a formal graded contribution.

Define the outgoing collision component of total degree $r$ by
\begin{equation}
|\mathcal Q^{[r]}[\chi]\rangle
:=
P_r\widehat{\mathcal Q}(|\chi\rangle,|\chi\rangle).
\label{eq:maxwell_collision_output_level_def}
\end{equation}
Bilinearity together with Eq.~\eqref{eq:maxwell_exact_grading} gives the exact hierarchy
\begin{equation}
|\mathcal Q^{[r]}[\chi]\rangle
=
\sum_{p+q=r}
\widehat{\mathcal Q}_{p,q}
\big(|\chi^{[p]}\rangle,|\chi^{[q]}\rangle\big).
\label{eq:maxwell_graded_hierarchy}
\end{equation}
For the first levels this reads, before imposing any moment matching or conservation law,
\begin{align}
|\mathcal Q^{[0]}[\chi]\rangle
&=\widehat{\mathcal Q}_{0,0}(|\chi^{[0]}\rangle,|\chi^{[0]}\rangle),
\\
|\mathcal Q^{[1]}[\chi]\rangle
&=\widehat{\mathcal Q}_{0,1}(|\chi^{[0]}\rangle,|\chi^{[1]}\rangle)
 +\widehat{\mathcal Q}_{1,0}(|\chi^{[1]}\rangle,|\chi^{[0]}\rangle),
\\
|\mathcal Q^{[2]}[\chi]\rangle
&=\widehat{\mathcal Q}_{0,2}(|\chi^{[0]}\rangle,|\chi^{[2]}\rangle)
 +\widehat{\mathcal Q}_{1,1}(|\chi^{[1]}\rangle,|\chi^{[1]}\rangle)
 +\widehat{\mathcal Q}_{2,0}(|\chi^{[2]}\rangle,|\chi^{[0]}\rangle),
\\
|\mathcal Q^{[3]}[\chi]\rangle
&=\widehat{\mathcal Q}_{0,3}(|\chi^{[0]}\rangle,|\chi^{[3]}\rangle)
 +\widehat{\mathcal Q}_{1,2}(|\chi^{[1]}\rangle,|\chi^{[2]}\rangle)
 +\widehat{\mathcal Q}_{2,1}(|\chi^{[2]}\rangle,|\chi^{[1]}\rangle)
 +\widehat{\mathcal Q}_{3,0}(|\chi^{[3]}\rangle,|\chi^{[0]}\rangle),
\\
|\mathcal Q^{[4]}[\chi]\rangle
&=\widehat{\mathcal Q}_{0,4}(|\chi^{[0]}\rangle,|\chi^{[4]}\rangle)
 +\widehat{\mathcal Q}_{1,3}(|\chi^{[1]}\rangle,|\chi^{[3]}\rangle)
 +\widehat{\mathcal Q}_{2,2}(|\chi^{[2]}\rangle,|\chi^{[2]}\rangle)
\nonumber\\
&\quad
 +\widehat{\mathcal Q}_{3,1}(|\chi^{[3]}\rangle,|\chi^{[1]}\rangle)
 +\widehat{\mathcal Q}_{4,0}(|\chi^{[4]}\rangle,|\chi^{[0]}\rangle).
\label{eq:maxwell_first_levels_unreduced}
\end{align}
At this point these are purely algebraic statements about the collision vertex.  No compatibility condition has been imposed and no conserved component has been removed.

\section{One-vacuum spectrum, invariants, and compatible hierarchy}
\label{subsec:maxwell_linearized_spectrum}

\subsection{Linearized Fock vertex and the complete Maxwell collision spectrum}

The factorized collision vertex contains considerably more spectral
information than the two low-order relaxation rates required later by the
Navier--Stokes--Fourier calculation.  Its one-vacuum restriction yields
the complete spectrum of the Maxwell-molecule collision operator
linearized about the equilibrium state $n|0\rangle$.  The classical
spectrum originates in the work of Wang Chang and Uhlenbeck
\cite{WangChangUhlenbeck1952,WangChangUhlenbeck1970}; later explicit and
modern formulations include Refs.~\cite{AltermanFrankowskiPekeris1962,LernerMorimotoPravdaStarovXu2013}.

We emphasize that the linearization considered here is not an additional
kinetic model.  It is the exact first variation of the bilinear Fock
vertex about the equilibrium state $n|0\rangle$.  For an arbitrary Fock
perturbation $|\phi\rangle$, define
\begin{equation}
\widehat{\mathcal L}|\phi\rangle
:=
\widehat{\mathcal Q}(n|0\rangle,|\phi\rangle)
+
\widehat{\mathcal Q}(|\phi\rangle,n|0\rangle).
\label{eq:maxwell_full_linearized_Fock_operator}
\end{equation}
Thus, if
\begin{equation}
|\chi\rangle=n|0\rangle+\varepsilon|\phi\rangle,
\end{equation}
then
\begin{equation}
\widehat{\mathcal Q}(|\chi\rangle,|\chi\rangle)
=
\varepsilon\widehat{\mathcal L}|\phi\rangle
+
O(\varepsilon^2).
\end{equation}

The exact grading of the nonlinear vertex immediately implies
\begin{equation}
\widehat{\mathcal L}\mathscr H_r
\subseteq
\mathscr H_r,
\label{eq:maxwell_linearized_preserves_grade}
\end{equation}
so that
\begin{equation}
\widehat{\mathcal L}
=
\bigoplus_{r=0}^{\infty}\widehat{\mathcal L}_r,
\qquad
\widehat{\mathcal L}_r
:=
P_r\widehat{\mathcal L}P_r .
\end{equation}
For $r>0$ this agrees with the one-vacuum block
\begin{equation}
\widehat{\mathcal L}_r|\phi^{[r]}\rangle
=
\widehat{\mathcal Q}(n|0\rangle,|\phi^{[r]}\rangle)
+
\widehat{\mathcal Q}(|\phi^{[r]}\rangle,n|0\rangle).
\label{eq:maxwell_exact_one_vacuum_block}
\end{equation}

Let
\begin{equation}
\Phi(\bm z):=(\mathcal R_B|\phi\rangle)(\bm z)
\end{equation}
be the Bargmann symbol of the perturbation.  Since $\mathcal R_B|0\rangle=1$, nonlinear intertwining applied separately to the two ordered one-vacuum terms gives
\begin{align}
\mathcal R_B\widehat{\mathcal Q}(n|0\rangle,|\phi\rangle)
&=
\mathcal Q_B(n,\Phi)
=
n\int_{S^{d-1}}b(\eta)
\bigl[\Phi(\bm z_-)-\Phi(\bm0)\bigr] \,d\bm\sigma,
\label{eq:maxwell_linearized_left_vacuum_Bargmann}
\\
\mathcal R_B\widehat{\mathcal Q}(|\phi\rangle,n|0\rangle)
&=
\mathcal Q_B(\Phi,n)
=
n\int_{S^{d-1}}b(\eta)
\bigl[\Phi(\bm z_+)-\Phi(\bm z)\bigr] \,d\bm\sigma.
\label{eq:maxwell_linearized_right_vacuum_Bargmann}
\end{align}
Here the first gain term places the perturbation on the second incoming leg, while the second places it on the first; the two ordered loss terms give vacuum evaluation and the identity contribution, respectively.  Adding Eqs.~\eqref{eq:maxwell_linearized_left_vacuum_Bargmann} and \eqref{eq:maxwell_linearized_right_vacuum_Bargmann} therefore yields
\begin{equation}
\begin{aligned}
(\mathcal R_B\widehat{\mathcal L}|\phi\rangle)(\bm z)
=
n\int_{S^{d-1}}b(\eta)
\Big[
&
\Phi(\bm z_+)
+
\Phi(\bm z_-)
-\Phi(\bm z)
-\Phi(\bm0)
\Big] \,d\bm\sigma ,
\end{aligned}
\label{eq:maxwell_linearized_Bargmann_action}
\end{equation}
where
\begin{equation}
\bm z_\pm
=
\frac12
\left(
\bm z\pm|\bm z|\bm\sigma
\right),
\qquad
\eta
=
\frac{\bm z}{|\bm z|}\cdot\bm\sigma .
\end{equation}
Thus the familiar four-term linearized Maxwell expression is the Bargmann image of the first variation of the abstract bilinear Fock vertex.  Its two gain terms arise from the two possible placements of the perturbation on the incoming legs, and its two loss terms arise from the corresponding ordered loss contributions.

\paragraph{Irreducible decomposition of a Fock level.}

In three dimensions the homogeneous Fock level
\begin{equation}
\mathscr H_r
\simeq
\operatorname{Sym}^r(\mathbb R^3)
\end{equation}
has the multiplicity-free rotational decomposition
\begin{equation}
\mathscr H_r
=
\bigoplus_{\substack{j,\ell\geq0\\2j+\ell=r}}
\mathscr K_{j\ell},
\label{eq:maxwell_Fock_SO3_decomposition}
\end{equation}
where $\mathscr K_{j\ell}$ carries the irreducible angular-momentum
representation of rank $\ell$.

To make this decomposition explicit, let
$\mathcal Y_{\ell m}(\bm z)$,
$-\ell\leq m\leq\ell$, be homogeneous solid harmonics,
\begin{equation}
\mathcal Y_{\ell m}(\lambda\bm z)
=
\lambda^\ell\mathcal Y_{\ell m}(\bm z),
\qquad
\Delta_{\bm z}\mathcal Y_{\ell m}=0.
\end{equation}
For
\begin{equation}
r=2j+\ell
\end{equation}
define the homogeneous Bargmann polynomials
\begin{equation}
\Phi_{j\ell m}(\bm z)
=
(\bm z^2)^j
\mathcal Y_{\ell m}(\bm z).
\label{eq:maxwell_Fock_spherical_Bargmann_basis}
\end{equation}
Their abstract Fock representatives may be written, up to normalization,
as
\begin{equation}
|j,\ell,m\rangle
\propto
\left[
(\hat a_\alpha^\dagger)^2
\right]^j
\mathcal Y_{\ell m}(\hat{\bm a}^\dagger)|0\rangle .
\label{eq:maxwell_Fock_spherical_states}
\end{equation}
The integer $j$ therefore counts trace, or radial-pair, excitations,
whereas $\ell$ is the rank of the irreducible symmetric trace-free (STF) tensor sector, that is, the sector of totally symmetric tensors whose contraction over any pair of indices vanishes.  These labels belong to the intrinsic Fock-level decomposition; no coordinate adaptation of the Fock space is involved.  Their relation to Cartesian Hermite products and to the conventional Laguerre--spherical coordinate representation is derived explicitly in Appendix~\ref{app:maxwell_Hermite_Laguerre}.

Because the collision kernel is rotationally invariant,
$\widehat{\mathcal L}$ commutes with the natural action of $SO(3)$.
Each irreducible $\ell$-representation occurs only once in a fixed
Fock level $r$.  Hence
$\widehat{\mathcal L}$ acts as a scalar on every
$\mathscr K_{j\ell}$:
\begin{equation}
\widehat{\mathcal L}|j,\ell,m\rangle
=
-\lambda_{j\ell}|j,\ell,m\rangle,
\qquad
-\ell\leq m\leq\ell .
\label{eq:maxwell_full_spectrum_eigenvalue_def}
\end{equation}
The eigenvalue is independent of $m$, and the corresponding degeneracy is
$2\ell+1$.

\begin{proposition}[Complete linearized Maxwell spectrum in Fock space]
\label{prop:maxwell_complete_Fock_spectrum}
For the angular-cutoff Maxwell kernel used here, the eigenvalue of the
linearized Fock collision generator on $\mathscr K_{j\ell}$ is
\begin{equation}
\boxed{
\begin{aligned}
\lambda_{j\ell}
=
n\int_{S^2}b(\eta)
\Big[
&
1+\delta_{j0}\delta_{\ell0}
-c(\eta)^{\,2j+\ell}
P_\ell\!\bigl(c(\eta)\bigr)
-s(\eta)^{\,2j+\ell}
P_\ell\!\bigl(s(\eta)\bigr)
\Big]\,d\bm\sigma ,
\end{aligned}
}
\label{eq:maxwell_complete_Fock_spectrum}
\end{equation}
where
\begin{equation}
c(\eta):=\sqrt{\frac{1+\eta}{2}},
\qquad
s(\eta):=\sqrt{\frac{1-\eta}{2}},
\label{eq:maxwell_half_angle_cs}
\end{equation}
and $P_\ell$ is the Legendre polynomial of degree $\ell$.
\end{proposition}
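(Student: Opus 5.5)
Because $\widehat{\mathcal L}$ acts as a scalar on each $\mathscr K_{j\ell}$, it suffices to apply the Bargmann formula \eqref{eq:maxwell_linearized_Bargmann_action} to one convenient representative of $\mathscr K_{j\ell}$ and read off the multiplier. I will use $\Phi=\Phi_{j\ell m}=(\bm z^2)^j\mathcal Y_{\ell m}(\bm z)$, which is homogeneous of degree $r=2j+\ell$. Since $\mathcal R_B$ is injective on the algebraic core, showing $\mathcal R_B\widehat{\mathcal L}\Phi=-\lambda_{j\ell}\Phi$ gives the eigenvalue relation \eqref{eq:maxwell_full_spectrum_eigenvalue_def}.

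\emph{Step 1: the two loss terms.} The term $-\Phi(\bm z)$ contributes $-n\beta_0\Phi$. The vacuum evaluation $-\Phi(\bm 0)$ vanishes unless $r=0$; for $j=\ell=0$ we have $\Phi\equiv1$ and it gives another $-n\beta_0$. Together these produce the terms $1+\delta_{j0}\delta_{\ell0}$ inside the bracket.

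\emph{Step 2: the two gain terms.} The key geometric facts are, writing $\bm z=\rho\bm e$ with $\bm e$ a unit vector,
\[
|\bm z_\pm|^2=\tfrac{\rho^2}{4}(2\pm2\eta)=\rho^2\tfrac{1\pm\eta}{2},
\qquad
\bm z_\pm=\rho\,c_\pm\,\bm e_\pm,
\qquad
\bm e\cdot\bm e_\pm=c_\pm,
\]
where $c_+=c(\eta)$, $c_-=s(\eta)$, and $\bm e_\pm$ are unit vectors. The inner product follows from $\bm e\cdot(\bm e\pm\bm\sigma)/|\bm e\pm\bm\sigma|=(1\pm\eta)/\sqrt{2(1\pm\eta)}$. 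Homogeneity then gives
\[
\Phi(\bm z_\pm)=\rho^{r}c_\pm^{\,r}\,\mathcal Y_{\ell m}(\bm e_\pm).
\]
It remains to average $\mathcal Y_{\ell m}(\bm e_\pm)$ over $\bm\sigma$ with the zonal weight $b(\bm e\cdot\bm\sigma)$. As $\bm\sigma$ ranges over the circle $\{\bm e\cdot\bm\sigma=\eta\}$, $\bm e_\pm$ ranges over the circle at fixed angle $\arccos c_\pm$ about $\bm e$. By the Funk--Hecke theorem, equivalently the addition theorem for spherical harmonics, the mean of $\mathcal Y_{\ell m}$ over that circle equals $P_\ell(c_\pm)\mathcal Y_{\ell m}(\bm e)$. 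Integrating over $\eta$ then gives
\[
n\int b(\eta)\bigl[c^{r}P_\ell(c)+s^{r}P_\ell(s)\bigr]d\bm\sigma\cdot\Phi(\bm z).
\]

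\emph{Step 3: combine.} Multiplying Step 2 by $n$ and combining with Step 1, then flipping the sign, gives \eqref{eq:maxwell_complete_Fock_spectrum}. As a consistency check, $(j,\ell)=(0,0)$ gives $1+1-1-1=0$, reflecting mass conservation.

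\emph{Main obstacle.} The delicate point is the circle average in Step 2. One must first do the angular integration over the azimuth about $\bm e$ at fixed $\eta$, so that the Funk--Hecke reduction applies, and only then integrate over $\eta$. The surface measure factorizes in exactly this way because $b$ depends only on $\eta$.

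Two technical remarks are needed for a rigorous write-up. First, the formula holds for complex $\bm z$, which lies on the Bargmann side. I would first prove it for real $\bm z$ and then extend it by polynomial identity, since both sides are polynomials and the real directions determine them. Second, the map $\bm\sigma\mapsto\bm e_\pm$ is a rotation-equivariant reparametrization of the circle at fixed $\eta$, so averaging over $\bm\sigma$ on that circle is the same as averaging $\bm e_\pm$ uniformly over its own circle.
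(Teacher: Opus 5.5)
Your proof is correct, but at the key step it takes a different route from the paper.

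\emph{The paper's route.} The paper chooses the zonal representative $\Phi^{(\bm e)}_{j\ell}(\bm z)=(\bm z^2)^j|\bm z|^\ell P_\ell(\bm e\cdot\bm z/|\bm z|)$. It evaluates the Bargmann action only at the single point $\bm z=\rho\bm e$ on the harmonic's own axis. There $\Phi^{(\bm e)}_{j\ell}(\bm z_\pm)$ depends on $\bm\sigma$ only through $\eta$, and it equals $\rho^{r}c_\pm^{\,r}P_\ell(c_\pm)$ by inspection. So no azimuthal averaging and no Funk--Hecke identity are needed. The price is that the argument rests entirely on the a priori scalar action of $\widehat{\mathcal L}$ on $\mathscr K_{j\ell}$ (rotation invariance plus multiplicity-freeness), because the eigen-equation is checked at one point only.

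\emph{Your route.} You take an arbitrary $\Phi_{j\ell m}$ and an arbitrary direction $\bm e$, and use the circle-mean property of spherical harmonics. This shows $\mathcal R_B\widehat{\mathcal L}\Phi_{j\ell m}=-\lambda_{j\ell}\Phi_{j\ell m}$ identically. It costs the Funk--Hecke step and the care about the uniform reparametrization $\bm\sigma\mapsto\bm e_\pm$ at fixed $\eta$. In exchange it proves independently that each $\mathscr K_{j\ell}$ is an eigenspace, so the appeal to scalar action in your first sentence is not actually needed.

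One bookkeeping slip: your Step 2 result already carries the factor $n$, so it should not be multiplied by $n$ again in Step 3.
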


\begin{proof}
Since the restriction of
$\widehat{\mathcal L}$ to $\mathscr K_{j\ell}$ is already known to be
scalar, it is sufficient to evaluate it on one nonzero member of the
irreducible sector.

Choose a unit vector $\bm e$ and the zonal solid harmonic
\begin{equation}
\mathcal Y_\ell^{(\bm e)}(\bm z)
=
|\bm z|^\ell
P_\ell
\left(
\frac{\bm e\cdot\bm z}{|\bm z|}
\right),
\end{equation}
and define
\begin{equation}
\Phi_{j\ell}^{(\bm e)}(\bm z)
=
(\bm z^2)^j
\mathcal Y_\ell^{(\bm e)}(\bm z).
\end{equation}
Evaluate the result at
\begin{equation}
\bm z=\rho\bm e,
\qquad
\rho>0.
\end{equation}
Then
\begin{equation}
\Phi_{j\ell}^{(\bm e)}(\rho\bm e)
=
\rho^{2j+\ell}.
\end{equation}

For
\begin{equation}
\eta=\bm e\cdot\bm\sigma
\end{equation}
the two Bobylev arguments are
\begin{equation}
\bm z_\pm
=
\frac{\rho}{2}
(\bm e\pm\bm\sigma).
\end{equation}
Their lengths satisfy
\begin{equation}
|\bm z_+|
=
\rho c(\eta),
\qquad
|\bm z_-|
=
\rho s(\eta),
\end{equation}
while their directions obey
\begin{equation}
\bm e\cdot\frac{\bm z_+}{|\bm z_+|}
=
c(\eta),
\qquad
\bm e\cdot\frac{\bm z_-}{|\bm z_-|}
=
s(\eta).
\end{equation}
Consequently,
\begin{equation}
\frac{
\Phi_{j\ell}^{(\bm e)}(\bm z_+)
}{
\Phi_{j\ell}^{(\bm e)}(\rho\bm e)
}
=
c(\eta)^{\,2j+\ell}
P_\ell\!\bigl(c(\eta)\bigr),
\end{equation}
and
\begin{equation}
\frac{
\Phi_{j\ell}^{(\bm e)}(\bm z_-)
}{
\Phi_{j\ell}^{(\bm e)}(\rho\bm e)
}
=
s(\eta)^{\,2j+\ell}
P_\ell\!\bigl(s(\eta)\bigr).
\end{equation}

For positive total degree $2j+\ell>0$,
$\Phi_{j\ell}^{(\bm e)}(\bm0)=0$.
For the vacuum sector $(j,\ell)=(0,0)$,
$\Phi_{000}=1$ and the additional loss term
$-\Phi(\bm0)$ contributes a second unit.  Substitution into
Eq.~\eqref{eq:maxwell_linearized_Bargmann_action} therefore gives
Eq.~\eqref{eq:maxwell_complete_Fock_spectrum}.
\end{proof}

The spectral labels have a direct Fock interpretation:
\begin{equation}
r=2j+\ell,
\qquad
j=\text{number of trace pairs},
\qquad
\ell=\text{irreducible angular rank}.
\label{eq:maxwell_spectrum_Fock_labels}
\end{equation}
Thus the conventional radial and angular spectral quantum numbers are
already encoded in the trace decomposition of a homogeneous Fock level.

\paragraph{Relation to the standard Maxwell spectrum.}

The irreducible labels $(j,\ell,m)$ were introduced intrinsically in Fock
space through Eqs.~\eqref{eq:maxwell_Fock_SO3_decomposition}--\eqref{eq:maxwell_Fock_spherical_states}.  Their appearance in the conventional
Maxwell spectrum is a statement about the Hermite coordinate realization,
not an additional adaptation of the abstract Fock space.  Appendix~\ref{app:maxwell_Hermite_Laguerre} derives this coordinate transformation explicitly, starting from the Cartesian Hermite products of Eq.~\eqref{eq:maxwell_H_basis} and the polynomial ladder operators of Eq.~\eqref{eq:maxwell_ladder_P}.  The result is, up to normalization,
\begin{equation}
\mathcal R_H|j,\ell,m\rangle
\propto
\sqrt W\,
\left(\frac{|\bm\xi|}{\sqrt2}\right)^\ell
L_j^{(\ell+1/2)}
\left(\frac{|\bm\xi|^2}{2}\right)
Y_\ell^m
\left(
\frac{\bm\xi}{|\bm\xi|}
\right),
\label{eq:maxwell_Fock_to_Laguerre_spherical}
\end{equation}
where $L_j^{(\ell+1/2)}$ is the generalized Laguerre polynomial.  Thus the
familiar Laguerre--spherical functions are the radial--angular Hermite
coordinate images of the same abstract Fock states.  In this sense the
intrinsic trace/STF decomposition
\eqref{eq:maxwell_Fock_SO3_decomposition} is the representation-independent
origin of the usual eigenfunction labels of the linearized Maxwell
collision operator.

To compare Eq.~\eqref{eq:maxwell_complete_Fock_spectrum} with the conventional half-angle form, first make the angular symmetrization explicit.  Since the expression in square brackets in Eq.~\eqref{eq:maxwell_complete_Fock_spectrum} is even under $\eta\mapsto-\eta$, define, for $0\leq\eta\leq1$,
\begin{equation}
b_+(\eta):=b(\eta)+b(-\eta).
\label{eq:maxwell_symmetrized_angular_kernel}
\end{equation}
Then set
\begin{equation}
\eta=\cos2\vartheta,
\qquad
c=\cos\vartheta,
\qquad
s=\sin\vartheta,
\qquad
0\leq\vartheta\leq\frac{\pi}{4}.
\end{equation}
Using $d\bm\sigma=2\pi\,d\eta$ for a zonal integral on $S^2$ and $d\eta=-2\sin(2\vartheta)\,d\vartheta$, Eq.~\eqref{eq:maxwell_complete_Fock_spectrum} becomes
\begin{equation}
\begin{aligned}
\lambda_{j\ell}
={}&
4\pi n
\int_0^{\pi/4}
b_+(\cos2\vartheta)\sin(2\vartheta)
\Big[
1+\delta_{j0}\delta_{\ell0}
-
P_\ell(\cos\vartheta)
(\cos\vartheta)^{2j+\ell}
-
P_\ell(\sin\vartheta)
(\sin\vartheta)^{2j+\ell}
\Big]
\,d\vartheta .
\end{aligned}
\label{eq:maxwell_standard_spectrum_from_Fock}
\end{equation}
This is the standard Maxwell-molecule spectrum, here obtained directly
from the factorized Fock collision vertex.  The explicit definition
\eqref{eq:maxwell_symmetrized_angular_kernel} fixes the overall angular convention; equivalently, one may use the averaged kernel $b_{\rm sym}=b_+/2$, in which case the prefactor in Eq.~\eqref{eq:maxwell_standard_spectrum_from_Fock} is $8\pi n$.  The factor $n$ reflects the local equilibrium density used in the present normalization.

\paragraph{Collision invariants as zero modes.}

The null space follows directly from
Eq.~\eqref{eq:maxwell_complete_Fock_spectrum}.  For density,
\begin{equation}
(j,\ell)=(0,0),
\end{equation}
and the integrand vanishes identically:
\begin{equation}
1+1-P_0(c)-P_0(s)=0.
\end{equation}
For momentum,
\begin{equation}
(j,\ell)=(0,1),
\end{equation}
and, since $P_1(x)=x$,
\begin{equation}
1-cP_1(c)-sP_1(s)
=
1-c^2-s^2
=
0.
\end{equation}
Finally, the scalar energy mode corresponds to
\begin{equation}
(j,\ell)=(1,0),
\end{equation}
for which again
\begin{equation}
1-c^2-s^2=0.
\end{equation}
Hence
\begin{equation}
\lambda_{00}
=
\lambda_{01}
=
\lambda_{10}
=
0.
\label{eq:maxwell_spectral_zero_modes}
\end{equation}
Their degeneracies $1$, $3$, and $1$ give the scalar, vector, and
scalar-quadratic zero modes.  These are precisely the modes that will be
identified below, independently through moment intertwining, with mass,
momentum, and energy conservation.

\paragraph{Hydrodynamic eigenvalues as spectral entries.}

The two collision rates required by the first Chapman--Enskog
approximation are now immediate special cases of the complete spectrum.

We retain the angular moment $\Omega_2$ defined in Eq.~\eqref{eq:maxwell_Omega2}; $\beta_0$ was defined in Eq.~\eqref{eq:maxwell_beta0_def}.

The traceless stress sector is
\begin{equation}
\mathscr H_{2,\mathrm{tl}}
=
\mathscr K_{02}.
\end{equation}
Using
\begin{equation}
P_2(x)=\frac12(3x^2-1)
\end{equation}
and
\begin{equation}
c^2P_2(c)+s^2P_2(s)
=
\frac14(1+3\eta^2),
\end{equation}
Eq.~\eqref{eq:maxwell_complete_Fock_spectrum} gives
\begin{equation}
\widehat{\mathcal L}_2
P_{2,\mathrm{tl}}
=
-\lambda_\pi P_{2,\mathrm{tl}},
\qquad
\lambda_\pi
=
\lambda_{02}
=
\frac34n\Omega_2.
\label{eq:maxwell_lambda_pi}
\end{equation}

The heat-flux vector sector is
\begin{equation}
\mathscr H_{3,\mathrm{vec}}
=
\mathscr K_{11}.
\end{equation}
Since $P_1(x)=x$,
\begin{equation}
c^3P_1(c)+s^3P_1(s)
=
c^4+s^4
=
\frac12(1+\eta^2),
\end{equation}
and therefore
\begin{equation}
\widehat{\mathcal L}_3
P_{3,\mathrm{vec}}
=
-\lambda_q P_{3,\mathrm{vec}},
\qquad
\lambda_q
=
\lambda_{11}
=
\frac12n\Omega_2.
\label{eq:maxwell_lambda_q}
\end{equation}
Consequently,
\begin{equation}
\frac{\lambda_q}{\lambda_\pi}
=
\frac{\lambda_{11}}{\lambda_{02}}
=
\frac23.
\label{eq:maxwell_eigenvalue_ratio}
\end{equation}

The hydrodynamic relaxation rates are thus not isolated low-order
calculations.  They are two members of the complete spectral family
generated by the same factorized Fock vertex:
\begin{equation}
\text{factorized nonlinear vertex}
\;\longrightarrow\;
\text{exact one-vacuum operator}
\;\longrightarrow\;
\{\lambda_{j\ell}\}
\;\longrightarrow\;
\lambda_\pi,\lambda_q.
\label{eq:maxwell_spectrum_to_hydrodynamics}
\end{equation}
The remaining irreducible component of the cubic level is
$\mathscr K_{03}$; it possesses its own eigenvalue
$\lambda_{03}$ but is not selected by the first-order propagation
source.


\subsection{Moment intertwining, adaptation, and collision invariants}
\label{subsec:maxwell_moment_compatibility}
\label{subsec:maxwell_collision_invariants}

For a polynomial observable $m(\bm v)$, the reconstruction
$f=W\mathcal R_P|\chi\rangle$ pulls the physical moment functional back to
\begin{equation}
 \mathcal M_m[f]
 :=\int_{\mathbb R^d}m(\bm v)f(\bm v)\,d^dv
 =\langle0|m(\widehat{\bm v})|\chi\rangle.
\label{eq:maxwell_moment_intertwining}
\end{equation}
Here velocity multiplication is represented by
\begin{equation}
\widehat v_\alpha
=u_\alpha I+\sqrt\theta(\hat a_\alpha+\hat a_\alpha^\dagger).
\label{eq:maxwell_vhat}
\end{equation}
Thus the density, momentum, and energy moments determine a moment-adapted local frame by
\begin{equation}
 n=n_f,
 \qquad
 \bm u=\bm u_f,
 \qquad
 \theta=\theta_f,
\label{eq:maxwell_moment_adapted_frame_def}
\end{equation}
which is equivalent to
\begin{equation}
\chi_{\bm0}=n,
\qquad
\chi_{\bm e_\alpha}=0,
\qquad
\sum_{\alpha=1}^{d}\chi_{2\bm e_\alpha}=0,
\label{eq:maxwell_compatibility_chi}
\end{equation}
or, intrinsically,
\begin{equation}
 |\chi^{[0]}\rangle=n|0\rangle,
 \qquad
 |\chi^{[1]}\rangle=0,
 \qquad
 \langle0|\hat a_\alpha\hat a_\alpha|\chi^{[2]}\rangle=0.
\label{eq:maxwell_compatibility_graded}
\end{equation}
These are compatibility identities produced by moment matching, not restrictions imposed on the kinetic state.

The same intertwining transfers the standard elastic collision invariants directly to Fock space.  Since
\begin{equation}
 \int m(\bm v)Q(f,f)\,d^dv=0,
 \qquad m=1,v_\alpha,\bm v^2,
\label{eq:maxwell_physical_collision_invariants}
\end{equation}
and
\begin{equation}
 \int m(\bm v)Q(f,f)\,d^dv
 =\langle0|m(\widehat{\bm v})
 \widehat{\mathcal Q}(|\chi\rangle,|\chi\rangle)\rangle,
\label{eq:maxwell_collision_rate_intertwining}
\end{equation}
the centered moment-adapted forms are simply
\begin{equation}
\begin{aligned}
\langle0|\widehat{\mathcal Q}(|\chi\rangle,|\chi\rangle)\rangle&=0,\\
\langle0|\hat a_\alpha\widehat{\mathcal Q}(|\chi\rangle,|\chi\rangle)\rangle&=0,\\
\langle0|\hat a_\alpha\hat a_\alpha
\widehat{\mathcal Q}(|\chi\rangle,|\chi\rangle)\rangle&=0.
\end{aligned}
\label{eq:maxwell_collision_invariants_summary}
\end{equation}
Hence the collision dynamics preserves the vacuum, vector, and scalar-trace compatibility directions.  This is all that will be needed below from the conservation laws.

\subsection{Compatible low-order collision hierarchy}
\label{subsec:maxwell_compatible_low_order}

We can now combine three results that have been established independently: the exact grading of the bilinear collision vertex, the moment-adapted compatibility identities of the incoming state, and the collision invariants of the outgoing state.  Compatibility gives $|\chi^{[1]}\rangle=0$, so the unreduced hierarchy \eqref{eq:maxwell_first_levels_unreduced} becomes
\begin{align}
|\mathcal Q^{[2]}[\chi]\rangle
&=
\widehat{\mathcal Q}_{0,2}(|\chi^{[0]}\rangle,|\chi^{[2]}\rangle)
+
\widehat{\mathcal Q}_{2,0}(|\chi^{[2]}\rangle,|\chi^{[0]}\rangle),
\label{eq:maxwell_Q2_compatible_structural}
\\
|\mathcal Q^{[3]}[\chi]\rangle
&=
\widehat{\mathcal Q}_{0,3}(|\chi^{[0]}\rangle,|\chi^{[3]}\rangle)
+
\widehat{\mathcal Q}_{3,0}(|\chi^{[3]}\rangle,|\chi^{[0]}\rangle),
\label{eq:maxwell_Q3_compatible_structural}
\end{align}
while
\begin{align}
|\mathcal Q^{[4]}[\chi]\rangle
={}&
\widehat{\mathcal Q}_{0,4}(|\chi^{[0]}\rangle,|\chi^{[4]}\rangle)
+
\widehat{\mathcal Q}_{2,2}(|\chi^{[2]}\rangle,|\chi^{[2]}\rangle)
+
\widehat{\mathcal Q}_{4,0}(|\chi^{[4]}\rangle,|\chi^{[0]}\rangle).
\label{eq:maxwell_Q4_compatible_structural}
\end{align}
Because $|\chi^{[0]}\rangle=n|0\rangle$, the first two formulas are linear in the nonequilibrium amplitudes $|\chi^{[2]}\rangle$ and $|\chi^{[3]}\rangle$.  No separate linearization of the full nonlinear collision integral has been used.  At level four, by contrast, the term
\begin{equation}
\widehat{\mathcal Q}_{2,2}:
\mathscr H_2\times\mathscr H_2
\longrightarrow
\mathscr H_4
\label{eq:maxwell_first_nonlinear_vertex}
\end{equation}
is the first genuinely nonlinear coupling between compatible nonequilibrium sectors.  The collision invariants additionally give
\begin{equation}
 |\mathcal Q^{[0]}[\chi]\rangle=0,
 \qquad
 |\mathcal Q^{[1]}[\chi]\rangle=0,
 \qquad
 \operatorname{tr}|\mathcal Q^{[2]}[\chi]\rangle=0,
\end{equation}
showing that the collision dynamics is tangent to the moment-adapted manifold in its density, momentum, and scalar-energy directions.

The exact linearity of the Maxwell-molecule pressure-tensor and energy-flux production laws is classical, with its origin in Maxwell's moment method and its systematic development by Ikenberry and Truesdell \cite{Maxwell1867DynamicalTheory,IkenberryTruesdell1956I,Truesdell1956II,TruesdellMuncaster1980Maxwell}.  The present grading gives a compact structural explanation of that fact: after compatibility removes $\mathscr H_1$, the second and third collision outputs contain one vacuum leg and one nonequilibrium leg only.  The first genuinely nonlinear compatible nonequilibrium interaction is therefore the $2+2\to4$ channel.

\subsection{First nonlinear compatible block in physical variables}
\label{subsec:maxwell_first_nonlinear_physical}

The channel in Eq.~\eqref{eq:maxwell_first_nonlinear_vertex} has a direct physical interpretation.  We give it explicitly in three dimensions before passing to the general triangular hierarchy.  The moment intertwining \eqref{eq:maxwell_moment_intertwining} identifies the traceless second-degree Hermite coefficient with the physical nonequilibrium stress,
\begin{equation}
\pi_{\alpha\beta}
:=m\int c_{\langle\alpha}c_{\beta\rangle}f\,d\bm v
=m\theta\,
\left\langle0\left|\hat a_{\langle\alpha}\hat a_{\beta\rangle}\right|\chi\right\rangle .
\label{eq:maxwell_stress_Fock_pairing}
\end{equation}
Here and below angular brackets denote symmetric trace-free projection.  Compatibility removes the scalar trace at level two, so the complete quadratic state is therefore
\begin{equation}
 |\chi^{[2]}\rangle
 =\frac{1}{2m\theta}\,
 \pi_{\alpha\beta}\hat a_\alpha^\dagger\hat a_\beta^\dagger|0\rangle,
 \qquad
 \pi_{\alpha\alpha}=0,
\label{eq:maxwell_stress_quadratic_state}
\end{equation}
and its Bargmann symbol is
\begin{equation}
 \Phi_2(\bm z)
 =\frac{1}{2m\theta}\pi_{\alpha\beta}z_\alpha z_\beta.
\label{eq:maxwell_stress_Bargmann_symbol}
\end{equation}
Thus $\pi_{\alpha\beta}/(m\theta)$ is the irreducible Hermite/Fock coefficient of the compatible level-$2$ state and the tensor coefficient of its Bargmann polynomial, while $\pi_{\alpha\beta}$ is the corresponding physical normalization.  For the contractions below, write
\begin{equation}
 (\pi^2)_{\alpha\beta}:=\pi_{\alpha\gamma}\pi_{\gamma\beta},
 \qquad
 \operatorname{tr}(\pi^2)=\pi_{\alpha\beta}\pi_{\alpha\beta}.
\label{eq:maxwell_pi_square_def}
\end{equation}

Since both incoming degrees are positive, the ordered loss term vanishes and
\begin{equation}
(\mathcal R_B\widehat{\mathcal Q}_{2,2})(\bm z)
=\int_{S^2}b(\eta)\,\Phi_2(\bm z_+)\Phi_2(\bm z_-)\,d\bm\sigma.
\label{eq:maxwell_Q22_Bargmann_start}
\end{equation}
Introduce
\begin{equation}
 \Omega_4:=\int_{S^2}b(\eta)(1-\eta^2)^2\,d\bm\sigma.
\label{eq:maxwell_Omega4}
\end{equation}
Evaluation of the axisymmetric fourth angular moment gives
\begin{equation}
\begin{aligned}
(\mathcal R_B\widehat{\mathcal Q}_{2,2})(\bm z)
={}&\frac{\Omega_4}{256m^2\theta^2}
 \operatorname{tr}(\pi^2)(\bm z^2)^2
-\frac{5\Omega_4}{128m^2\theta^2}\bm z^2\,
 (\pi^2)_{\alpha\beta}z_\alpha z_\beta\\
&+\frac{35\Omega_4-16\Omega_2}{512m^2\theta^2}
 (\pi_{\alpha\beta}z_\alpha z_\beta)^2.
\end{aligned}
\label{eq:maxwell_Q22_explicit}
\end{equation}
Rotational covariance is manifest: the three terms are precisely the scalar, rank-two, and rank-four pieces allowed in
$\operatorname{Sym}^2(\mathscr K_{02})$.

The corresponding level-$4$ coordinates are most naturally introduced as dimensionally scaled irreducible Hermite coefficients of $|\chi^{[4]}\rangle$.  Using the same moment intertwining as in Eq.~\eqref{eq:maxwell_stress_Fock_pairing}, they may be written simultaneously as Fock pairings and as their physical velocity-space representatives:
\begin{align}
M_{4|0}
&:=m\theta^2
\left\langle0\left|\hat a_\alpha\hat a_\alpha
\hat a_\beta\hat a_\beta\right|\chi\right\rangle
=m\int
\left(c^4-10\theta c^2+15\theta^2\right)f\,d\bm v,
\label{eq:maxwell_M40_def}\\
M_{2|\alpha\beta}
&:=m\theta^2
\left\langle0\left|\hat a_\gamma\hat a_\gamma
\hat a_{\langle\alpha}\hat a_{\beta\rangle}\right|\chi\right\rangle
=m\int
(c^2-7\theta)c_{\langle\alpha}c_{\beta\rangle}f\,d\bm v,
\label{eq:maxwell_M2ab_def}\\
M_{0|\alpha\beta\gamma\delta}
&:=m\theta^2
\left\langle0\left|
\hat a_{\langle\alpha}\hat a_\beta\hat a_\gamma\hat a_{\delta\rangle}
\right|\chi\right\rangle
=m\int
c_{\langle\alpha}c_\beta c_\gamma c_{\delta\rangle}f\,d\bm v.
\label{eq:maxwell_M0abcd_def}
\end{align}
They are respectively the $\mathscr K_{20}$, $\mathscr K_{12}$, and $\mathscr K_{04}$ components in
\begin{equation}
 \mathscr H_4
 =\mathscr K_{20}\oplus\mathscr K_{12}\oplus\mathscr K_{04}.
\label{eq:maxwell_H4_irreducible_split}
\end{equation}
The passage from Eq.~\eqref{eq:maxwell_Q22_explicit} to physical moment production is therefore simply projection of the quartic Bargmann/Fock coefficient onto these three Hermite bras.

For homogeneous collision dynamics, $n$, $\bm u$, and $\theta$ are fixed by the collision invariants.  Combining the one-vacuum relaxation with the $2+2\to4$ source gives the first closed nonlinear triangular subsystem,
\begin{align}
\partial_t\pi_{\alpha\beta}
&=-\lambda_{02}\pi_{\alpha\beta},
\label{eq:maxwell_stress_homogeneous_evolution}\\
\partial_t M_{4|0}
&=-\lambda_{20}M_{4|0}
-\frac{\Omega_2}{2m}\,\operatorname{tr}(\pi^2),
\label{eq:maxwell_M40_evolution}\\
\partial_t M_{2|\alpha\beta}
&=-\lambda_{12}M_{2|\alpha\beta}
-\frac{\Omega_2}{2m}
(\pi^2)_{\langle\alpha\beta\rangle},
\label{eq:maxwell_M2ab_evolution}\\
\partial_t M_{0|\alpha\beta\gamma\delta}
&=-\lambda_{04}M_{0|\alpha\beta\gamma\delta}
+\frac{3(35\Omega_4-16\Omega_2)}{64m}
\pi_{\langle\alpha\beta}\pi_{\gamma\delta\rangle}.
\label{eq:maxwell_M0abcd_evolution}
\end{align}
In the rank-two source we use the explicit three-dimensional STF contraction
\begin{equation}
(\pi^2)_{\langle\alpha\beta\rangle}
:=(\pi^2)_{\alpha\beta}
-\frac13\delta_{\alpha\beta}\,
\operatorname{tr}(\pi^2).
\label{eq:maxwell_pi_squared_STF}
\end{equation}
The relevant entries of the complete spectrum are
\begin{equation}
\lambda_{02}=\lambda_\pi=\frac34n\Omega_2,
\qquad
\lambda_{20}=\frac12n\Omega_2,
\qquad
\lambda_{12}=\frac78n\Omega_2,
\qquad
\lambda_{04}=\frac{7n}{64}(16\Omega_2-5\Omega_4).
\label{eq:maxwell_level4_rates}
\end{equation}
Thus the first nonlinear compatible interaction has the concrete content
\begin{equation}
\text{stress}\times\text{stress}
\longrightarrow
\text{fourth-order non-Gaussian Hermite moments}.
\label{eq:maxwell_stress_squared_to_fourth}
\end{equation}
The triangularity is now visible directly in physical variables: stress relaxes autonomously, its quadratic self-interaction forces the three irreducible fourth-order sectors, and level four cannot feed back into level two.  Equations~\eqref{eq:maxwell_stress_homogeneous_evolution}--\eqref{eq:maxwell_M0abcd_evolution} are the first explicit physical instance of the general strictly triangular structure proved next.

\section{Exact triangularity and nonlinear relaxation of finite Fock jets}
\label{subsec:maxwell_finite_jet_relaxation}

The compatible low-order hierarchy above is the beginning of a general
strictly triangular structure.  Triangular spectral systems for the
homogeneous Maxwell equation have been used previously, most explicitly in
the Laguerre--spherical construction of
Glangetas, Li, and Xu \cite{GlangetasLiXu2016Triangular}, where control of
the complete infinite hierarchy requires weighted estimates and a
smallness condition in the chosen function space.  Here we isolate a
different consequence of the intrinsic grading: at any fixed Fock order,
the hierarchy closes exactly from below.

For Maxwell molecules this gives a nonperturbative finite-dimensional
relaxation statement stronger than the exact linearity of the second- and
third-level production laws: every finite collection of compatible low
Fock levels forms an exact autonomous factor of the nonlinear collision
dynamics and relaxes globally to the corresponding Maxwellian jet.  No
smallness assumption is needed at fixed jet order.

We formulate this statement for the spatially homogeneous collision
dynamics in three dimensions, where the complete one-vacuum spectrum was
obtained in Sec.~\ref{subsec:maxwell_linearized_spectrum}.  The argument
itself uses only the exact grading, compatibility, and a positive spectral
gap on the non-invariant sectors.

\paragraph{Compatible perturbation and exact triangular hierarchy.}

In the homogeneous problem the physical density, mean velocity, and
temperature are constants of the collision dynamics.  Hence, once the
moment-adapted frame has been fixed, the corresponding Maxwellian vacuum
does not move in time.  Write
\begin{equation}
|\chi(t)\rangle
=
n|0\rangle+|h(t)\rangle .
\label{eq:maxwell_jet_perturbation}
\end{equation}
The moment-adapted compatibility conditions and the collision invariants
imply
\begin{equation}
|h^{[0]}(t)\rangle=0,
\qquad
|h^{[1]}(t)\rangle=0,
\label{eq:maxwell_jet_compatibility_01}
\end{equation}
while the scalar trace component of
$|h^{[2]}(t)\rangle$ vanishes.  Thus
\begin{equation}
|h^{[2]}(t)\rangle
\in
\mathscr H_{2,\mathrm{tl}}
=
\mathscr K_{02}.
\label{eq:maxwell_jet_compatibility_2}
\end{equation}

Since
\begin{equation}
\widehat{\mathcal Q}(n|0\rangle,n|0\rangle)=0,
\end{equation}
bilinearity gives
\begin{equation}
\frac{d}{dt}|h\rangle
=
\widehat{\mathcal L}|h\rangle
+
\widehat{\mathcal Q}(|h\rangle,|h\rangle),
\label{eq:maxwell_jet_full_perturbation_equation}
\end{equation}
where $\widehat{\mathcal L}$ is the exact one-vacuum operator defined in
Eq.~\eqref{eq:maxwell_full_linearized_Fock_operator}.

Projecting Eq.~\eqref{eq:maxwell_jet_full_perturbation_equation} onto
$\mathscr H_r$ and using the exact grading
\eqref{eq:maxwell_exact_grading} gives
\begin{equation}
\boxed{
\frac{d}{dt}|h^{[r]}\rangle
=
\widehat{\mathcal L}_r|h^{[r]}\rangle
+
\sum_{\substack{p+q=r\\p,q\geq2}}
\widehat{\mathcal Q}_{p,q}
\left(
|h^{[p]}\rangle,
|h^{[q]}\rangle
\right),
\qquad r\geq2.
}
\label{eq:maxwell_jet_exact_triangular_hierarchy}
\end{equation}
The vacuum contributions $0+r$ and $r+0$ have been collected into
$\widehat{\mathcal L}_r$, whereas all terms containing a first-level leg
vanish by Eq.~\eqref{eq:maxwell_jet_compatibility_01}.

The decisive observation is that every nonlinear term on the right-hand
side of Eq.~\eqref{eq:maxwell_jet_exact_triangular_hierarchy} involves
strictly lower levels.  Indeed,
\begin{equation}
p+q=r,
\qquad
p,q\geq2
\end{equation}
implies
\begin{equation}
p,q\leq r-2.
\label{eq:maxwell_jet_strict_triangularity}
\end{equation}
Thus the compatible Maxwell hierarchy is not merely graded; after the
vacuum contribution is separated, it is strictly triangular with respect
to the total Fock degree.

For example,
\begin{align}
\frac{d}{dt}|h^{[2]}\rangle
&=
\widehat{\mathcal L}_2|h^{[2]}\rangle,
\label{eq:maxwell_jet_level2}
\\
\frac{d}{dt}|h^{[3]}\rangle
&=
\widehat{\mathcal L}_3|h^{[3]}\rangle,
\label{eq:maxwell_jet_level3}
\\
\frac{d}{dt}|h^{[4]}\rangle
&=
\widehat{\mathcal L}_4|h^{[4]}\rangle
+
\widehat{\mathcal Q}_{2,2}
\left(
|h^{[2]}\rangle,
|h^{[2]}\rangle
\right),
\label{eq:maxwell_jet_level4}
\\
\frac{d}{dt}|h^{[5]}\rangle
&=
\widehat{\mathcal L}_5|h^{[5]}\rangle
+
\widehat{\mathcal Q}_{2,3}
\left(
|h^{[2]}\rangle,
|h^{[3]}\rangle
\right)
+
\widehat{\mathcal Q}_{3,2}
\left(
|h^{[3]}\rangle,
|h^{[2]}\rangle
\right).
\label{eq:maxwell_jet_level5}
\end{align}
Thus the $2+2\to4$ block identified in
Eq.~\eqref{eq:maxwell_first_nonlinear_vertex} is not only the first
nonlinear compatible channel; it is the first member of an entire
strictly upward nonlinear cascade.

\paragraph{Finite jets as exact autonomous factors.}

For $R\geq2$, define the compatible finite-jet space
\begin{equation}
\mathscr J_R
:=
\mathscr K_{02}
\oplus
\bigoplus_{r=3}^{R}\mathscr H_r .
\label{eq:maxwell_finite_jet_space}
\end{equation}
The quadratic level is written separately only because moment adaptation
removes its scalar trace component, leaving $\mathscr K_{02}$; all levels
of degree $r\geq3$ contain no collision-invariant directions and therefore
enter in full.

For a compatible state, define its $R$th jet by
\begin{equation}
J_Rh
:=
\left(
h^{[2]},
h^{[3]},
\ldots,
h^{[R]}
\right)
\in\mathscr J_R.
\label{eq:maxwell_finite_jet_def}
\end{equation}

It is important to distinguish this construction from an invariant
finite-dimensional truncation.  The subspace $\mathscr J_R$ is generally
not invariant under the full nonlinear collision map: interactions among
retained levels may generate levels larger than $R$.  However, by
Eq.~\eqref{eq:maxwell_exact_grading}, such higher levels can never feed
back into lower ones.  Consequently the projection of the collision
dynamics onto levels $r\leq R$ depends only on those levels.

More explicitly, if $\Pi_R$ denotes the compatible projection onto
$\mathscr J_R$, then for every compatible $|h\rangle$,
\begin{equation}
\Pi_R
\widehat{\mathcal Q}
\left(
n|0\rangle+|h\rangle,
n|0\rangle+|h\rangle
\right)
=
\Pi_R
\widehat{\mathcal Q}
\left(
n|0\rangle+\Pi_R|h\rangle,
n|0\rangle+\Pi_R|h\rangle
\right).
\label{eq:maxwell_finite_jet_exact_factor}
\end{equation}
Indeed, an incoming component of degree larger than $R$ can contribute
only to an outgoing component of at least the same degree and therefore
cannot enter the projection on the left-hand side.

Equation~\eqref{eq:maxwell_finite_jet_exact_factor} shows that a finite
Fock jet is an \emph{exact autonomous factor} of the Maxwell collision
hierarchy.  This is different from a moment closure: no assumption about
unretained higher moments is required to determine the evolution of the
retained jet.

\paragraph{Uniform linear relaxation on the compatible sectors.}

We next obtain a common positive decay rate for all compatible levels.
The irreducible decomposition
\eqref{eq:maxwell_Fock_SO3_decomposition} and the complete spectrum
\eqref{eq:maxwell_complete_Fock_spectrum} give
\begin{equation}
\widehat{\mathcal L}
|j,\ell,m\rangle
=
-\lambda_{j\ell}|j,\ell,m\rangle,
\qquad
r=2j+\ell.
\end{equation}
For positive degree,
\begin{equation}
\lambda_{j\ell}
=
n\int_{S^2}
b(\eta)
\left[
1
-
c(\eta)^rP_\ell(c(\eta))
-
s(\eta)^rP_\ell(s(\eta))
\right]
\,d\bm\sigma,
\label{eq:maxwell_jet_spectrum_positive_degree}
\end{equation}
where $c$ and $s$ are defined in
Eq.~\eqref{eq:maxwell_half_angle_cs} and satisfy
\begin{equation}
c^2+s^2=1,
\qquad
0\leq c,s\leq1.
\end{equation}

For $0\leq x\leq1$,
\begin{equation}
P_\ell(x)\leq1.
\end{equation}
Hence, for every level of degree $r\geq3$,
\begin{align}
\lambda_{j\ell}
&\geq
n\int_{S^2}
b(\eta)
\left[
1-c^r-s^r
\right]
\,d\bm\sigma
\nonumber\\
&\geq
n\int_{S^2}
b(\eta)
\left[
1-c^3-s^3
\right]
\,d\bm\sigma .
\label{eq:maxwell_jet_gap_bound}
\end{align}
Define
\begin{equation}
\lambda_{\ge 3}
:=
n\int_{S^2}
b(\eta)
\left[
1-c(\eta)^3-s(\eta)^3
\right]
\,d\bm\sigma .
\label{eq:maxwell_jet_lambda_ng}
\end{equation}
Because
\begin{equation}
c^3+s^3<1
\end{equation}
whenever $0<c,s<1$, one has
\begin{equation}
\lambda_{\ge 3}>0
\end{equation}
for any nontrivial nonnegative angular-cutoff kernel $b$.

The only compatible degree-two sector is the traceless stress sector
$\mathscr K_{02}$, whose relaxation rate was obtained in
Eq.~\eqref{eq:maxwell_lambda_pi}:
\begin{equation}
\lambda_\pi
=
\lambda_{02}
=
\frac34n\Omega_2>0.
\end{equation}
We may therefore introduce the common compatible relaxation rate
\begin{equation}
\lambda_*
:=
\min
\left\{
\lambda_\pi,
\lambda_{\ge 3}
\right\}
>0.
\label{eq:maxwell_jet_uniform_gap}
\end{equation}

Since the irreducible decomposition is orthogonal in the Fock metric and
$\widehat{\mathcal L}_r$ acts by the real scalars
$-\lambda_{j\ell}$ on its irreducible components,
Eq.~\eqref{eq:maxwell_jet_uniform_gap} gives
\begin{equation}
\left\|
e^{t\widehat{\mathcal L}_r}u
\right\|_{\mathscr H_r}
\leq
e^{-\lambda_*t}
\|u\|_{\mathscr H_r},
\qquad
t\geq0,
\label{eq:maxwell_jet_semigroup_bound}
\end{equation}
for every compatible $u$ in every level $r\geq2$.

Thus the decay rate used below can be chosen independently of the jet
order $R$.

\paragraph{Finite-level nonlinear estimates.}

For fixed $p,q\geq2$, the collision block
\begin{equation}
\widehat{\mathcal Q}_{p,q}:
\mathscr H_p\times\mathscr H_q
\longrightarrow
\mathscr H_{p+q}
\end{equation}
acts between finite-dimensional Hilbert spaces.  Define its bilinear
operator norm by
\begin{equation}
C_{p,q}
:=
\sup_{\substack{u\in\mathscr H_p,\,
v\in\mathscr H_q\\u\neq0,\;v\neq0}}
\frac{
\left\|
\widehat{\mathcal Q}_{p,q}(u,v)
\right\|_{\mathscr H_{p+q}}
}{
\|u\|_{\mathscr H_p}
\|v\|_{\mathscr H_q}
}.
\label{eq:maxwell_jet_Cpq_def}
\end{equation}
For $p,q\geq2$ the ordered loss contribution in
Eq.~\eqref{eq:maxwell_sector_collision_block_explicit} vanishes, but this
observation is not required for the estimate itself.

The constant $C_{p,q}$ is finite and can be calculated directly from the
collision tensor already constructed above.  Indeed, using
Eq.~\eqref{eq:maxwell_sector_collision_block_basis}, write
\begin{equation}
u
=
\sum_{|\bm m|=p}u_{\bm m}|\bm m\rangle,
\qquad
v
=
\sum_{|\bm l|=q}v_{\bm l}|\bm l\rangle .
\end{equation}
Then
\begin{equation}
\widehat{\mathcal Q}_{p,q}(u,v)
=
\sum_{|\bm n|=p+q}
\left(
\sum_{\substack{|\bm m|=p\\|\bm l|=q}}
\mathcal Q_{\bm n;\bm m\bm l}
u_{\bm m}v_{\bm l}
\right)
|\bm n\rangle .
\label{eq:maxwell_jet_Qpq_components}
\end{equation}
Consequently,
\begin{equation}
C_{p,q}
\leq
C_{p,q}^{\mathrm{HS}},
\label{eq:maxwell_jet_Cpq_HS_bound}
\end{equation}
where
\begin{equation}
\left(
C_{p,q}^{\mathrm{HS}}
\right)^2
:=
\sum_{\substack{|\bm n|=p+q\\
|\bm m|=p\\
|\bm l|=q}}
\left|
\mathcal Q_{\bm n;\bm m\bm l}
\right|^2 .
\label{eq:maxwell_jet_Cpq_HS}
\end{equation}
This follows by viewing the rank-three collision tensor as a linear map
from
$\mathscr H_p\otimes\mathscr H_q$ to $\mathscr H_{p+q}$ and bounding its
operator norm by its Hilbert--Schmidt norm.  Therefore
\begin{equation}
\left\|
\widehat{\mathcal Q}_{p,q}(u,v)
\right\|_{\mathscr H_{p+q}}
\leq
C_{p,q}
\|u\|_{\mathscr H_p}
\|v\|_{\mathscr H_q}
\leq
C_{p,q}^{\mathrm{HS}}
\|u\|_{\mathscr H_p}
\|v\|_{\mathscr H_q}.
\label{eq:maxwell_jet_bilinear_bound}
\end{equation}
Thus no additional continuity assumption on the full completed Fock space
is needed for the finite-jet argument.  All constants entering the proof
are finite-level quantities determined by the explicit collision
coefficients
$\mathcal Q_{\bm n;\bm m\bm l}$.

\begin{theorem}[Global nonlinear relaxation of finite compatible Fock jets]
\label{thm:maxwell_finite_jet_relaxation}

Let $b\geq0$ be a nontrivial angular-cutoff Maxwell kernel and let
$R\geq2$.  Consider homogeneous collision dynamics in the moment-adapted
frame with fixed density, mean velocity, and temperature.  Let
\begin{equation}
J_Rh(0)
=
\left(
h^{[2]}(0),\ldots,h^{[R]}(0)
\right)
\in\mathscr J_R
\end{equation}
be arbitrary.

Then:

\begin{enumerate}
\item
The equations
\eqref{eq:maxwell_jet_exact_triangular_hierarchy} for
$2\leq r\leq R$ form an exact autonomous finite-dimensional system.

\item
This finite-jet system has a unique global solution for every initial
jet, without any smallness assumption.

\item
Every level converges exponentially to zero.  More precisely, there are
explicit finite constants $A_r$, depending polynomially on the initial
jet, such that
\begin{equation}
\|h^{[r]}(t)\|_{\mathscr H_r}
\leq
A_r e^{-\lambda_*t},
\qquad
2\leq r\leq R,
\label{eq:maxwell_jet_shell_decay}
\end{equation}
where $\lambda_*>0$ is the $R$-independent rate defined in
Eq.~\eqref{eq:maxwell_jet_uniform_gap}.
\end{enumerate}

The constants $A_r$ may be defined recursively by
\begin{equation}
A_2
=
\|h^{[2]}(0)\|,
\qquad
A_3
=
\|h^{[3]}(0)\|,
\label{eq:maxwell_jet_A23}
\end{equation}
and, for $r\geq4$,
\begin{equation}
A_r
=
\|h^{[r]}(0)\|
+
\frac{1}{\lambda_*}
\sum_{\substack{p+q=r\\p,q\geq2}}
C_{p,q}A_pA_q.
\label{eq:maxwell_jet_Ar_recursion}
\end{equation}
Consequently, with the finite-jet Hilbert norm
\begin{equation}
\|J_Rh\|_{\mathscr J_R}^2
:=
\sum_{r=2}^{R}
\|h^{[r]}\|_{\mathscr H_r}^2,
\label{eq:maxwell_jet_norm}
\end{equation}
one has
\begin{equation}
\boxed{
\|J_Rh(t)\|_{\mathscr J_R}
\leq
\mathcal A_R(h(0))
e^{-\lambda_*t},
}
\label{eq:maxwell_jet_global_decay}
\end{equation}
where
\begin{equation}
\mathcal A_R(h(0))
:=
\left(
\sum_{r=2}^{R}A_r^2
\right)^{1/2}.
\label{eq:maxwell_jet_AR_def}
\end{equation}
\end{theorem}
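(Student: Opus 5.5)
The plan is an induction on the level $r$, using the strictly triangular hierarchy \eqref{eq:maxwell_jet_exact_triangular_hierarchy} together with the variation-of-constants formula on each finite-dimensional level.

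\textbf{Autonomy and existence.} Item 1 is the exact-factor identity \eqref{eq:maxwell_finite_jet_exact_factor}. For $2\le r\le R$, the right-hand side of \eqref{eq:maxwell_jet_exact_triangular_hierarchy} involves only $h^{[r]}$ and $h^{[p]},h^{[q]}$ with $p,q\le r-2$, all of which lie in the jet. For existence, note that the system is polynomial on a finite-dimensional space, so it has a unique local solution. One must also check that the compatible subspace is preserved. At level two, $\widehat{\mathcal L}_2$ is scalar on $\mathscr K_{02}$, so the trace component stays zero.

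Global existence then follows by solving the levels successively rather than from a priori energy bounds. Levels two and three are linear, $h^{[r]}(t)=e^{t\widehat{\mathcal L}_r}h^{[r]}(0)$. For $r\ge4$, once the lower levels are known continuous functions on $[0,\infty)$, level $r$ solves a linear inhomogeneous ODE with a continuous forcing. Its Duhamel solution
\begin{equation*}
h^{[r]}(t)=e^{t\widehat{\mathcal L}_r}h^{[r]}(0)+\int_0^t e^{(t-s)\widehat{\mathcal L}_r}\sum_{\substack{p+q=r\\p,q\ge2}}\widehat{\mathcal Q}_{p,q}\bigl(h^{[p]}(s),h^{[q]}(s)\bigr)\,ds
\end{equation*}
exists globally and is unique. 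This proves item 2 with no smallness assumption.

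\textbf{Decay.} We induct on $r$. For $r=2,3$, the semigroup bound \eqref{eq:maxwell_jet_semigroup_bound} gives $\|h^{[r]}(t)\|\le e^{-\lambda_*t}\|h^{[r]}(0)\|=A_re^{-\lambda_*t}$. For $r\ge4$, assume the bound for all $p,q<r$. Then \eqref{eq:maxwell_jet_semigroup_bound} and \eqref{eq:maxwell_jet_bilinear_bound} give
\begin{equation*}
\|h^{[r]}(t)\|\le e^{-\lambda_*t}\|h^{[r]}(0)\|+\sum C_{p,q}A_pA_q\int_0^t e^{-\lambda_*(t-s)}e^{-2\lambda_*s}\,ds .
\end{equation*}
The integral equals $e^{-\lambda_*t}(1-e^{-\lambda_*t})/\lambda_*\le e^{-\lambda_*t}/\lambda_*$, which yields exactly the recursion \eqref{eq:maxwell_jet_Ar_recursion}. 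By induction each $A_r$ is a polynomial in the initial level norms. Squaring and summing over $2\le r\le R$ gives \eqref{eq:maxwell_jet_global_decay}.

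\textbf{Main obstacle.} The delicate step is the semigroup bound on $\mathscr H_r$ for compatible data at $r=2$. The trace direction $\mathscr K_{10}$ has $\lambda_{10}=0$, so the bound holds only after restricting to $\mathscr K_{02}$.

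The bound also needs the irreducible decomposition to be orthogonal in the Fock inner product, so that a scalar action on each block gives a contraction with rate $\lambda_*$. I would justify this orthogonality by invariance of the Fock inner product under the unitary $SO(3)$ lifts $\Gamma(R)$: distinct $\ell$ sectors are then orthogonal, and multiplicity-freeness handles the rest.

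A secondary point is to note that the nonlinear forcing decays at rate $2\lambda_*$, which strictly exceeds $\lambda_*$. This gap is what prevents secular factors of the form $te^{-\lambda_*t}$ in the Duhamel integral.
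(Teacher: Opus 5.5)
Your proposal is correct and follows essentially the same route as the paper's proof. Both arguments rest on strict triangularity, solve level by level with Duhamel's formula and the uniform semigroup bound, obtain the recursion for $A_r$ from the elementary convolution estimate, and sum squares to get the jet bound. Your additional remarks fill in points the paper asserts without comment, and they do not change the argument: you note that the level-two trace stays zero, and you justify orthogonality of the $\mathscr K_{j\ell}$ through unitarity of $\Gamma(R)$ together with multiplicity-freeness.
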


\begin{proof}

The exact autonomous character follows from
Eqs.~\eqref{eq:maxwell_jet_exact_triangular_hierarchy} and
\eqref{eq:maxwell_jet_strict_triangularity}.  The equation for level $r$
contains its own amplitude only through the linear term
$\widehat{\mathcal L}_rh^{[r]}$; every nonlinear forcing term depends only
on levels at most $r-2$.

We prove global existence and
Eq.~\eqref{eq:maxwell_jet_shell_decay} simultaneously by induction over
the level number.

For $r=2$ and $r=3$, the nonlinear sum in
Eq.~\eqref{eq:maxwell_jet_exact_triangular_hierarchy} is empty.  Therefore
\begin{equation}
|h^{[r]}(t)\rangle
=
e^{t\widehat{\mathcal L}_r}
|h^{[r]}(0)\rangle,
\qquad r=2,3,
\end{equation}
and Eq.~\eqref{eq:maxwell_jet_semigroup_bound} gives
\begin{equation}
\|h^{[r]}(t)\|
\leq
\|h^{[r]}(0)\|e^{-\lambda_*t}
=
A_re^{-\lambda_*t}.
\end{equation}
In particular, these two levels exist globally.

Now let $r\geq4$ and assume that all levels $2,\ldots,r-1$ have already
been constructed globally and satisfy
\begin{equation}
\|h^{[p]}(t)\|
\leq
A_pe^{-\lambda_*t},
\qquad
2\leq p<r.
\label{eq:maxwell_jet_induction_hypothesis}
\end{equation}
Because of strict triangularity, every pair $(p,q)$ occurring in the
nonlinear source of the $r$th equation satisfies
\begin{equation}
p,q<r.
\end{equation}
The induction hypothesis and
Eq.~\eqref{eq:maxwell_jet_bilinear_bound} therefore imply
\begin{align}
&
\left\|
\widehat{\mathcal Q}_{p,q}
\left(
h^{[p]}(t),h^{[q]}(t)
\right)
\right\|
\leq
C_{p,q}
A_pA_q
e^{-2\lambda_*t}.
\label{eq:maxwell_jet_source_decay}
\end{align}

The $r$th equation is thus a finite-dimensional linear inhomogeneous
equation with a globally defined continuous forcing.  Its unique solution
is given by the Duhamel formula
\begin{equation}
\begin{aligned}
|h^{[r]}(t)\rangle
={}&
e^{t\widehat{\mathcal L}_r}
|h^{[r]}(0)\rangle
+
\sum_{\substack{p+q=r\\p,q\geq2}}
\int_0^t
e^{(t-s)\widehat{\mathcal L}_r}
\widehat{\mathcal Q}_{p,q}
\left(
|h^{[p]}(s)\rangle,
|h^{[q]}(s)\rangle
\right)
\,ds .
\end{aligned}
\label{eq:maxwell_jet_duhamel}
\end{equation}
This formula already proves global existence of the $r$th level once all
lower levels are known globally.

Using Eqs.~\eqref{eq:maxwell_jet_semigroup_bound} and
\eqref{eq:maxwell_jet_source_decay}, we obtain
\begin{align}
\|h^{[r]}(t)\|
\leq{}&
e^{-\lambda_*t}
\|h^{[r]}(0)\|
\nonumber\\
&+
\sum_{\substack{p+q=r\\p,q\geq2}}
C_{p,q}A_pA_q
\int_0^t
e^{-\lambda_*(t-s)}
e^{-2\lambda_*s}
\,ds .
\label{eq:maxwell_jet_duhamel_bound1}
\end{align}
The time convolution is elementary:
\begin{align}
\int_0^t
e^{-\lambda_*(t-s)}
e^{-2\lambda_*s}
\,ds
&=
e^{-\lambda_*t}
\int_0^t
e^{-\lambda_*s}\,ds
\nonumber\\
&=
\frac{
1-e^{-\lambda_*t}
}{
\lambda_*
}
e^{-\lambda_*t}
\nonumber\\
&\leq
\frac{1}{\lambda_*}
e^{-\lambda_*t}.
\label{eq:maxwell_jet_convolution_bound}
\end{align}
Substitution into
Eq.~\eqref{eq:maxwell_jet_duhamel_bound1} gives
\begin{align}
\|h^{[r]}(t)\|
&\leq
\left[
\|h^{[r]}(0)\|
+
\frac{1}{\lambda_*}
\sum_{\substack{p+q=r\\p,q\geq2}}
C_{p,q}A_pA_q
\right]
e^{-\lambda_*t}
\nonumber\\
&=
A_re^{-\lambda_*t},
\end{align}
where the last equality is precisely the recursive definition
\eqref{eq:maxwell_jet_Ar_recursion}.  This closes the induction.

Repeating the argument through $r=R$ proves global existence and
exponential relaxation of the complete finite jet.  Finally,
\begin{align}
\|J_Rh(t)\|_{\mathscr J_R}^2
&=
\sum_{r=2}^{R}
\|h^{[r]}(t)\|^2
\nonumber\\
&\leq
e^{-2\lambda_*t}
\sum_{r=2}^{R}A_r^2,
\end{align}
which is Eq.~\eqref{eq:maxwell_jet_global_decay}.
\end{proof}

\paragraph{Structure of the nonlinear prefactor.}

The recursive constants make the upward nonlinear cascade quantitative.
Writing
\begin{equation}
a_r:=\|h^{[r]}(0)\|,
\end{equation}
the first cases are
\begin{align}
A_2&=a_2,
\\
A_3&=a_3,
\\
A_4
&=
a_4
+
\frac{C_{2,2}}{\lambda_*}a_2^2,
\label{eq:maxwell_jet_A4}
\\
A_5
&=
a_5
+
\frac{C_{2,3}+C_{3,2}}{\lambda_*}
a_2a_3,
\label{eq:maxwell_jet_A5}
\\
A_6
&=
a_6
+
\frac{1}{\lambda_*}
\left[
C_{2,4}a_2A_4
+
C_{3,3}a_3^2
+
C_{4,2}A_4a_2
\right].
\label{eq:maxwell_jet_A6}
\end{align}
Thus each $A_r$ is a finite polynomial with nonnegative coefficients in
the initial level norms $a_2,\ldots,a_r$.  Its monomials encode the
possible graded collision trees whose incoming degrees add to $r$.
No denominator involving an initial amplitude occurs, and therefore no
smallness condition enters at any finite jet order.

The decay estimate
\eqref{eq:maxwell_jet_global_decay} should not be confused with
monotonicity of the unweighted finite-jet norm.  The nonlinear forcing of a
higher level can temporarily increase the norm of that level even while
all levels ultimately decay.  The theorem establishes nonlinear
dissipative stability in the stronger long-time sense of global existence
and exponential relaxation, not a pointwise Lyapunov inequality for
$\|J_Rh(t)\|_{\mathscr J_R}$.

\paragraph{Relation to the full hierarchy.}

Whenever a full compatible Fock solution exists, its first $R$ levels
necessarily satisfy the autonomous system of
Theorem~\ref{thm:maxwell_finite_jet_relaxation}.  By uniqueness of that
finite-dimensional triangular system, they coincide with the finite-jet
solution constructed above.  Hence
Eq.~\eqref{eq:maxwell_jet_global_decay} is an exact statement about the
low-order sectors of the full nonlinear Maxwell collision dynamics and
not a consequence of setting the higher levels to zero.

This should also be distinguished from the use of an infinite triangular
spectral system to construct the complete solution.  In that problem one
must control the growth of the nonlinear coefficients and prove convergence
of the infinite mode expansion; such estimates are central, for example,
to Ref.~\cite{GlangetasLiXu2016Triangular}.  For a fixed jet, those
infinite-dimensional convergence issues are absent, and strict
triangularity reduces the dynamics recursively to finitely many forced
linear equations.  This is why arbitrary finite jet amplitudes can be
treated without a perturbative smallness condition.

Conversely, the theorem is an algebraic and dynamical statement on the compatible finite Fock levels; it does not require the finite Hermite reconstruction of a jet to be pointwise positive.  Global positivity and the Boltzmann $H$-theorem concern the reconstructed physical distribution and are not needed for the finite-jet relaxation result proved here.

The finite-jet result therefore provides a natural intermediate step between the complete one-vacuum spectrum and a possible nonlinear theory on a completed graded Fock space: exact grading gives strict triangularity and global finite-jet relaxation, while control of the full hierarchy remains a separate problem of weighted infinite-level estimates.

\section{Inhomogeneous dynamics, covariance, and hydrodynamic limit}
\label{sec:maxwell_inhomogeneous}
\label{sec:maxwell_inhomogeneous_hydro}

The homogeneous construction already shows representation independence at the algebraic level: the collision vertex and its grading are defined on symmetric Fock space before any coordinate realization is chosen.  The inhomogeneous problem provides a more operational test.  A local realization moves with the hydrodynamic fields $\bm u(\bm x,t)$ and $\theta(\bm x,t)$, so propagation acquires a differential connection.  Here we use this fact to make covariance constructive.  We first extend the realization-covariance statement established for the Lebowitz--Frisch--Helfand kinetic model \cite{Karlin2026LFHFock} from a one-input local operator to the present two-input nonlinear collision vertex.  We then evaluate the same physical reconstruction connection by two apparently different routes, Hermite-function and Fourier.  The two calculations meet in the same ladder-operator connection and therefore generate the same Chapman--Enskog source and the same hydrodynamics.

This comparison is deliberately not the shortest route to the transport coefficients.  The Fourier calculation alone would suffice for that purpose.  Its role is instead to demonstrate concretely that representation independence gives freedom of computational route: a coordinate realization may be used as scaffolding to construct a convenient abstract Fock operator and then discarded.

\subsection{Covariance of the nonlinear two-leg collision vertex}
\label{subsec:maxwell_covariance}

Let $\mathcal R_i:\mathscr H\to\mathscr F_i$, $i=1,2$, be two admissible realizations and set $S=\mathcal R_2\mathcal R_1^{-1}$.  If $\widehat{\mathcal Q}:\mathscr H\times\mathscr H\to\mathscr H$ is the abstract Maxwell collision vertex, define its coordinate representatives by
\begin{equation}
\mathcal Q_i(F,G)
:=\mathcal R_i\widehat{\mathcal Q}(\mathcal R_i^{-1}F,\mathcal R_i^{-1}G).
\label{eq:maxwell_Q_coordinate_definition_cov}
\end{equation}

\begin{proposition}[Two-leg covariance of the Maxwell vertex]
The coordinate collision maps obey
\begin{equation}
\boxed{
\mathcal Q_2
=
S\circ\mathcal Q_1\circ(S^{-1}\times S^{-1}),
}
\label{eq:maxwell_coordinate_covariance_map}
\end{equation}
or equivalently
\begin{equation}
\mathcal Q_2(SF,SG)=S\,\mathcal Q_1(F,G).
\label{eq:maxwell_coordinate_covariance}
\end{equation}
Thus both incoming legs transform, while the outgoing collision state transforms once.
\end{proposition}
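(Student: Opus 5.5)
The plan is a direct unwinding of the definition \eqref{eq:maxwell_Q_coordinate_definition_cov}, with no collision-specific input. Since both $\mathcal R_1$ and $\mathcal R_2$ are admissible, hence invertible, the transition map $S=\mathcal R_2\mathcal R_1^{-1}:\mathscr F_1\to\mathscr F_2$ is invertible with $S^{-1}=\mathcal R_1\mathcal R_2^{-1}$. I would first record the two factorizations $\mathcal R_2=S\mathcal R_1$ and $\mathcal R_2^{-1}=\mathcal R_1^{-1}S^{-1}$. These are the only identities the argument needs.

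Next, for arbitrary $F,G\in\mathscr F_1$, I would evaluate $\mathcal Q_2(SF,SG)$ directly from its definition. This gives $\mathcal R_2\widehat{\mathcal Q}(\mathcal R_2^{-1}SF,\mathcal R_2^{-1}SG)$. Using $\mathcal R_2^{-1}S=\mathcal R_1^{-1}$ on each incoming leg separately, the inputs become $\mathcal R_1^{-1}F$ and $\mathcal R_1^{-1}G$. Writing the outer $\mathcal R_2$ as $S\mathcal R_1$ then yields $S\,\mathcal R_1\widehat{\mathcal Q}(\mathcal R_1^{-1}F,\mathcal R_1^{-1}G)=S\,\mathcal Q_1(F,G)$, which is \eqref{eq:maxwell_coordinate_covariance}. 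Substituting $F\mapsto S^{-1}F$ and $G\mapsto S^{-1}G$, which is legitimate because $S$ is a bijection, gives the map form \eqref{eq:maxwell_coordinate_covariance_map}.

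Finally, I would remark that bilinearity plays no role in the identity itself, but it is inherited. Indeed, $S^{-1}\times S^{-1}$ is linear in each slot, $\mathcal Q_1$ is bilinear, and $S$ is linear, so $\mathcal Q_2$ is again bilinear. In the same way, by Proposition~\ref{prop:maxwell_bilinear_pullback}, both $\mathcal Q_1$ and $\mathcal Q_2$ pull back to the same abstract $\widehat{\mathcal Q}$. As a consistency check, I would take $\mathcal R_1=\mathcal R_B$, $\mathcal R_2=\mathcal R_F$, so that $S=S_{\widetilde W}$. The covariance law then reduces to \eqref{eq:maxwell_fourier_to_bargmann_collision}, and the factorization \eqref{eq:maxwell_W_tilde_factorization} explains why a single outgoing factor $\widetilde W$ appears even though two incoming factors $\widetilde W$ enter.

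There is no genuine obstacle; the proof is formal. The one point needing care is domains. On the algebraic core, $\mathcal R_i$ is invertible only onto its image $\mathcal R_i\mathscr H_{\rm alg}$, so I would state the identity for $F,G$ in $\mathcal R_1\mathscr H_{\rm alg}$, where $S$ maps this image bijectively onto $\mathcal R_2\mathscr H_{\rm alg}$. Any extension to completed spaces would rest on the same continuity caveat as in Theorem~\ref{thm:maxwell_nonlinear_intertwining}.
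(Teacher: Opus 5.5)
Your proposal is correct and follows the paper's proof: substitute $\mathcal R_2=S\mathcal R_1$ (equivalently $\mathcal R_2^{-1}S=\mathcal R_1^{-1}$ on each incoming leg) into the defining formula for $\mathcal Q_2(SF,SG)$ and collapse it to $S\,\mathcal Q_1(F,G)$. Your additional remarks are sound and consistent with the paper, namely the map form via $F\mapsto S^{-1}F$, inherited bilinearity, the $\mathcal R_B\to\mathcal R_F$ check reproducing $\widetilde W^{-1}\mathcal Q_F(\widetilde W F,\widetilde W G)=\mathcal Q_B(F,G)$, and the domain caveat.
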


\begin{proof}
Using $\mathcal R_2=S\mathcal R_1$ in Eq.~\eqref{eq:maxwell_Q_coordinate_definition_cov},
\begin{align}
\mathcal Q_2(SF,SG)
&=\mathcal R_2\widehat{\mathcal Q}(\mathcal R_2^{-1}SF,\mathcal R_2^{-1}SG)\\
&=S\mathcal R_1\widehat{\mathcal Q}(\mathcal R_1^{-1}F,\mathcal R_1^{-1}G)
=S\mathcal Q_1(F,G).
\end{align}
\end{proof}

Consequently, for every pair of abstract inputs,
\begin{equation}
|\widehat{\mathcal Q}[\phi,\psi]\rangle
=
\mathcal R_i^{-1}\mathcal Q_i(\mathcal R_i|\phi\rangle,\mathcal R_i|\psi\rangle),
\qquad i=H,P,F,B.
\label{eq:maxwell_abstract_covariance}
\end{equation}
In particular,
\begin{equation}
\begin{aligned}
|\widehat{\mathcal Q}[\phi,\psi]\rangle
&=\mathcal R_H^{-1}\mathcal Q_H(\mathcal R_H|\phi\rangle,\mathcal R_H|\psi\rangle)
=\mathcal R_P^{-1}\mathcal Q_P(\mathcal R_P|\phi\rangle,\mathcal R_P|\psi\rangle)\\
&=\mathcal R_F^{-1}\mathcal Q_F(\mathcal R_F|\phi\rangle,\mathcal R_F|\psi\rangle)
=\mathcal R_B^{-1}\mathcal Q_B(\mathcal R_B|\phi\rangle,\mathcal R_B|\psi\rangle).
\end{aligned}
\label{eq:maxwell_realization_equivalence}
\end{equation}
Equation~\eqref{eq:maxwell_coordinate_covariance_map} is the bilinear analogue of ordinary similarity covariance for a one-input operator.  It is the nonlinear extension needed to combine the collision construction of the present paper with the general covariance of parameter-dependent realizations and propagation established in Ref.~\cite{Karlin2026LFHFock}.  Bobylev--Bargmann coordinates are therefore privileged for constructing the Maxwell vertex, not intrinsic to the resulting map.

\subsection{Moving local frames: one abstract connection, two coordinate routes}
\label{subsec:maxwell_fourier_connection}

For the density-scaled state, let the physical velocity reconstruction be
\begin{equation}
f=\mathcal B(\bm u,\theta)|\chi\rangle,
\qquad
\mathcal B=S_{\sqrt W}\mathcal R_H=S_W\mathcal R_P,
\qquad
\mathcal A_\mu:=\mathcal B^{-1}(\partial_\mu\mathcal B).
\label{eq:maxwell_B_physical}
\end{equation}
The abstract connection $\mathcal A_\mu$ belongs to the complete physical reconstruction.  Bare realization connectors such as $\mathcal R_H^{-1}(\partial_\mu \mathcal R_H)$ are coordinate dependent, as discussed in Ref.~\cite{Karlin2026LFHFock}; their coordinate dependence is compensated by the corresponding reconstruction factors in the complete pull-back \cite{Karlin2026LFHFock}.  We now evaluate $\mathcal A_\mu$ in two ways.

\subsubsection{Hermite-function route}
\label{subsubsec:maxwell_Hermite_connection_route}

For the Hermite-function realization, the differential connectors derived in Ref.~\cite{Karlin2026LFHFock} are
\begin{align}
\widehat{\mathcal C}^{(H)}_{u_\beta}
&=\mathcal R_H^{-1}\frac{\partial \mathcal R_H}{\partial u_\beta}
=\frac{1}{2\sqrt\theta}(\hat a_\beta^\dagger-\hat a_\beta),
\label{eq:maxwell_Hermite_Cu}
\\
\widehat{\mathcal C}^{(H)}_\theta
&=\mathcal R_H^{-1}\frac{\partial \mathcal R_H}{\partial\theta}
=\frac{1}{4\theta}
\left[(\hat a_\beta^\dagger)^2-\hat a_\beta^2\right].
\label{eq:maxwell_Hermite_Ctheta}
\end{align}
No explicit Hermite functions are needed here: their differentiation has already been encoded in these ladder-operator identities.  Write
\begin{equation}
\widehat X_\alpha:=\hat a_\alpha+\hat a_\alpha^\dagger,
\qquad
\widehat{\bm X}^{\,2}:=\widehat X_\alpha\widehat X_\alpha.
\end{equation}
The remaining factor in the physical reconstruction is multiplication by $\sqrt W$.  At fixed physical velocity,
\begin{align}
\mathcal R_H^{-1}S_{\sqrt W}^{-1}
\frac{\partial S_{\sqrt W}}{\partial u_\beta}\mathcal R_H
&=\frac{1}{2\sqrt\theta}\widehat X_\beta,
\label{eq:maxwell_Hermite_weight_u}
\\
\mathcal R_H^{-1}S_{\sqrt W}^{-1}
\frac{\partial S_{\sqrt W}}{\partial\theta}\mathcal R_H
&=\frac{1}{4\theta}
\left(\widehat{\bm X}^{\,2}-d\right).
\label{eq:maxwell_Hermite_weight_theta}
\end{align}
Adding the basis and reconstruction contributions gives the elementary connections of the complete physical map,
\begin{align}
\mathcal A^{(H)}_{u_\beta}
&=\frac{1}{2\sqrt\theta}
\left[\widehat X_\beta+\hat a_\beta^\dagger-\hat a_\beta\right]
=\frac{1}{\sqrt\theta}\hat a_\beta^\dagger,
\label{eq:maxwell_Hermite_full_u}
\\
\mathcal A^{(H)}_\theta
&=\frac{1}{4\theta}
\left[
\widehat{\bm X}^{\,2}-d
+(\hat a_\beta^\dagger)^2-\hat a_\beta^2
\right]
=\frac{1}{2\theta}
\left[\hat N+(\hat a_\beta^\dagger)^2\right].
\label{eq:maxwell_Hermite_full_theta}
\end{align}
In the last equality we used
$\widehat{\bm X}^{\,2}-d=\hat a_\beta^2+(\hat a_\beta^\dagger)^2+2\hat N$.
Thus the Hermite-function route has already left coordinate space: its output is a pair of operators in the canonical Fock algebra.

\subsubsection{Fourier route}
\label{subsubsec:maxwell_Fourier_connection_route}

The same physical reconstruction may be Fourier transformed.  Let $\mathcal B_F=\mathcal R_F$ denote the parameter-dependent Fourier reconstruction.  At fixed wave vector $\bm k$, with $\bm z=-i\sqrt\theta\,\bm k$, differentiation of the Fourier Maxwellian factor and the thermal dilation of $\bm z$ gives
\begin{align}
\mathcal B_F^{-1}\frac{\partial\mathcal B_F}{\partial u_\beta}
&=\frac{1}{\sqrt\theta}\hat a_\beta^\dagger,
\label{eq:maxwell_full_connection_u}
\\
\mathcal B_F^{-1}\frac{\partial\mathcal B_F}{\partial\theta}
&=\frac{1}{2\theta}
\left[\hat N+(\hat a_\alpha^\dagger)^2\right].
\label{eq:maxwell_full_connection_theta}
\end{align}
The number operator comes from dilation of the Bargmann coordinate and the pair-creation term from differentiation of the Fourier Maxwellian vacuum.  Comparison with Eqs.~\eqref{eq:maxwell_Hermite_full_u}--\eqref{eq:maxwell_Hermite_full_theta} gives
\begin{equation}
\boxed{
\mathcal A^{(H)}_{u_\beta}=\mathcal A^{(F)}_{u_\beta},
\qquad
\mathcal A^{(H)}_\theta=\mathcal A^{(F)}_\theta.
}
\label{eq:maxwell_HF_connection_equivalence}
\end{equation}
Hence either route yields
\begin{equation}
\boxed{
\mathcal A_\mu
=
\frac{\partial_\mu u_\beta}{\sqrt\theta}\hat a_\beta^\dagger
+
\frac{\partial_\mu\theta}{2\theta}
\left[\hat N+(\hat a_\alpha^\dagger)^2\right].
}
\label{eq:maxwell_A_mu}
\end{equation}
The equality in Eq.~\eqref{eq:maxwell_HF_connection_equivalence} should not be confused with equality of the bare coordinate connectors.  It holds for the connection of the complete physical reconstruction.  The Hermite and Fourier coordinate calculations are different; covariance makes their pulled-back Fock operator the same.

This is the practical meaning of choosing an ``economical'' realization.  If only the Chapman--Enskog coefficients were sought, the Fourier route would be shorter.  For the representation-independent formulation, however, the two-route calculation is more informative: it shows explicitly that coordinate realizations are computational devices for constructing an abstract operator, not part of the final dynamics.

\subsection{Spatially inhomogeneous Maxwell-molecule Boltzmann equation}
\label{subsec:maxwell_inhomogeneous}

Up to this point the collision analysis has been spatially homogeneous.  We now restore transport and consider
\begin{equation}
\partial_t f+v_\alpha\partial_\alpha f=Q(f,f).
\label{eq:maxwell_inhomogeneous_physical}
\end{equation}
The velocity multiplication operator is the moment-intertwined operator already obtained in Eq.~\eqref{eq:maxwell_vhat}, evaluated in the moment-adapted frame.  Pulling back the physical equation gives
\begin{equation}
\boxed{
\left[
\partial_t+\mathcal A_t+
\widehat v_\alpha(\partial_\alpha+\mathcal A_\alpha)
\right]|\chi\rangle
=
|\widehat{\mathcal Q}[\chi,\chi]\rangle.
}
\label{eq:maxwell_inhomogeneous_fock}
\end{equation}
Equation~\eqref{eq:maxwell_inhomogeneous_fock} is the common abstract Fock equation reached from either the Hermite-function or Fourier route.  Its left-hand side is the covariant moving-frame propagation structure established in Ref.~\cite{Karlin2026LFHFock}; its right-hand side is the two-legged nonlinear Maxwell vertex whose covariance was proved above.  Thus replacing a one-leg local relaxation sector by the present bilinear collision map leaves the representation-independent propagation architecture unchanged.

\subsection{Compatibility propagation and exact balance equations}
\label{subsec:maxwell_compatibility_propagation}

The moment-adapted frame is defined by the density, momentum, and energy functionals constructed in Sec.~\ref{subsec:maxwell_moment_compatibility}.  The general transport--moment theorem proved in Ref.~\cite{Karlin2026LFHFock} applies unchanged to the propagation part of Eq.~\eqref{eq:maxwell_inhomogeneous_fock}: once the complete physical reconstruction is intertwined, the conservative moment balances follow without reopening the explicit connection algebra.  The Maxwell collision vertex contributes nothing to these three projections because the corresponding collision bras annihilate it identically.

On the moment-adapted manifold the nonequilibrium stress is the Hermite coefficient already identified in Eq.~\eqref{eq:maxwell_stress_Fock_pairing}.  The corresponding heat-flux coefficient is
\begin{equation}
q_\alpha
=\frac{m}{2}\theta^{3/2}
\left\langle0\left|\hat a_\beta\hat a_\beta\hat a_\alpha\right|\chi\right\rangle .
\label{eq:maxwell_heat_Fock_pairing}
\end{equation}
The exact compatible balances are therefore
\begin{align}
\partial_t n+\partial_\alpha(nu_\alpha)&=0,
\label{eq:maxwell_continuity}
\\
\partial_t(mnu_\beta)
+\partial_\alpha\!\left(mnu_\beta u_\alpha+mn\theta\,\delta_{\beta\alpha}+\pi_{\beta\alpha}\right)&=0,
\label{eq:maxwell_momentum_balance}
\\
\frac{md}{2}nD_u\theta
+mn\theta\,\partial_\alpha u_\alpha
+\pi_{\alpha\beta}\partial_\alpha u_\beta
+\partial_\alpha q_\alpha&=0,
\qquad D_u:=\partial_t+u_\alpha\partial_\alpha.
\label{eq:maxwell_temperature_balance}
\end{align}
At local equilibrium, $\pi_{\alpha\beta}=q_\alpha=0$, and these equations reduce to the Euler compatibility system.

\subsection{One Chapman--Enskog source from two realization routes}
\label{subsec:maxwell_CE1}

We now use the two constructions above as an explicit covariance test.  Acting with the complete propagation operator on the local Maxwellian vacuum and imposing the Euler compatibility equations removes the vacuum, level-$1$, and scalar level-$2$ pieces.  Because Eqs.~\eqref{eq:maxwell_Hermite_full_u}--\eqref{eq:maxwell_Hermite_full_theta} and Eqs.~\eqref{eq:maxwell_full_connection_u}--\eqref{eq:maxwell_full_connection_theta} represent the same $\mathcal A_\mu$, the Hermite and Fourier routes give the same abstract first-order source,
\begin{equation}
|\mathcal S_{\rm kin}^{(0)}\rangle_H
=
|\mathcal S_{\rm kin}^{(0)}\rangle_F
=:
|\mathcal S_{\rm kin}^{(0)}\rangle.
\label{eq:maxwell_source_HF_equivalence}
\end{equation}
With
\begin{equation}
D_{\alpha\beta}
:=\partial_\alpha u_\beta+\partial_\beta u_\alpha
-\frac{2}{d}\delta_{\alpha\beta}\partial_\gamma u_\gamma,
\label{eq:maxwell_Dab}
\end{equation}
the common source is
\begin{equation}
\boxed{
|\mathcal S_{\rm kin}^{(0)}\rangle
=
\frac n2 D_{\alpha\beta}
\hat a_\alpha^\dagger\hat a_\beta^\dagger|0\rangle
+
\frac{n}{2\sqrt\theta}(\partial_\alpha\theta)
\hat a_\alpha^\dagger(\hat a_\beta^\dagger)^2|0\rangle
\in
\mathscr H_{2,\mathrm{tl}}\oplus\mathscr H_{3,\mathrm{vec}}.
}
\label{eq:maxwell_kinetic_source_Fock}
\end{equation}
This sector selection is kinematic and representation independent; no Maxwell collision matrix element has yet been used.  In particular, neither explicit Hermite functions nor explicit Fourier images enter the hydrodynamic calculation after the operator connections have been established.

The model-specific input is supplied by the one-vacuum spectrum derived in Sec.~\ref{subsec:maxwell_linearized_spectrum}.  The two selected sectors are eigenspaces with
\begin{equation}
\lambda_\pi=\lambda_{02}=\frac34n\Omega_2,
\qquad
\lambda_q=\lambda_{11}=\frac12n\Omega_2,
\qquad
\frac{\lambda_q}{\lambda_\pi}=\frac23.
\label{eq:maxwell_hydro_rates_compact}
\end{equation}
Hence the first kinetic correction is obtained by scalar inversion on the two sectors,
\begin{equation}
|\chi^{(1)}\rangle
=-\frac{n}{2\lambda_\pi}
D_{\alpha\beta}\hat a_\alpha^\dagger\hat a_\beta^\dagger|0\rangle
-\frac{n}{2\sqrt\theta\,\lambda_q}
(\partial_\alpha\theta)
\hat a_\alpha^\dagger(\hat a_\beta^\dagger)^2|0\rangle.
\label{eq:maxwell_chi1_compact}
\end{equation}
The level-$2$ and level-$3$ vacuum pairings are canonical Fock contractions.  Applying them to Eqs.~\eqref{eq:maxwell_stress_Fock_pairing}--\eqref{eq:maxwell_heat_Fock_pairing} gives, for $d=3$,
\begin{align}
\pi_{\mu\nu}^{(1)}
&=-\frac{mn\theta}{\lambda_\pi}D_{\mu\nu}
=-\mu D_{\mu\nu},
&
\mu&=\frac{p}{\lambda_\pi},
\label{eq:maxwell_viscosity}
\\
q_\mu^{(1)}
&=-\frac{5mn\theta}{2\lambda_q}\partial_\mu\theta
=-\kappa\partial_\mu T,
&
\kappa&=\frac52\frac{k_B}{m}\frac{p}{\lambda_q}.
\label{eq:maxwell_kappa}
\end{align}
For a monatomic gas $c_p=(5/2)k_B/m$, and therefore
\begin{equation}
\boxed{
\Pr
=\frac{c_p\mu}{\kappa}
=\frac{\lambda_q}{\lambda_\pi}
=\frac23.
}
\label{eq:maxwell_prandtl_exact}
\end{equation}
The equality of the Hermite and Fourier routes is therefore stronger than agreement of the final scalar $\Pr$: they give the same abstract Fock source before the Maxwell collision inverse is applied.  The Chapman--Enskog calculation separates into a representation-independent moving-frame source and two Maxwell-specific spectral numbers.

\subsection{Recovery of the conventional Boltzmann equation}
\label{subsec:maxwell_recovery_velocity}

The Fourier construction can be pushed back to velocity space explicitly.  Let $\mathcal B_F=\mathcal R_F$ denote the parameter-dependent Fourier reconstruction,
\begin{equation}
\widetilde f
=\mathcal B_F|\chi\rangle,
\qquad
\widetilde f(\bm k)
=\widetilde W_{\bm u,\theta}(\bm k)\,
(\mathcal R_B|\chi\rangle)(-i\sqrt{\theta}\,\bm k),
\label{eq:maxwell_BF_recalled}
\end{equation}
and define
\begin{equation}
\mathcal R_v:=\mathcal F^{-1}\mathcal B_F,
\qquad f=\mathcal R_v|\chi\rangle.
\label{eq:maxwell_velocity_realization}
\end{equation}
The abstract vertex, Bobylev's Fourier collision map $\widetilde{\mathcal Q}_M$, and the conventional Maxwell operator $\mathcal Q_M$ fit into
\begin{equation}
\begin{tikzcd}[column sep=large, row sep=large]
\mathscr H\times \mathscr H
\arrow[r, "\widehat{\mathcal Q}"]
\arrow[d, "\mathcal B_F\times\mathcal B_F"']
&
\mathscr H
\arrow[d, "\mathcal B_F"]
\\
\widetilde{\mathscr F}\times \widetilde{\mathscr F}
\arrow[r, "\widetilde{\mathcal Q}_M"]
\arrow[d, "\mathcal F^{-1}\times\mathcal F^{-1}"']
&
\widetilde{\mathscr F}
\arrow[d, "\mathcal F^{-1}"]
\\
\mathscr F_v\times \mathscr F_v
\arrow[r, "\mathcal Q_M"]
&
\mathscr F_v .
\end{tikzcd}
\label{eq:maxwell_commutative_diagram}
\end{equation}
The upper square is the nonlinear two-leg intertwining proved above, while the lower square is Bobylev's Fourier equivalence with the conventional Maxwell collision integral.  Their composition gives
\begin{equation}
\mathcal R_v\,\widehat{\mathcal Q}
=\mathcal Q_M(\mathcal R_v\otimes\mathcal R_v).
\label{eq:maxwell_velocity_intertwining}
\end{equation}

\begin{proposition}[Recovery of the conventional kinetic equation]
\label{prop:maxwell_recovery_velocity}
If
\begin{equation}
\partial_t|\chi\rangle+\widehat{\mathcal T}|\chi\rangle
=\widehat{\mathcal Q}(|\chi\rangle,|\chi\rangle),
\label{eq:maxwell_abstract_fock_eq_recalled}
\end{equation}
where $\widehat{\mathcal T}$ is the pulled-back propagation operator, then $f=\mathcal R_v|\chi\rangle$ satisfies
\begin{equation}
\partial_t f+v_\alpha\partial_{x_\alpha}f=\mathcal Q_M(f,f).
\label{eq:maxwell_conventional_boltzmann_recovered}
\end{equation}
\end{proposition}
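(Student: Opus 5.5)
The plan is to apply the velocity realization $\mathcal R_v$ to the abstract Fock equation \eqref{eq:maxwell_abstract_fock_eq_recalled} and treat its three terms separately: time derivative, propagation, and collision. The point to keep in view is that $\mathcal R_v=\mathcal F^{-1}\mathcal B_F$ depends on $(\bm x,t)$ through the local fields $\bm u,\theta$. By Fourier inversion of Eq.~\eqref{eq:maxwell_RF}, it coincides with the complete physical reconstruction $\mathcal B=S_W\mathcal R_P$ of Eq.~\eqref{eq:maxwell_B_physical}. The pulled-back propagation operator is therefore, by construction,
\[
\widehat{\mathcal T}=\mathcal A_t+\widehat v_\alpha(\partial_\alpha+\mathcal A_\alpha),
\]
as in Eq.~\eqref{eq:maxwell_inhomogeneous_fock}. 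Here $\mathcal A_\mu=\mathcal B^{-1}\partial_\mu\mathcal B$ is given by Eq.~\eqref{eq:maxwell_A_mu}, and by Eq.~\eqref{eq:maxwell_HF_connection_equivalence} it may be computed in either route.

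\textbf{Transport terms.} Using $\mathcal R_v\mathcal A_\mu=\partial_\mu\mathcal R_v$ as an operator identity (the definition of the connection), the product rule gives
\[
\partial_\mu f=\partial_\mu(\mathcal R_v|\chi\rangle)=\mathcal R_v(\partial_\mu+\mathcal A_\mu)|\chi\rangle
\]
for $\mu=t,x_\alpha$. Next I would invoke the moment intertwining of Eq.~\eqref{eq:maxwell_vhat}, namely $\mathcal R_v\widehat v_\alpha=v_\alpha\mathcal R_v$. This is multiplication by velocity. In Bargmann/Fourier form it reads $i\partial_{k_\alpha}$ acting on $\widetilde f$; it can be checked from $\widehat v_\alpha=u_\alpha+\sqrt\theta(\hat a_\alpha+\hat a_\alpha^\dagger)$ and the Fourier generators \eqref{eq:maxwell_ladder_F_ann}, since $\sqrt\theta(\partial_{z_\alpha})$ at fixed $\theta$ equals $i\partial_{k_\alpha}$ up to the $u$ shift. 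Combining these, $\mathcal R_v$ applied to the left side equals $\partial_tf+v_\alpha\partial_\alpha f$.

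\textbf{Collision term.} Apply the composed intertwining \eqref{eq:maxwell_velocity_intertwining}, which follows from the commutative diagram: the upper square is Theorem~\ref{thm:maxwell_nonlinear_intertwining} transported by covariance \eqref{eq:maxwell_coordinate_covariance} through $S_{\widetilde W}$, and the lower square is Bobylev's identity \eqref{eq:maxwell_bobylev_operator_identity}. This gives $\mathcal R_v\widehat{\mathcal Q}(\chi,\chi)=\mathcal Q_M(f,f)$. Injectivity is not needed, since the equation is only pushed forward.

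The main obstacle is domain and commutation bookkeeping. The identity $\mathcal R_v\mathcal A_\mu=\partial_\mu\mathcal R_v$ requires differentiating $\mathcal R_v$ in parameters at fixed $\bm v$ (equivalently, at fixed $\bm k$ rather than fixed $\bm z$). This is exactly what the connection formulas \eqref{eq:maxwell_full_connection_u}--\eqref{eq:maxwell_full_connection_theta} were computed for, so I would state that $\mathcal A_\mu$ is taken at fixed $\bm k$. I would also restrict to the algebraic core, or to states for which $\mathcal R_v$, $\widehat v$, and the collision map are defined, noting as in Theorem~\ref{thm:maxwell_nonlinear_intertwining} that extensions need the corresponding continuity.
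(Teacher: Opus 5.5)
Your proposal is correct and is essentially the paper's own argument. You push the Fock equation forward by $\mathcal R_v$, absorb the connection terms via $\partial_\mu\mathcal R_v=\mathcal R_v\mathcal A_\mu$ (the paper phrases this as the cancellation of the two $(\partial_t\mathcal R_v)|\chi\rangle$ terms), use $\mathcal R_v\widehat v_\alpha=v_\alpha\mathcal R_v$, and read off the collision side from the composed intertwining \eqref{eq:maxwell_velocity_intertwining}; your Fourier check of the velocity intertwining is also exact, since $u_\alpha+\sqrt\theta\,(a^{(F)}_\alpha+a^{(F)\dagger}_\alpha)=\sqrt\theta\,\partial_{z_\alpha}=i\partial_{k_\alpha}$ with the $u$-shift cancelling completely.
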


\begin{proof}
Here the pulled-back propagation operator is
\begin{equation}
\widehat{\mathcal T}
=
\mathcal A_t+
\widehat v_\alpha(\partial_\alpha+\mathcal A_\alpha),
\qquad
\mathcal A_\mu=\mathcal R_v^{-1}(\partial_\mu\mathcal R_v),
\label{eq:maxwell_T_recovery_definition}
\end{equation}
where the second identity follows equivalently from $\mathcal R_v=\mathcal F^{-1}\mathcal B_F$ and the Fourier construction of the same complete physical connection.  Since velocity multiplication is intertwined according to
$\mathcal R_v\widehat v_\alpha=v_\alpha\mathcal R_v$, one has, on $|\chi\rangle$,
\begin{equation}
\mathcal R_v\widehat{\mathcal T}|\chi\rangle
=
(\partial_t\mathcal R_v)|\chi\rangle
+
 v_\alpha\partial_\alpha(\mathcal R_v|\chi\rangle).
\label{eq:maxwell_T_recovery_intertwining}
\end{equation}
Applying $\mathcal R_v$ to Eq.~\eqref{eq:maxwell_abstract_fock_eq_recalled} and using
\begin{equation}
\mathcal R_v\partial_t|\chi\rangle
=
\partial_t(\mathcal R_v|\chi\rangle)
-(\partial_t\mathcal R_v)|\chi\rangle,
\end{equation}
the two terms containing $\partial_t\mathcal R_v$ cancel.  The left-hand side therefore becomes
$\partial_t f+v_\alpha\partial_\alpha f$.  Equation~\eqref{eq:maxwell_velocity_intertwining} gives the right-hand side as $\mathcal Q_M(f,f)$, proving Eq.~\eqref{eq:maxwell_conventional_boltzmann_recovered}.
\end{proof}

Thus Bargmann/Fourier coordinates provide the shortest construction of the nonlinear Maxwell vertex and an economical construction of the moving-frame connection, while Hermite-function coordinates provide an independent route to the same complete propagation operator.  Once these operators are pulled back, the hydrodynamic calculation is entirely Fock-algebraic.  The agreement of the two routes is the concrete realization-level counterpart of the intrinsic grading and covariance results proved above.

\section{Outlook: Beyond Maxwell molecules}
\label{subsec:maxwell_hard_sphere_outlook}

The exact degree grading obtained above is special to Maxwell molecules, because it ultimately rests on the substitution--product structure exposed by Bobylev's identity.  For non-Maxwell kernels an additional relative-speed dependence destroys this exact level selection.

For three-dimensional hard spheres, the Carleman representation provides a potentially useful alternative starting point for a Fock-space analysis: it reorganizes the full collision integral in terms of orthogonal velocity increments and the transformed kernel simplifies substantially \cite{MouhotPareschi2006FastBoltzmann}.  Whether this formulation leads to a useful spectral or graded structure in Fock space is a separate problem and is not developed here.  We leave the linearized hard-sphere operator and its subsequent nonlinear extension to future work.

\section{Conclusion}
\label{sec:maxwell_conclusion}

The Maxwell-molecule Boltzmann collision operator admits an exact bilinear
Fock formulation.  The structural step is independent of Maxwell
kinetics: the canonical lift--fusion theorem defines the abstract map from
degree-preserving bosonic lifts and degree-additive symmetric-algebra
multiplication.  Bargmann coordinates represent this map as two linear
substitutions followed by ordinary multiplication and make uniqueness
transparent; they are computationally privileged, not constitutive of the
Fock structure.  Bobylev then supplies the particular collision maps and
the angular average.  The resulting abstract two-input, one-output
vertex has the exact grading
\begin{equation}
\widehat{\mathcal Q}(\mathscr H_p,\mathscr H_q)
\subseteq
\mathscr H_{p+q}.
\end{equation}

Triangular structures themselves have substantial precedents in Maxwell
kinetics.  Bobylev's Fourier representation, successive ordinary and
Sonine moment equations, and the later Laguerre--spherical nonlinear
spectral hierarchy all reveal versions of the same degree organization
\cite{Bobylev1975Fourier,Bobylev1984ExactRelaxation,Bobylev1988MaxwellReview,Ernst1979MaxwellMoments,GlangetasLiXu2016Triangular}.
The point of the Fock formulation is therefore not to attribute
triangularity to a new choice of coordinates, but to identify whose
property it is.  The grading belongs to the abstract collision vertex
itself.  In Fock space it follows directly from degree-preserving
one-particle lifts and degree-additive fusion; Fourier, moment/Sonine,
Hermite, and Laguerre--spherical rules are coordinate images of this
representation-independent statement.  In this sense Bargmann coordinates play for the Maxwell collision vertex the methodological role played by Hermite-function coordinates in the Fock formulation of the Lebowitz--Frisch--Helfand model: they expose the relevant operator economically without defining the underlying Fock space.

Moment adaptation turns the grading into strict triangularity on the
nonequilibrium homogeneous hierarchy.  This gives an exact dynamical
consequence: every finite compatible Fock jet is an autonomous factor of
the full nonlinear collision dynamics and relaxes globally and
exponentially to the Maxwellian jet, without a smallness assumption on the
retained amplitudes.  The result concerns exact low-order projections, not
a closure obtained by suppressing higher moments; the separate problem of
controlling the complete infinite hierarchy still requires
infinite-dimensional estimates.

The one-vacuum restriction preserves each level, and the intrinsic
trace/STF decomposition yields the complete Wang Chang--Uhlenbeck spectral
family.  The stress and heat-flux rates are the $(0,2)$ and $(1,1)$ entries
of that family.

The framework is not limited to homogeneous collision dynamics.  For the
inhomogeneous equation, the moment-adapted Maxwellian generates the
moving-frame connection and the exact compatibility balances.  At first
Chapman--Enskog order the propagation source lies only in the traceless
quadratic and contracted cubic sectors.  Their exact one-vacuum
eigenvalues therefore give the Navier--Stokes--Fourier coefficients
directly and recover $\Pr=2/3$.  The resulting picture is modular:
moving-frame Fock kinematics carries transport, while molecular physics is
encoded by the collision vertex.

The algebraic statement is broader than the Gaussian Hilbert realization.
Intertwining and grading hold on the finite-Fock core, finite jets remain
meaningful whenever the corresponding moments exist, and
characteristic-function formulations extend farther into heavy-tailed and
measure-valued regimes.

Finally, the exact grading isolates what is special about Maxwell molecules without making the representation strategy itself Maxwell-specific.  The three-dimensional Carleman representation suggests a distinct starting point for hard spheres, but its Fock-space spectral analysis is beyond the scope of the present work and is deferred to a separate study.

\section*{Acknowledgement of AI assistance}
During the development and preparation of this manuscript, the author used
ChatGPT (OpenAI, GPT-5.6 Sol) as an interactive research and writing assistant.
Its use included discussion and critical examination of mathematical
formulations, cross-checking of derivations, development of the presentation
and organization of the manuscript, identification of notational and
expository inconsistencies, and assistance with editorial revision and
bibliographic verification.  All mathematical results, arguments,
interpretations, references, and final text were reviewed and accepted by the
author, who assumes full responsibility for the content of the manuscript.

\appendix
\section{Velocity-space Chapman--Enskog check}
\label{app:maxwell_coordinate_CE_check}

This appendix gives an independent velocity-space check of the Fock derivation.  The weak Maxwell collision form below is the conventional coordinate realization of the same operator used in the main text.  Its exact stress and heat-flux moment identities are obtained from the full nonlinear collision operator; the usual first variation appears only when the Chapman--Enskog expansion is introduced.

We specialize to $d=3$ and set
\begin{equation}
\bm g=\bm c-\bm c_*,\qquad
\bm C=\frac{\bm c+\bm c_*}{2},\qquad
\bm e_g=\frac{\bm g}{g},\qquad
\eta=\bm e_g\cdot\bm\sigma,
\end{equation}
with
\begin{equation}
\Omega_2=\int_{S^2}b(\eta)(1-\eta^2)\,d\bm\sigma.
\end{equation}
The exact weak form is
\begin{equation}
\int\phi(\bm c)Q(f,f)\,d\bm c
=\frac12\int ff_*b(\eta)
\left(\phi'+\phi_*'-\phi-\phi_*\right)
\,d\bm\sigma\,d\bm c\,d\bm c_*,
\label{eq:app_maxwell_weak_form}
\end{equation}
where elastic collisions give
\begin{equation}
\bm c'=\bm C+\frac g2\bm\sigma,\qquad
\bm c_*'=\bm C-\frac g2\bm\sigma.
\end{equation}

\subsection{Exact stress and heat-flux relaxation in the weak form}

For the quadratic observable $c_\alpha c_\beta$,
\begin{equation}
c_\alpha c_\beta+c_{*\alpha}c_{*\beta}
=
2C_\alpha C_\beta+\frac12g_\alpha g_\beta,
\end{equation}
so that
\begin{equation}
\Delta_2^{\alpha\beta}
=
\frac{g^2}{2}
\left(
\sigma_\alpha\sigma_\beta
-e_{g,\alpha}e_{g,\beta}
\right).
\end{equation}
For a rank-two tensor $A_{\alpha\beta}$, the STF projection is
\begin{equation}
[A_{\alpha\beta}]_{\rm STF}
=
\frac12\left(A_{\alpha\beta}+A_{\beta\alpha}\right)
-\frac13\delta_{\alpha\beta}A_{\gamma\gamma}.
\end{equation}
Rotational symmetry gives
\begin{equation}
\left[
\int_{S^2}b(\eta)
\left(
\sigma_\alpha\sigma_\beta
-e_{g,\alpha}e_{g,\beta}
\right)d\bm\sigma
\right]_{\rm STF}
=
-\frac32\Omega_2
e_{g,\langle\alpha}e_{g,\beta\rangle}.
\end{equation}
Substitution into Eq.~\eqref{eq:app_maxwell_weak_form} gives
\begin{equation}
\int c_{\langle\alpha}c_{\beta\rangle}Q(f,f)\,d\bm c
=
-\frac38\Omega_2
\int ff_*g_{\langle\alpha}g_{\beta\rangle}
\,d\bm c\,d\bm c_*.
\end{equation}
Because the compatible frame is centered,
\begin{equation}
\int c_\alpha f\,d\bm c=0,
\end{equation}
and hence
\begin{equation}
\int ff_*g_{\langle\alpha}g_{\beta\rangle}
\,d\bm c\,d\bm c_*
=
2n\int f c_{\langle\alpha}c_{\beta\rangle}\,d\bm c.
\end{equation}
Therefore
\begin{equation}
\int c_{\langle\alpha}c_{\beta\rangle}Q(f,f)\,d\bm c
=
-\lambda_\pi
\int c_{\langle\alpha}c_{\beta\rangle}f\,d\bm c,
\qquad
\lambda_\pi=\frac34n\Omega_2.
\label{eq:app_maxwell_lambda_pi}
\end{equation}

For the cubic vector observable $\Phi_\alpha(\bm c)=c^2c_\alpha$,
\begin{equation}
c^2c_\alpha+c_*^2c_{*\alpha}
=
2\left(C^2+\frac{g^2}{4}\right)C_\alpha
+
(\bm C\cdot\bm g)g_\alpha.
\end{equation}
The first term is collision invariant, so the collisional change is
\begin{equation}
\Delta_3^\alpha
=
g^2\left[
(\bm C\cdot\bm\sigma)\sigma_\alpha
-
(\bm C\cdot\bm e_g)e_{g,\alpha}
\right].
\end{equation}
The same angular tensor gives
\begin{equation}
\int_{S^2}b(\eta)\Delta_3^\alpha\,d\bm\sigma
=
\frac{\Omega_2}{2}
\left[
g^2C_\alpha
-3(\bm C\cdot\bm g)g_\alpha
\right].
\end{equation}
Using centering,
\begin{align}
\int ff_*g^2C_\alpha\,d\bm c\,d\bm c_*
&=
n\int f c^2c_\alpha\,d\bm c,
\\
\int ff_*(\bm C\cdot\bm g)g_\alpha\,d\bm c\,d\bm c_*
&=
n\int f c^2c_\alpha\,d\bm c.
\end{align}
Therefore
\begin{equation}
\int c^2c_\alpha Q(f,f)\,d\bm c
=
-\lambda_q\int c^2c_\alpha f\,d\bm c,
\qquad
\lambda_q=\frac12n\Omega_2.
\label{eq:app_maxwell_lambda_q}
\end{equation}
These identities hold for the full nonlinear collision operator on a centered compatible state; no linearized collision model has been introduced.

\subsection{Conventional coordinate Chapman--Enskog evaluation}

For comparison, let
\begin{equation}
f=M+\varepsilon f^{(1)}+O(\varepsilon^2),
\qquad
M=n(2\pi\theta)^{-3/2}
\exp\!\left(-\frac{c^2}{2\theta}\right).
\end{equation}
Using the Euler equations, propagation of the local Maxwellian reduces to
\begin{equation}
\frac1M
(\partial_t+v_\alpha\partial_\alpha)M
=
\frac1{2\theta}c_\alpha c_\beta D_{\alpha\beta}
+
\frac{c_\alpha}{2\theta^2}(c^2-5\theta)\partial_\alpha\theta.
\label{eq:app_maxwell_CE_source}
\end{equation}
The two polynomials are precisely the velocity-coordinate images of the Fock sectors $\mathscr H_{2,\mathrm{tl}}$ and $\mathscr H_{3,\mathrm{vec}}$.  Their collision eigenvalues are the exact rates \eqref{eq:app_maxwell_lambda_pi}--\eqref{eq:app_maxwell_lambda_q}; inversion of those two scalar modes and the standard Gaussian pairings therefore gives
\begin{equation}
\mu=\frac{p}{\lambda_\pi},
\qquad
\kappa=\frac52\frac{k_B}{m}\frac{p}{\lambda_q},
\qquad
\Pr=\frac23,
\end{equation}
in agreement with Sec.~\ref{subsec:maxwell_CE1}.  The general moving-Maxwellian Chapman--Enskog organization and the corresponding Fock pairings are given in the Fock formulation of the Lebowitz--Frisch--Helfand model \cite{Karlin2026LFHFock}; the purpose of this appendix is only to verify independently the Maxwell-specific collision rates in conventional velocity coordinates.


\section{From Cartesian Hermite to Laguerre--spherical coordinates}
\label{app:maxwell_Hermite_Laguerre}

This appendix supplies the explicit coordinate derivation referred to in the
main text after Eq.~\eqref{eq:maxwell_Fock_spherical_states} and in the
paragraph following Eq.~\eqref{eq:maxwell_spectrum_Fock_labels}.  In
particular, it derives Eq.~\eqref{eq:maxwell_Fock_to_Laguerre_spherical}
starting from the Cartesian Hermite realization introduced in
Eq.~\eqref{eq:maxwell_H_basis}.  The purpose is to make clear that no
additional structure is imposed on the abstract Fock space when the
conventional Laguerre--spherical form is introduced.  The
decomposition into the labels $(j,\ell,m)$ is already intrinsic to the
trace/STF decomposition of each homogeneous Fock level.  What changes here
is only the coordinate representation of the same abstract states.

We work in three dimensions.  Recall from
Eqs.~\eqref{eq:maxwell_P_basis} and \eqref{eq:maxwell_ladder_P} that the
polynomial realization is
\begin{equation}
\mathcal R_P|\bm n\rangle
=
\frac{\operatorname{He}_{\bm n}(\bm\xi)}
{\sqrt{\bm n!}},
\qquad
\operatorname{He}_{\bm n}(\bm\xi)
=
\prod_{\alpha=1}^{3}
\operatorname{He}_{n_\alpha}(\xi_\alpha),
\label{eq:app_maxwell_RP_cartesian}
\end{equation}
and that
\begin{equation}
\mathcal R_P\hat a_\alpha^\dagger \mathcal R_P^{-1}
=
\xi_\alpha-\partial_{\xi_\alpha}.
\label{eq:app_maxwell_RP_creation}
\end{equation}
The Hermite-function realization is obtained afterwards from
\begin{equation}
\mathcal R_H=S_{\sqrt W}\mathcal R_P,
\label{eq:app_maxwell_RH_from_RP}
\end{equation}
so it is sufficient to derive the polynomial part first.

\subsection{The same Fock level in Cartesian Hermite coordinates}

The homogeneous Fock level of degree $r$ has the intrinsic rotational
decomposition
\begin{equation}
\mathscr H_r
=
\bigoplus_{\substack{j,\ell\geq0\\2j+\ell=r}}
\mathscr K_{j\ell}.
\end{equation}
As in Sec.~\ref{subsec:maxwell_linearized_spectrum}, let
$\mathcal Y_{\ell m}(\bm z)$ be a homogeneous solid harmonic,
\begin{equation}
\mathcal Y_{\ell m}(\lambda\bm z)
=
\lambda^\ell\mathcal Y_{\ell m}(\bm z),
\qquad
\Delta_{\bm z}\mathcal Y_{\ell m}(\bm z)=0.
\label{eq:app_maxwell_solid_harmonic}
\end{equation}
The corresponding homogeneous Bargmann polynomial is
\begin{equation}
\Phi_{j\ell m}(\bm z)
=
(\bm z^2)^j\mathcal Y_{\ell m}(\bm z),
\qquad
r=2j+\ell.
\label{eq:app_maxwell_Phi_jlm}
\end{equation}

For the present derivation, fix the irrelevant overall normalization by
choosing
\begin{equation}
\mathcal R_B|j,\ell,m\rangle
=
\Phi_{j\ell m}(\bm z).
\end{equation}
Since $\Phi_{j\ell m}$ is homogeneous of degree $r$, it has a Cartesian
monomial expansion
\begin{equation}
\Phi_{j\ell m}(\bm z)
=
\sum_{|\bm n|=r}
C_{\bm n}^{(j\ell m)}\,
\bm z^{\bm n},
\qquad
|\bm n|
=
n_1+n_2+n_3.
\label{eq:app_maxwell_cartesian_monomial_expansion}
\end{equation}
Using
\begin{equation}
\mathcal R_B|\bm n\rangle
=
\frac{\bm z^{\bm n}}{\sqrt{\bm n!}},
\end{equation}
the same abstract state can therefore be written as
\begin{equation}
|j,\ell,m\rangle
=
\sum_{|\bm n|=r}
C_{\bm n}^{(j\ell m)}
\sqrt{\bm n!}\,
|\bm n\rangle .
\label{eq:app_maxwell_jlm_cartesian_Fock}
\end{equation}
Applying $\mathcal R_P$ gives
\begin{equation}
\mathcal R_P|j,\ell,m\rangle
=
\sum_{|\bm n|=r}
C_{\bm n}^{(j\ell m)}
\operatorname{He}_{\bm n}(\bm\xi).
\label{eq:app_maxwell_jlm_cartesian_Hermite}
\end{equation}

Equation~\eqref{eq:app_maxwell_jlm_cartesian_Hermite} gives the direct
connection with the Cartesian Hermite representation introduced earlier:
the state $|j,\ell,m\rangle$ is a fixed linear combination of products of
Cartesian Hermite polynomials, all belonging to the same total Fock level
$r=2j+\ell$.  The remaining calculation derives a closed radial--angular
form for precisely this same linear combination.

\subsection{Solid harmonics under the Hermite polynomial realization}

From Eq.~\eqref{eq:app_maxwell_RP_creation}, for any polynomial
$P(\hat{\bm a}^\dagger)$,
\begin{equation}
\mathcal R_P
P(\hat{\bm a}^\dagger)|0\rangle
=
P(\bm\xi-\bm\nabla_{\bm\xi})\,1.
\label{eq:app_maxwell_polynomial_creation_image}
\end{equation}
It is useful first to evaluate the right-hand side in a form that makes
the role of harmonicity transparent.

Introduce an auxiliary vector $\bm t$.  Since
\begin{equation}
[
\bm t\cdot\bm\xi,
-\bm t\cdot\bm\nabla_{\bm\xi}
]
=
|\bm t|^2
\end{equation}
is a scalar, the Baker--Campbell--Hausdorff formula gives
\begin{align}
\exp\!\left[
\bm t\cdot
(\bm\xi-\bm\nabla_{\bm\xi})
\right]1
&=
\exp(\bm t\cdot\bm\xi)
\exp(-\bm t\cdot\bm\nabla_{\bm\xi})
\exp\!\left(-\frac{|\bm t|^2}{2}\right)1
\nonumber\\
&=
\exp\!\left(
\bm t\cdot\bm\xi-\frac{|\bm t|^2}{2}
\right).
\label{eq:app_maxwell_Hermite_generating}
\end{align}
On the other hand,
\begin{equation}
\exp\!\left(-\frac12\Delta_{\bm\xi}\right)
\exp(\bm t\cdot\bm\xi)
=
\exp\!\left(-\frac{|\bm t|^2}{2}\right)
\exp(\bm t\cdot\bm\xi).
\end{equation}
Comparison of the coefficients of powers of $\bm t$ yields the
multivariate Hermite, or Wick, identity
\begin{equation}
P(\bm\xi-\bm\nabla_{\bm\xi})\,1
=
\exp\!\left(-\frac12\Delta_{\bm\xi}\right)
P(\bm\xi).
\label{eq:app_maxwell_Wick_identity}
\end{equation}

For a solid harmonic, however,
\begin{equation}
\Delta_{\bm\xi}\mathcal Y_{\ell m}(\bm\xi)=0.
\end{equation}
Consequently every positive power of $\Delta_{\bm\xi}$ also annihilates
$\mathcal Y_{\ell m}$, and
Eq.~\eqref{eq:app_maxwell_Wick_identity} reduces to
\begin{equation}
\mathcal R_P\,
\mathcal Y_{\ell m}(\hat{\bm a}^\dagger)|0\rangle
=
\mathcal Y_{\ell m}(\bm\xi).
\label{eq:app_maxwell_solid_harmonic_RP}
\end{equation}

This identity is the coordinate counterpart of the STF property.  The
derivative contractions generated by the Hermite transformation correspond
to traces of the underlying symmetric tensor.  They vanish for the
harmonic, equivalently symmetric trace-free, component.

A homogeneous solid harmonic can be written in spherical coordinates as
\begin{equation}
\mathcal Y_{\ell m}(\bm\xi)
=
\mathcal N_{\ell m}\,
|\bm\xi|^\ell
Y_\ell^m(\widehat{\bm\xi}),
\qquad
\widehat{\bm\xi}
=
\frac{\bm\xi}{|\bm\xi|},
\label{eq:app_maxwell_solid_to_spherical}
\end{equation}
where $\mathcal N_{\ell m}$ depends only on the chosen normalization of the
solid harmonic.  Thus the spherical harmonic in the final coordinate
formula is already present in the solid harmonic defining the abstract
Fock state; it is not introduced as a separate basis transformation.

\subsection{Radial pair creation}

It remains to determine the coordinate action of the trace-pair creation
operator
\begin{equation}
\hat{\bm a}^\dagger\cdot\hat{\bm a}^\dagger
=
\sum_{\alpha=1}^{3}
(\hat a_\alpha^\dagger)^2.
\end{equation}
From Eq.~\eqref{eq:app_maxwell_RP_creation},
\begin{align}
\mathcal D
&:=
\mathcal R_P
\left(
\hat{\bm a}^\dagger\cdot\hat{\bm a}^\dagger
\right)
\mathcal R_P^{-1}
\nonumber\\
&=
(\bm\xi-\bm\nabla_{\bm\xi})^2.
\end{align}
Expanding the square requires some care because multiplication and
differentiation do not commute.  In three dimensions,
\begin{align}
(\bm\xi-\bm\nabla)^2
&=
\sum_{\alpha=1}^{3}
(\xi_\alpha-\partial_\alpha)
(\xi_\alpha-\partial_\alpha)
\nonumber\\
&=
|\bm\xi|^2
-
\sum_{\alpha}
\left(
\xi_\alpha\partial_\alpha
+
\partial_\alpha\xi_\alpha
\right)
+
\Delta_{\bm\xi}
\nonumber\\
&=
|\bm\xi|^2
-
2\bm\xi\cdot\bm\nabla_{\bm\xi}
-
3
+
\Delta_{\bm\xi}.
\end{align}
Hence
\begin{equation}
\mathcal D
=
|\bm\xi|^2
-
2\bm\xi\cdot\bm\nabla_{\bm\xi}
-
3
+
\Delta_{\bm\xi}.
\label{eq:app_maxwell_radial_pair_operator}
\end{equation}

To expose the radial action of this operator, set
\begin{equation}
x=\frac{|\bm\xi|^2}{2},
\qquad
\alpha=\ell+\frac12,
\label{eq:app_maxwell_radial_x_alpha}
\end{equation}
and consider a function of the form
\begin{equation}
F(\bm\xi)
=
f(x)\mathcal Y_{\ell m}(\bm\xi).
\label{eq:app_maxwell_radial_ansatz}
\end{equation}
Since
\begin{equation}
\frac{\partial x}{\partial\xi_\alpha}
=
\xi_\alpha,
\end{equation}
we have
\begin{equation}
\bm\nabla f(x)
=
f'(x)\bm\xi
\end{equation}
and
\begin{equation}
\Delta f(x)
=
|\bm\xi|^2f''(x)+3f'(x)
=
2x f''(x)+3f'(x).
\label{eq:app_maxwell_radial_scalar_laplacian}
\end{equation}

The homogeneity and harmonicity of $\mathcal Y_{\ell m}$ give
Euler's identity
\begin{equation}
\bm\xi\cdot\bm\nabla
\mathcal Y_{\ell m}
=
\ell\,\mathcal Y_{\ell m}
\label{eq:app_maxwell_Euler_solid}
\end{equation}
and
\begin{equation}
\Delta\mathcal Y_{\ell m}=0.
\end{equation}
Therefore
\begin{align}
\bm\xi\cdot\bm\nabla
\left[
f(x)\mathcal Y_{\ell m}
\right]
&=
\left[
2x f'(x)+\ell f(x)
\right]
\mathcal Y_{\ell m},
\label{eq:app_maxwell_radial_Euler}
\\
\Delta
\left[
f(x)\mathcal Y_{\ell m}
\right]
&=
\left[
\Delta f(x)
\right]\mathcal Y_{\ell m}
+
2\bm\nabla f(x)\cdot
\bm\nabla\mathcal Y_{\ell m}
+
f(x)\Delta\mathcal Y_{\ell m}
\nonumber\\
&=
\left[
2x f''(x)
+
3f'(x)
+
2\ell f'(x)
\right]
\mathcal Y_{\ell m}
\nonumber\\
&=
2
\left[
x f''(x)
+
\left(\ell+\frac32\right)f'(x)
\right]
\mathcal Y_{\ell m}.
\label{eq:app_maxwell_radial_laplacian}
\end{align}

Substitution of Eqs.~\eqref{eq:app_maxwell_radial_Euler} and
\eqref{eq:app_maxwell_radial_laplacian} into
Eq.~\eqref{eq:app_maxwell_radial_pair_operator} gives
\begin{align}
\mathcal D
\left[
f(x)\mathcal Y_{\ell m}
\right]
&=
2
\Bigg[
x f''(x)
+
\left(
\ell+\frac32-2x
\right)f'(x)
\nonumber\\
&\hspace{30mm}
+
\left(
x-\ell-\frac32
\right)f(x)
\Bigg]
\mathcal Y_{\ell m}.
\end{align}
With $\alpha=\ell+1/2$, this becomes
\begin{equation}
\mathcal D
\left[
f(x)\mathcal Y_{\ell m}
\right]
=
2
\left[
x f''
+
(\alpha+1-2x)f'
+
(x-\alpha-1)f
\right]
\mathcal Y_{\ell m}.
\label{eq:app_maxwell_radial_pair_reduced}
\end{equation}

Equation~\eqref{eq:app_maxwell_radial_pair_reduced} is the radial
differential operator whose repeated action generates the generalized
Laguerre polynomials.

\subsection{Identification with generalized Laguerre polynomials}

Let
\begin{equation}
f(x)=L_j^{(\alpha)}(x),
\end{equation}
where $L_j^{(\alpha)}$ is the generalized Laguerre polynomial.  We shall use
three standard identities.  First, the Laguerre differential equation is
\begin{equation}
x\frac{d^2L_j^{(\alpha)}}{dx^2}
+
(\alpha+1-x)
\frac{dL_j^{(\alpha)}}{dx}
+
jL_j^{(\alpha)}
=
0.
\label{eq:app_maxwell_Laguerre_ODE}
\end{equation}
For $j\geq1$, the derivative identity is
\begin{equation}
x\frac{dL_j^{(\alpha)}}{dx}
=
jL_j^{(\alpha)}
-
(j+\alpha)L_{j-1}^{(\alpha)},
\label{eq:app_maxwell_Laguerre_derivative}
\end{equation}
and the three-term recurrence is
\begin{equation}
(j+1)L_{j+1}^{(\alpha)}
=
(2j+\alpha+1-x)L_j^{(\alpha)}
-
(j+\alpha)L_{j-1}^{(\alpha)}.
\label{eq:app_maxwell_Laguerre_recurrence}
\end{equation}

We now substitute $f=L_j^{(\alpha)}$ into the square bracket in
Eq.~\eqref{eq:app_maxwell_radial_pair_reduced}.  Using the differential
equation \eqref{eq:app_maxwell_Laguerre_ODE},
\begin{align}
&
x L_j^{(\alpha)\prime\prime}
+
(\alpha+1-2x)L_j^{(\alpha)\prime}
+
(x-\alpha-1)L_j^{(\alpha)}
\nonumber\\
&=
-
(\alpha+1-x)L_j^{(\alpha)\prime}
-
jL_j^{(\alpha)}
+
(\alpha+1-2x)L_j^{(\alpha)\prime}
+
(x-\alpha-1)L_j^{(\alpha)}
\nonumber\\
&=
-xL_j^{(\alpha)\prime}
+
(x-\alpha-1-j)L_j^{(\alpha)}.
\label{eq:app_maxwell_Laguerre_step1}
\end{align}
For $j\geq1$, Eq.~\eqref{eq:app_maxwell_Laguerre_derivative} then gives
\begin{align}
&
-xL_j^{(\alpha)\prime}
+
(x-\alpha-1-j)L_j^{(\alpha)}
\nonumber\\
&=
-jL_j^{(\alpha)}
+
(j+\alpha)L_{j-1}^{(\alpha)}
+
(x-\alpha-1-j)L_j^{(\alpha)}
\nonumber\\
&=
(x-\alpha-1-2j)L_j^{(\alpha)}
+
(j+\alpha)L_{j-1}^{(\alpha)}.
\label{eq:app_maxwell_Laguerre_step2}
\end{align}
Finally, the recurrence
\eqref{eq:app_maxwell_Laguerre_recurrence} shows that
\begin{equation}
(x-\alpha-1-2j)L_j^{(\alpha)}
+
(j+\alpha)L_{j-1}^{(\alpha)}
=
-(j+1)L_{j+1}^{(\alpha)}.
\label{eq:app_maxwell_Laguerre_step3}
\end{equation}
Thus
\begin{equation}
\mathcal D
\left[
L_j^{(\alpha)}(x)
\mathcal Y_{\ell m}(\bm\xi)
\right]
=
-2(j+1)
L_{j+1}^{(\alpha)}(x)
\mathcal Y_{\ell m}(\bm\xi).
\label{eq:app_maxwell_Laguerre_raising}
\end{equation}
For $j=0$, the same result follows directly from
$L_0^{(\alpha)}=1$ and
\begin{equation}
L_1^{(\alpha)}(x)
=
-x+\alpha+1.
\end{equation}

Repeated application of
Eq.~\eqref{eq:app_maxwell_Laguerre_raising}, beginning with
$L_0^{(\alpha)}=1$, therefore yields
\begin{equation}
\mathcal D^j
\mathcal Y_{\ell m}(\bm\xi)
=
(-2)^j j!\,
L_j^{(\ell+1/2)}
\left(
\frac{|\bm\xi|^2}{2}
\right)
\mathcal Y_{\ell m}(\bm\xi).
\label{eq:app_maxwell_pair_to_Laguerre}
\end{equation}

This equation is the desired closed form for the repeated radial-pair
creation in Hermite polynomial coordinates.

\subsection{A low-order check}

The first nontrivial radial state gives a useful direct check of the
calculation.  Take $j=1$ and $\ell=0$.  In abstract Fock notation the state
is, up to normalization,
\begin{equation}
\left(
\hat{\bm a}^\dagger\cdot
\hat{\bm a}^\dagger
\right)|0\rangle
=
\sum_{\alpha=1}^{3}
(\hat a_\alpha^\dagger)^2|0\rangle .
\end{equation}
The Cartesian Hermite realization gives
\begin{align}
\mathcal R_P
\left(
\hat{\bm a}^\dagger\cdot
\hat{\bm a}^\dagger
\right)|0\rangle
&=
\sum_{\alpha=1}^{3}
\operatorname{He}_2(\xi_\alpha)
\nonumber\\
&=
\sum_{\alpha=1}^{3}
(\xi_\alpha^2-1)
\nonumber\\
&=
|\bm\xi|^2-3.
\label{eq:app_maxwell_radial_example_cartesian}
\end{align}
On the other hand,
\begin{equation}
\alpha=\frac12,
\qquad
x=\frac{|\bm\xi|^2}{2},
\end{equation}
and
\begin{equation}
L_1^{(1/2)}(x)
=
-x+\frac32.
\end{equation}
Therefore
\begin{equation}
-2L_1^{(1/2)}
\left(
\frac{|\bm\xi|^2}{2}
\right)
=
|\bm\xi|^2-3,
\label{eq:app_maxwell_radial_example_Laguerre}
\end{equation}
in exact agreement with
Eq.~\eqref{eq:app_maxwell_radial_example_cartesian}.

This elementary example exhibits explicitly what happens in the general
case: a sum of Cartesian Hermite products of fixed total Hermite order is
reorganized into a radial Laguerre polynomial multiplied by an irreducible
angular factor.

\subsection{Hermite-function coordinate representation}

We may now apply the result to the abstract state
\begin{equation}
|j,\ell,m\rangle
\propto
\left(
\hat{\bm a}^\dagger\cdot
\hat{\bm a}^\dagger
\right)^j
\mathcal Y_{\ell m}(\hat{\bm a}^\dagger)|0\rangle .
\end{equation}
Equations~\eqref{eq:app_maxwell_solid_harmonic_RP} and
\eqref{eq:app_maxwell_pair_to_Laguerre} give
\begin{equation}
\mathcal R_P|j,\ell,m\rangle
\propto
L_j^{(\ell+1/2)}
\left(
\frac{|\bm\xi|^2}{2}
\right)
\mathcal Y_{\ell m}(\bm\xi).
\label{eq:app_maxwell_RP_jlm_Laguerre_solid}
\end{equation}
Using Eq.~\eqref{eq:app_maxwell_solid_to_spherical},
\begin{equation}
\mathcal R_P|j,\ell,m\rangle
\propto
|\bm\xi|^\ell
L_j^{(\ell+1/2)}
\left(
\frac{|\bm\xi|^2}{2}
\right)
Y_\ell^m(\widehat{\bm\xi}).
\label{eq:app_maxwell_RP_jlm_Laguerre_spherical}
\end{equation}
Since only the overall normalization is relevant here, a factor
$2^{-\ell/2}$ may be absorbed into it, giving the equivalent form
\begin{equation}
\mathcal R_P|j,\ell,m\rangle
\propto
\left(
\frac{|\bm\xi|}{\sqrt2}
\right)^\ell
L_j^{(\ell+1/2)}
\left(
\frac{|\bm\xi|^2}{2}
\right)
Y_\ell^m(\widehat{\bm\xi}).
\label{eq:app_maxwell_RP_jlm_final}
\end{equation}

Finally, using
$\mathcal R_H=S_{\sqrt W}\mathcal R_P$,
\begin{equation}
\mathcal R_H|j,\ell,m\rangle
\propto
\sqrt W\,
\left(
\frac{|\bm\xi|}{\sqrt2}
\right)^\ell
L_j^{(\ell+1/2)}
\left(
\frac{|\bm\xi|^2}{2}
\right)
Y_\ell^m
\left(
\frac{\bm\xi}{|\bm\xi|}
\right).
\label{eq:app_maxwell_RH_jlm_Laguerre_spherical}
\end{equation}
This is the Laguerre--spherical expression quoted in
Eq.~\eqref{eq:maxwell_Fock_to_Laguerre_spherical}.

\subsection{Radial orthogonality and the meaning of the index \texorpdfstring{$\alpha=\ell+1/2$}{alpha=l+1/2}}

The same calculation also explains the parameter
$\alpha=\ell+1/2$ appearing in the generalized Laguerre polynomial.
The square of the Hermite-function coordinate contains the Gaussian factor
\begin{equation}
W\propto
\exp\!\left(-\frac{|\bm\xi|^2}{2}\right)
=
e^{-x}.
\end{equation}
In spherical velocity coordinates,
\begin{equation}
d^3\xi
=
|\bm\xi|^2\,d|\bm\xi|\,d\Omega.
\end{equation}
Since the solid harmonic contributes the radial factor
$|\bm\xi|^\ell$, its squared modulus contributes
$|\bm\xi|^{2\ell}$.  With
\begin{equation}
|\bm\xi|=\sqrt{2x},
\qquad
d|\bm\xi|
=
\frac{dx}{\sqrt{2x}},
\end{equation}
the radial part of the norm is therefore proportional to
\begin{align}
&
e^{-|\bm\xi|^2/2}
|\bm\xi|^{2\ell}
|\bm\xi|^2\,d|\bm\xi|
\nonumber\\
&\qquad\propto
e^{-x}
x^\ell x^{1/2}\,dx
\nonumber\\
&\qquad=
x^{\ell+1/2}e^{-x}\,dx
=
x^\alpha e^{-x}\,dx.
\end{align}
Thus the radial orthogonality is exactly the standard generalized
Laguerre orthogonality
\begin{equation}
\int_0^\infty
x^\alpha e^{-x}
L_j^{(\alpha)}(x)
L_{j'}^{(\alpha)}(x)\,dx
=
\frac{\Gamma(j+\alpha+1)}{j!}
\delta_{jj'}.
\label{eq:app_maxwell_Laguerre_orthogonality}
\end{equation}
The value $\alpha=\ell+1/2$ is therefore fixed simultaneously by the
three-dimensional radial measure and by the angular rank $\ell$.

\subsection{Interpretation}

Equations~\eqref{eq:app_maxwell_jlm_cartesian_Hermite} and
\eqref{eq:app_maxwell_RH_jlm_Laguerre_spherical} are two coordinate
descriptions of the same abstract Fock state.  The first displays it as a
linear combination of Cartesian Hermite products,
\begin{equation}
\operatorname{He}_{n_1}(\xi_1)
\operatorname{He}_{n_2}(\xi_2)
\operatorname{He}_{n_3}(\xi_3),
\qquad
n_1+n_2+n_3=2j+\ell,
\end{equation}
whereas the second displays the same state in radial--angular velocity
coordinates as
\begin{equation}
L_j^{(\ell+1/2)}
\left(
\frac{|\bm\xi|^2}{2}
\right)
|\bm\xi|^\ell
Y_\ell^m(\widehat{\bm\xi}).
\end{equation}

It is important that the latter polynomial contains ordinary powers of
$|\bm\xi|$ below the maximal degree $2j+\ell$.  These lower ordinary
polynomial degrees do not correspond to lower Fock levels.  They are the
trace, or Wick-contraction, terms inherent in the Hermite coordinate
realization.  Equation~\eqref{eq:app_maxwell_jlm_cartesian_Hermite} makes
this explicit: every Cartesian Hermite product entering the state has the
same total Hermite order
\begin{equation}
r=2j+\ell.
\end{equation}

Thus the chain of identifications is
\begin{equation}
\begin{aligned}
\mathscr H_r
&\xrightarrow{\ \mathrm{trace/STF}\ }
|j,\ell,m\rangle,
\\[1mm]
|j,\ell,m\rangle
&\xrightarrow{\ \mathcal R_P,\,\mathcal R_H\ }
\text{Laguerre--spherical Hermite coordinates}.
\end{aligned}
\end{equation}
The generalized Laguerre polynomial and the spherical harmonic therefore
arise from the coordinate realization of the intrinsic Fock state; they do
not define an additional or adapted Fock structure.


\begin{thebibliography}{99}
\bibitem{Grad1949Hermite}
Harold Grad. ``Note on N-Dimensional Hermite Polynomials.'' \emph{Communications on Pure and Applied Mathematics} \textbf{2}(4), 325--330 (1949). DOI: \url{https://doi.org/10.1002/cpa.3160020402}.

\bibitem{Grad1949KineticTheory}
Harold Grad. ``On the Kinetic Theory of Rarefied Gases.'' \emph{Communications on Pure and Applied Mathematics} \textbf{2}(4), 331--407 (1949). DOI: \url{https://doi.org/10.1002/cpa.3160020403}.

\bibitem{Maxwell1867DynamicalTheory}
James Clerk Maxwell. ``On the Dynamical Theory of Gases.'' \emph{Philosophical Transactions of the Royal Society of London} \textbf{157}, 49--88 (1867). DOI: \url{https://doi.org/10.1098/rstl.1867.0004}.

\bibitem{IkenberryTruesdell1956I}
E. Ikenberry and C. Truesdell. ``On the Pressures and the Flux of Energy in a Gas according to Maxwell's Kinetic Theory, I.'' \emph{Journal of Rational Mechanics and Analysis} \textbf{5}, 1--54 (1956). DOI: \url{https://doi.org/10.1512/iumj.1956.5.55001}.

\bibitem{Truesdell1956II}
C. Truesdell. ``On the Pressures and the Flux of Energy in a Gas according to Maxwell's Kinetic Theory, II.'' \emph{Journal of Rational Mechanics and Analysis} \textbf{5}, 55--128 (1956). DOI: \url{https://doi.org/10.1512/iumj.1956.5.55002}.

\bibitem{TruesdellMuncaster1980Maxwell}
C. Truesdell and R. G. Muncaster. \emph{Fundamentals of Maxwell's Kinetic Theory of a Simple Monatomic Gas: Treated as a Branch of Rational Mechanics}. Academic Press, New York (1980).

\bibitem{Bobylev1975Fourier}
A. V. Bobylev. ``The Method of the Fourier Transform in the Theory of the Boltzmann Equation for Maxwell Molecules.'' \emph{Doklady Akademii Nauk SSSR} \textbf{225}(5), 1041--1044 (1975). In Russian; English translation: Soviet Physics Doklady 20 (1975), 820--822.

\bibitem{Desvillettes2003FourierBoltzmann}
Laurent Desvillettes. ``About the Use of the Fourier Transform for the Boltzmann Equation.'' \emph{Rivista di Matematica della Universit\`a di Parma} \textbf{2*}, 1--99 (2003).

\bibitem{Bobylev1988MaxwellReview}
A. V. Bobylev. ``The Theory of the Nonlinear Spatially Uniform Boltzmann Equation for Maxwell Molecules.'' \emph{Soviet Scientific Reviews C: Mathematical Physics} \textbf{7}, 111--233 (1988).

\bibitem{Bobylev1984ExactRelaxation}
A. V. Bobylev. ``Exact Solutions of the Nonlinear Boltzmann Equation and the Theory of Relaxation of a Maxwellian Gas.'' \emph{Theoretical and Mathematical Physics} \textbf{60}(2), 820--841 (1984). DOI: \url{https://doi.org/10.1007/BF01018983}.

\bibitem{Ernst1979MaxwellMoments}
M. H. Ernst. ``Exact Solution of the Non-Linear Boltzmann Equation for Maxwell Models.'' \emph{Physics Letters A} \textbf{69}(6), 390--392 (1979). DOI: \url{https://doi.org/10.1016/0375-9601(79)90385-2}.

\bibitem{GlangetasLiXu2016Triangular}
L{\'e}o Glangetas, Hao-Guang Li, and Chao-Jiang Xu. ``Sharp Regularity Properties for the Non-Cutoff Spatially Homogeneous Boltzmann Equation.'' \emph{Kinetic and Related Models} \textbf{9}(2), 299--371 (2016). DOI: \url{https://doi.org/10.3934/krm.2016.9.299}.

\bibitem{WangChangUhlenbeck1952}
C. S. Wang Chang and G. E. Uhlenbeck. ``On the Propagation of Sound in Monatomic Gases.'' Technical report Project M999, Engineering Research Institute, University of Michigan, Ann Arbor, Michigan (1952). Technical report; Handle 2027.42/4098; later reprinted in Studies in Statistical Mechanics, Vol. V.

\bibitem{WangChangUhlenbeck1970}
C. S. Wang Chang and G. E. Uhlenbeck. ``The Kinetic Theory of Gases.'' In \emph{Studies in Statistical Mechanics}, edited by J. de Boer and G. E. Uhlenbeck, Vol. 5, pp. 43--75, North-Holland, Amsterdam (1970). Reprint of the 1952 University of Michigan report.

\bibitem{AltermanFrankowskiPekeris1962}
Z. Alterman, K. Frankowski, and C. L. Pekeris. ``Eigenvalues and Eigenfunctions of the Linearized Boltzmann Collision Operator for a Maxwell Gas and for a Gas of Rigid Spheres.'' \emph{Astrophysical Journal Supplement Series} \textbf{7}, 291--331 (1962).

\bibitem{LernerMorimotoPravdaStarovXu2013}
Nicolas Lerner, Yoshinori Morimoto, Karel Pravda-Starov, and Chao-Jiang Xu. ``Phase Space Analysis and Functional Calculus for the Linearized Landau and Boltzmann Operators.'' \emph{Kinetic and Related Models} \textbf{6}(3), 625--648 (2013). DOI: \url{https://doi.org/10.3934/krm.2013.6.625}.

\bibitem{Grad1965HTheorem}
Harold Grad. ``On Boltzmann's H-Theorem.'' \emph{Journal of the Society for Industrial and Applied Mathematics} \textbf{13}(1), 259--277 (1965). DOI: \url{https://doi.org/10.1137/0113016}.

\bibitem{CercignaniIllnerPulvirenti1994}
Carlo Cercignani, Reinhard Illner, and Mario Pulvirenti. \emph{The Mathematical Theory of Dilute Gases}. Springer, New York (1994), Applied Mathematical Sciences 106.

\bibitem{Tang1993Hermite}
Tao Tang. ``The Hermite Spectral Method for Gaussian-Type Functions.'' \emph{SIAM Journal on Scientific Computing} \textbf{14}(3), 594--606 (1993). DOI: \url{https://doi.org/10.1137/0914038}.

\bibitem{SarnaGiesselmannTorrilhon2020}
Neeraj Sarna, Jan Giesselmann, and Manuel Torrilhon. ``Convergence Analysis of Grad's Hermite Expansion for Linear Kinetic Equations.'' \emph{SIAM Journal on Numerical Analysis} \textbf{58}(2), 1164--1194 (2020). DOI: \url{https://doi.org/10.1137/19M1270884}.

\bibitem{CannoneKarch2010}
Marco Cannone and Grzegorz Karch. ``Infinite Energy Solutions to the Homogeneous Boltzmann Equation.'' \emph{Communications on Pure and Applied Mathematics} \textbf{63}(6), 747--778 (2010). DOI: \url{https://doi.org/10.1002/cpa.20298}.

\bibitem{ChoMorimotoWangYang2016}
Yong-Kum Cho, Yoshinori Morimoto, Shuaikun Wang, and Tong Yang. ``Probability Measures with Finite Moments and the Homogeneous Boltzmann Equation.'' \emph{SIAM Journal on Mathematical Analysis} \textbf{48}(4), 2399--2413 (2016). DOI: \url{https://doi.org/10.1137/15M105104X}.

\bibitem{CaiTorrilhon2015HardSphereLinearized}
Zhenning Cai and Manuel Torrilhon. ``Approximation of the Linearized Boltzmann Collision Operator for Hard-Sphere and Inverse-Power-Law Models.'' \emph{Journal of Computational Physics} \textbf{295}, 617--643 (2015). DOI: \url{https://doi.org/10.1016/j.jcp.2015.04.031}.

\bibitem{WangCai2019HermiteBoltzmann}
Yanli Wang and Zhenning Cai. ``Approximation of the Boltzmann Collision Operator Based on Hermite Spectral Method.'' \emph{Journal of Computational Physics} \textbf{397}, 108815 (2019). DOI: \url{https://doi.org/10.1016/j.jcp.2019.07.014}.

\bibitem{HiemstraKesslerAbdelmalik2026WignerEckart}
Ren{\'e} R. Hiemstra, Torsten Ke{\ss}ler, and Michael R. A. Abdelmalik. ``Wigner--Eckart Factorization of the Spectral Boltzmann Collision Operator.'' arXiv:2605.28475 (2026). Preprint. DOI: \url{https://doi.org/10.48550/arXiv.2605.28475}.

\bibitem{BobylevToscani1992GeneralizedH}
A. V. Bobylev and Giuseppe Toscani. ``On the Generalization of the Boltzmann H-Theorem for a Spatially Homogeneous Maxwell Gas.'' \emph{Journal of Mathematical Physics} \textbf{33}(7), 2578--2586 (1992). DOI: \url{https://doi.org/10.1063/1.529578}.

\bibitem{PulvirentiToscani1996Fourier}
Ada Pulvirenti and Giuseppe Toscani. ``The Theory of the Nonlinear Boltzmann Equation for Maxwell Molecules in Fourier Representation.'' \emph{Annali di Matematica Pura ed Applicata} \textbf{171}, 181--204 (1996). DOI: \url{https://doi.org/10.1007/BF01759387}.

\bibitem{Tanaka1978Probabilistic}
Hiroshi Tanaka. ``Probabilistic Treatment of the Boltzmann Equation of Maxwellian Molecules.'' \emph{Zeitschrift f{\"u}r Wahrscheinlichkeitstheorie und Verwandte Gebiete} \textbf{46}(1), 67--105 (1978). DOI: \url{https://doi.org/10.1007/BF00535689}.

\bibitem{Cook1951SecondQuantization}
J. M. Cook. ``The Mathematics of Second Quantization.'' \emph{Proceedings of the National Academy of Sciences of the United States of America} \textbf{37}(7), 417--420 (1951). DOI: \url{https://doi.org/10.1073/pnas.37.7.417}.

\bibitem{Segal1956TensorAlgebras}
Irving E. Segal. ``Tensor Algebras over Hilbert Spaces. I.'' \emph{Transactions of the American Mathematical Society} \textbf{81}, 106--134 (1956). DOI: \url{https://doi.org/10.1090/S0002-9947-1956-0076317-8}.

\bibitem{Bargmann1961HilbertAnalytic}
V. Bargmann. ``On a Hilbert Space of Analytic Functions and an Associated Integral Transform. Part I.'' \emph{Communications on Pure and Applied Mathematics} \textbf{14}(3), 187--214 (1961). DOI: \url{https://doi.org/10.1002/cpa.3160140303}.

\bibitem{Karlin2026LFHFock}
Ilya Karlin. ``Fock-Space Representation of the Lebowitz--Frisch--Helfand Kinetic Model.'' arXiv:2608.25833 (2026). \url{https://arxiv.org/abs/2608.25833}.

\bibitem{MouhotPareschi2006FastBoltzmann}
Cl{\'e}ment Mouhot and Lorenzo Pareschi. ``Fast Algorithms for Computing the Boltzmann Collision Operator.'' \emph{Mathematics of Computation} \textbf{75}(256), 1833--1852 (2006). DOI: \url{https://doi.org/10.1090/S0025-5718-06-01874-6}.

\bibitem{Fock1932}
V. A. Fock. ``Konfigurationsraum und zweite Quantelung.'' \emph{Zeitschrift f{\"u}r Physik} \textbf{75}, 622--647 (1932). DOI: \url{https://doi.org/10.1007/BF01344458}.

\bibitem{Berezin1966}
F. A. Berezin. \emph{The Method of Second Quantization}. Academic Press, New York (1966).

\bibitem{Hall2013}
Brian C. Hall. \emph{Quantum Theory for Mathematicians}. Springer, New York (2013), Graduate Texts in Mathematics 267. DOI: \url{https://doi.org/10.1007/978-1-4614-7116-5}.

\end{thebibliography}
\end{document}